\newcommand{\xmark}{\text{\ding{55}}}%
\newcommand\remove[1]{}
\newtheorem{theorem}{Theorem}
\newtheorem{lemma}{Lemma}[section]
\newtheorem*{lemma*}{Lemma}
\newtheorem{corollary}[lemma]{Corollary}
\newtheorem*{corollary*}{Corollary}
\newtheorem{claim}[lemma]{Claim}
\theoremstyle{definition}
\newtheorem*{theorem*}{Theorem}
\newtheorem*{rem*}{Remark}
\newcommand\E{\mathbb{E}}
\newcommand{\eps}{\varepsilon}
\renewcommand{\O}{\widetilde{O}}
\newcommand{\Seq}{\textsc{Seq}}
\newcommand{\pe}{\preceq}
\renewcommand{\l}{\langle}
\renewcommand{\r}{\rangle}
\newcommand{\bs}{\backslash}
\newcommand{\hm}{\hat{m}}
\newcommand{\assign}{\leftarrow}
\newcommand{\br}{\bar{r}}
\newcommand{\head}{\mathrm{head}}
\newcommand{\tail}{\mathrm{tail}}
\newcommand{\ParallelSC}{\textsc{ParallelSC}}
\newcommand{\ParallelDiam}{\textsc{ParallelDiam}}
\newcommand{\DistrReach}{\textsc{DistrReach}}
\newcommand{\kd}{\kappa D}
\newcommand{\otilde}{\O}
\newcommand{\Anc}{\mathrm{Anc}}
\newcommand{\Des}{\mathrm{Des}}
\newcommand{\RA}{R^{\Anc}}
\newcommand{\RD}{R^{\Des}}
\newcommand{\vA}{v^{\Anc}}
\newcommand{\vD}{v^{\Des}}
\newcommand{\fringe}{\mathrm{fringe}}
\newcommand{\ring}{{\mathrm{ring}}}
\newcommand{\congest}{\textsf{CONGEST}}
\renewcommand{\forall}{\mathrm{\text{ for all }}}
\renewcommand{\top}{\mathrm{top}}
\renewcommand{\bot}{\mathrm{bot}}
\newcommand{\congestion}{{\sf congestion}}
\newcommand{\dilation}{{\sf dilation}}
\newcommand{\one}{\mathbbm{1}}
\newif\ifrandom
\newcommand{\defeq}{\stackrel{\mathrm{\scriptscriptstyle def}}{=}}
\newcommand{\poly}{{\rm poly}}
\newcommand{\todolater}[1]{}
\author{Arun Jambulapati \\
Stanford University	\\
\texttt{jmblpati@stanford.edu}
	 \and 
Yang P. Liu \\
Stanford University \\
\texttt{yangpatil@gmail.com}
\thanks{Research supported by the U.S.
Department of Defense via an NDSEG fellowship.} 
\and 
Aaron Sidford \\
Stanford University \\
\texttt{sidford@stanford.edu}
\thanks{Research supported by NSF CAREER Award CCF-1844855.}
}
\begin{document}

\title{Parallel~Reachability~in~Almost~Linear~Work~and~Square~Root~Depth}


\begin{titlepage}
\clearpage\maketitle
\thispagestyle{empty}

\begin{abstract}
	
In this paper we provide a parallel algorithm that given any $n$-node $m$-edge directed graph and source vertex $s$ computes all vertices reachable from $s$ with $\otilde(m)$ work and $n^{1/2 + o(1)}$ depth with high probability in $n$. This algorithm also computes a set of $\otilde(n)$ edges which when added to the graph preserves reachability and ensures that the diameter of the resulting graph is at most $n^{1/2 + o(1)}$. Our result improves upon the previous best known almost linear work reachability algorithm due to Fineman \cite{Fine18} which had depth $\O(n^{2/3})$. 

Further, we show how to leverage this algorithm to achieve improved distributed algorithms for single source reachability in the \congest~model. In particular, we provide a distributed algorithm that given a $n$-node digraph of undirected hop-diameter $D$ solves the single source reachability problem with $\otilde(n^{1/2} + n^{1/3 + o(1)} D^{2/3})$ rounds of the communication in the \congest~model with high probability in $n$. Our algorithm is nearly optimal whenever $D = O(n^{1/4 - \epsilon})$ for any constant $\epsilon > 0$ and is the first nearly optimal algorithm for general graphs whose diameter is $\Omega(n^\delta)$ for any constant $\delta$. 
 
 \todolater{TODO: Many of the theorems are ``expectation with high probability", think about to write this in a readable way especially in \cref{sec:parallel}}
\end{abstract}

\end{titlepage}

\newpage

\section{Introduction}

Given a $n$-vertex $m$-edge \emph{directed graph} or \emph{digraph} $G = (V,E)$ and a vertex $s \in V$ the \emph{single source reachability problem} asks for the set of vertices $T \subseteq V$ reachable from $s$, i.e. the vertices $t \in V$ for which there is a $s$ to $t$ path in $G$. This problem is perhaps one of the simplest graph optimization problems. It easily solvable in linear, $O(n + m)$, time by any of a number of classic graph exploration algorithms, e.g. breadth first search (BFS), depth first search (DFS), etc. and is often one of the first graph problems considered in an introductory algorithms course. The reachability problem is prevalent in theory and practice and algorithms for solving it are  leveraged to solve more complex graph optimization problems, including computing strongly connected components, shortest paths, maximum flow, spanning arborescenses, etc. 

Given the fundamental nature of the reachability problem and the utility of reachability algorithms, the reachability problem is often one of the first considered when investigating resource constrained computation. However, despite the simplicity of solving single source reachability with optimal time complexity, obtaining optimal algorithms for this problem under constraints of parallelism \cite{Spencer1997,UY91,Fine18}, distributed computation \cite{Nan14,GU15}, space utilization \cite{Wigderson92,BarnesBRS98}, and dynamic updates \cite{AbboudW14} are all notoriously difficult. In many cases, reachability lies at the heart of well-known long-standing open problems in complexity theory. For example, reachability is known to be complete for non-deterministic log space (NL) computation \cite{Wigderson92} and obtaining sufficiently efficient dynamic reachability algorithms would break the popular  3-SUM conjecture and refute the strong exponential time hypothesis (SETH) \cite{AbboudW14}. 

In the parallel and distributing models of computation, single-source reachability is itself a fundamental barrier towards achieving efficient graph optimization.  Despite extensive study, until a recent breakthrough of Fineman \cite{Fine18} the best parallel reachability algorithm all required trading off depth versus work and all known algorithms that had linear work had the trivial $O(n)$ depth. In distributed computation, for example the popular \congest-model  \cite{Pel00}, though there have been  algorithmic improvements over the trivial $O(n)$ round protocol \cite{Nan14,GU15}, the best known algorithms are polynomial factors larger then the known $\Omega(D + \sqrt{n})$-round lower bound (where here $D$ denotes the undirected hop-diameter of the graph) \cite{DHK11}.
 
Given these complexity theoretic barriers related to reachability and the prevalance of parallel and distributed models of computation, improved parallel and distributed reachability algorithms are highly coveted.  In this paper we provide improved reachability algorithms under each computational model. The main results of this paper are as follows.

\begin{theorem}[Parallel Reachability]
\label{thm:main_parallel}
\label{thm:parallel_informal}
There is a parallel algorithm that given a $n$-node $m$-edge digraph solves the single source reachability problem with work $\O(m)$  and depth $n^{1/2 + o(1)}$ with high probability in $n$.
\end{theorem}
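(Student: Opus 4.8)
The plan is to split the theorem into a \emph{shortcutting} subproblem and a \emph{search} subproblem, and to put essentially all of the effort into the former. Precisely, I would first establish the statement implicit in the abstract's second sentence: given any $n$-node $m$-edge digraph $G$, one can compute --- in $\O(m)$ work and $n^{1/2+o(1)}$ depth, with high probability --- a set $E^+$ of $\O(n)$ ``shortcut'' edges, each $(u,w)\in E^+$ satisfying $u\leadsto w$ in $G$, such that $G\cup E^+$ has hop-diameter $n^{1/2+o(1)}$. The theorem follows immediately from such an $E^+$: run a layered (round-synchronous) parallel BFS from $s$ in $G\cup E^+$; it halts after a number of rounds equal to the eccentricity of $s$, hence at most the diameter $n^{1/2+o(1)}$, and each round processes each edge $O(1)$ times, for $\O(m+n)$ total work. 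So the entire content is the construction of $E^+$.

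For the construction I would use the random hitting-set paradigm. The atomic operation is the \emph{pivot star}: for a vertex $v$, compute $\Anc(v)$ and $\Des(v)$ by a backward and a forward BFS and add the edges $\{(u,v):u\in\Anc(v)\}\cup\{(v,w):w\in\Des(v)\}$; afterwards every path through $v$ has a $2$-hop replacement, and in particular the whole strongly connected component of $v$ collapses to a $2$-hop clique, so SCCs need no separate treatment. A single star can cost $\Theta(n)$ edges, so we cannot run it at many pivots directly; instead, fix a target diameter $d$, sample a set $R$ of $\widetilde{\Theta}(n/d)$ vertices uniformly at random, and note (Chernoff plus a union bound over the $O(n^2)$ shortest paths) that with high probability every path with at least $d$ edges contains a vertex of $R$. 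If we then (i) equip every vertex with a few-hop route to (and from) each ``locally relevant'' member of $R$, and (ii) make the reachability relation \emph{induced on} $R$ have small diameter, then any $u\leadsto w$ decomposes as a prefix $u\to r_1$ of at most $d$ edges (shortened by (i)), a middle part $r_1\to r_2$ (shortened by (ii)), and a suffix $r_2\to w$ of at most $d$ edges (shortened by (i)), while pairs at distance below $d$ are already short; hence the diameter drops to roughly $d+(\text{diameter within }R)$. Item (ii) is a strictly smaller instance, since $|R|=\widetilde{O}(n/d)$, so it is natural to recurse on it; for $d\ge\sqrt n$ even the crudest all-pairs realization of (ii) uses $\widetilde{O}(|R|^2)\le\widetilde{O}(n)$ edges, and across recursion levels the hitting-set sizes telescope, keeping the total edge count $\O(n)$.

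The crux --- and the step I expect to dominate the difficulty --- is realizing (i) and (ii) within the $\O(m)$-work, $n^{1/2+o(1)}$-depth budget; two tensions must be handled at once. First, the reachability information needed for (i) and for the recursive call of (ii) is obtained by BFS-type exploration whose naive depth is the \emph{current} diameter, which can be $\Theta(n)$ at the top of the recursion; so the recursion must be structured to drive the diameter down gradually, with enough levels (and a correspondingly slow per-level reduction, e.g.\ $d\mapsto d^{1-\eps}$ with $\eps\approx 1/\sqrt{\log n}$) that no single exploration ever runs deeper than $n^{1/2+o(1)}$ --- this is the source of the $n^{o(1)}$ in the bound. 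Second, the recursion for (ii) still needs $G$'s reachability (a random $R$ has no induced edges; one must compute the reachability closure on $R$), so a careless recursion multiplies work by the number of levels; the hitting-set size ($\approx\sqrt n$ at the top), the per-level reduction rate, and the branching must be balanced so that each level costs $\O(m)$ work, $n^{1/2+o(1)}$ depth, and $\O(n)$ new edges. Moreover, item (i) is not literally achievable in one shot --- a single vertex may need routes to many members of $R$, costing $\omega(n)$ edges --- so it must itself be recursed in Fineman's style: split around a sampled pivot $v$ into $\Anc(v)$, $\Des(v)$ and the incomparable remainder, pass only the surviving samples $\RA\subseteq R\cap\Anc(v)$ and $\RD\subseteq R\cap\Des(v)$ to the sub-instances, and argue --- via a charging/potential argument, needed because a sampled pivot can be wildly unbalanced (e.g.\ a source) --- that the recursion terminates within the claimed resources. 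A union bound over all (polynomially many) randomized choices the algorithm makes then yields the ``with high probability in $n$'' guarantee.
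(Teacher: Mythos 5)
Your reduction of the theorem to ``compute an $\O(n)$-edge hopset that brings the diameter to $n^{1/2+o(1)}$, then run parallel BFS'' is exactly the paper's first move, and your observation that a pivot star collapses the pivot's SCC is also used. But the construction you then outline is not the paper's, and it has a genuine circularity. You separate the task into (i) short routes from every vertex to its ``locally relevant'' members of a random hitting set $R$, and (ii) shrinking the diameter of the reachability relation \emph{induced on} $R$. To build the skeleton graph in (ii) you must already know, for $r_1,r_2\in R$, whether $r_1\leadsto r_2$ in $G$ --- which is the reachability problem you are trying to solve. Computing it by a BFS from each $r\in R$ costs $\Theta(|R|\cdot m)=\widetilde{\Theta}(m\sqrt n)$ work (this is essentially the Ullman--Yannakakis tradeoff, which the paper explicitly lists as \emph{not} work-efficient), so the ``crudest all-pairs realization'' does not fit the $\O(m)$ budget, and you give no other way to get those edges. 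Your items (i) and (ii) are not independent sub-instances that can be recursed on separately; this is precisely the structural problem the paper's single interleaved recursion is designed to avoid.

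The deeper gap is that you invoke ``Fineman's style'' recursion for (i) without identifying what must change to beat Fineman's $\O(n^{2/3})$ depth. Fineman's recursion with one pivot per level provably tops out around $n^{2/3}$; to reach $n^{1/2+o(1)}$ the paper makes a specific modification: at recursion depth $r$ it samples roughly $k^{r+1}$ shortcutters (not one), partitions the surviving vertices by their \emph{entire} vector of ancestor/descendant labels across all sampled shortcutters, and proves a new potential lemma --- if $t$ path-relevant shortcutters are drawn, the expected sum of ``path-related'' node counts over the induced subproblems drops to at most $\tfrac{2}{t+1}\,s(P,G)$ (\cref{seq:2k+1lemma}) --- which, fed through Jensen and Cauchy--Schwarz, yields the inductive bound $\E[\text{shortcut length}]\le (4\sqrt2)^{\log_k n}\,s(P,G)^{1/2}$ (\cref{seq:diambound}); taking $k=\Theta(\log n)$ gives the $n^{o(1)}$ loss. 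Your proposal never states this $\tfrac{2}{t+1}$-type contraction or the square-root potential, and would at best reproduce $n^{2/3}$. Finally, your depth-control mechanism ($d\mapsto d^{1-\eps}$ gradual shrinkage) is also not what the paper does: the paper fixes $D\approx n^{1/2+o(1)}$ up front, caps every BFS at radius $\kappa D$ for a randomly chosen $\kappa=\poly(k,\log n)$, handles the vertices near the search boundary by a separate ``fringe'' recursion (with its own expected-size bound that keeps total work $\O(mk)$), and iterates the whole shortcutting pass $O(\log^2 n)$ times so that every path of length $D$ shrinks by a constant factor w.h.p.; the $n^{o(1)}$ comes from the $(4\sqrt2)^{\log_k n}$-type factor in the potential argument, not from a slowly decaying radius schedule.
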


\begin{theorem}[Distributed Reachability]
\label{thm:main_distributed}
\label{thm:congest}
\label{thm:distributed_informal}
There is a distributed algorithm that given a $n$-node digraph of undirected hop-diameter $D$ solves the single source reachability problem with $\O(n^{1/2} + n^{1/3 + o(1)} D^{2/3})$ rounds of the communication in the \congest~model with high probability in $n$.
\end{theorem}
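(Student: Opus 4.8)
The plan is to obtain the distributed algorithm \DistrReach{} by implementing the shortcut / diameter-reduction machinery behind \cref{thm:main_parallel} in the \congest{} model, turning ``depth'' into ``rounds'' at the price of an additive $\O(D)$-type overhead for every globally coordinated phase. Fix a hop parameter $\kappa$, to be chosen at the end. The first step is the hitting-set reduction used in \cite{UY91} (and inside \cref{thm:main_parallel}): sample a set $S$ of $\O(n/\kappa)$ vertices, add $s$ to it, and note that w.h.p.\ every directed path of $G$ with more than $\kappa$ edges meets $S$. It therefore suffices to (i) decide which vertices of $S$ are reachable from $s$ in the \emph{skeleton} digraph $H$ on vertex set $S$, where $(u,v)\in E(H)$ exactly when $v$ is reachable from $u$ within $\kappa$ hops in $G$, and then (ii) run a single $\kappa$-hop forward exploration simultaneously from $s$ and from every vertex of $S$ found reachable; by the covering property this reveals precisely the reachable set $T$.

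The workhorse in both (i) and (ii) is \emph{bounded-hop multi-source directed reachability}: given a source set $Q$, have every vertex learn (a bounded summary of) which vertices of $Q$ reach it within $\kappa$ hops. I would implement this as one layered directed BFS run from all of $Q$ at once, controlling congestion by a random-offset / scheduling argument in the spirit of \cite{Nan14,GU15}: since $Q$ is a uniform random sample, a suitable schedule makes each edge relay only $n^{o(1)}$ distinct tokens per round w.h.p., so a single such computation costs $\O(\kappa + |Q| + D)$ rounds ($\kappa$ for the BFS depth, $|Q|$ for the worst-case number of tokens crossing a cut, $D$ for announcing phase boundaries over a BFS tree of the underlying undirected graph). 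Using this to build all of $H$, and likewise to run every $\Anc$/$\Des$-type subroutine that the shortcut construction \ParallelSC{} of \cref{thm:main_parallel} invokes, costs $\O(\kappa + n/\kappa + D)$ rounds per phase; since that construction uses only $n^{o(1)}$ phases, the same bound (up to an $n^{o(1)}$ factor) suffices to produce all of its shortcut edges.

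The remaining issue is solving reachability \emph{within} the skeleton $H$: it has only $\O(n/\kappa)$ vertices, but its diameter may be $\Theta(n/\kappa)$, and its edges are \emph{virtual} --- each representing a path of up to $\kappa$ edges of $G$ --- so one round of frontier propagation in $H$ is not free in \congest. To cope I would run \ParallelSC{} recursively on $H$, reducing the diameter of its reachability closure to $(n/\kappa)^{1/2+o(1)}$, after which a frontier-by-frontier search in $H$ halts in $(n/\kappa)^{1/2+o(1)}$ super-steps, each realized in $\O(D)$ \congest{} rounds by broadcasting the newly reached hubs over the undirected BFS tree (again invoking the primitive above to decide which edges of $H$ to follow). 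Summing the contributions, \DistrReach{} runs in $n^{o(1)}\cdot\O\bigl(\kappa + n/\kappa + (n/\kappa)^{1/2}D + D\bigr)$ rounds, and choosing $\kappa = \Theta\bigl(\max\{n^{1/2},\, n^{1/3}D^{2/3}\}\bigr)$ balances the $\kappa$ term against $(n/\kappa)^{1/2}D$ to yield the claimed $\O(n^{1/2} + n^{1/3+o(1)}D^{2/3})$; the $n^{1/2}$ term survives because $|S|$, hence the skeleton, cannot be shrunk below $n^{o(1)}$ without re-sampling. The step I expect to be genuinely delicate is the congestion analysis of the multi-source bounded-hop search when it is run over the \emph{virtual} skeleton rather than over $G$ itself: there each ``edge'' has dilation up to $\kappa$ and many tokens may contend for the same underlying $G$-path, so the random-offset argument has to be combined with a low-congestion routing of the virtual edges, and making the $D$-overhead enter only additively --- not multiplied by $\kappa$ or by the skeleton diameter --- is precisely what forces the parameter split above. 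Correctness of (i)+(ii) is a routine union bound over the hitting-set events, so essentially all of the effort is in these round-complexity estimates.
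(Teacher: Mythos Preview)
Your proposal matches the paper's approach at the top level: sample a hitting set $T$ of size $\O(\alpha)$ (your $|S|=n/\kappa$), run $\O(n/\alpha)$-hop multi-source BFS in $G$ to materialize the skeleton $G_T$, apply the parallel diameter-reduction machinery to $G_T$, then do a shallow BFS on the shortcut skeleton and broadcast the result. Your final round count $\O(\kappa+n/\kappa+(n/\kappa)^{1/2}D)$ and your parameter choice are exactly the paper's (with $\alpha=n/\kappa$), including the two-case split $D\lessgtr n^{1/4}$.

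Where you diverge is in the implementation of \ParallelSC{}/\ParallelDiam{} on the skeleton. You anticipate a ``genuinely delicate'' congestion analysis because the skeleton's edges are virtual $\kappa$-hop paths in $G$, and you propose combining random offsets with a low-congestion embedding of those paths. The paper sidesteps this entirely: once the skeleton is built, it never routes along the underlying $G$-paths again. Instead, every unit of parallel depth of \ParallelDiam{} on $G_T$ is simulated by having each skeleton vertex \emph{globally broadcast} its new state (shortcutter membership, BFS-visited bit, label, new shortcut edge) over a BFS tree of the underlying undirected graph, using the standard pipelined broadcast primitive (\cref{lemma:broadcast}). The total number of such messages across the whole execution is at most the work of \ParallelDiam{} on an $\O(\alpha)$-vertex graph, i.e.\ $\O(\alpha)$, and the number of broadcast phases is the depth, $\alpha^{1/2+o(1)}$; hence the whole simulation costs $\O(D\cdot\alpha^{1/2+o(1)}+\alpha)$ rounds with no routing or congestion argument at all. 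So the step you flag as the hard one is in fact trivial once you use broadcast rather than path-routing, and your earlier remark that \ParallelSC{} has ``only $n^{o(1)}$ phases'' should be replaced by the correct accounting: $\alpha^{1/2+o(1)}$ depth steps, each costing $O(D)$ plus its share of the $\O(\alpha)$ total messages.
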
 

\Cref{thm:parallel_informal} improves the previous best $\otilde(n^{2/3})$ depth bound achieved by a parallel nearly linear work algorithm due to Fineman \cite{Fine18}. \Cref{thm:distributed_informal}  is nearly optimal whenever $D = O(n^{1/4-\eps})$ for some $\eps > 0$ due to a known $\tilde{\Omega}(\sqrt{n} + D)$ lower bound \cite{DHK11} and is the first nearly optimal algorithm for general directed graphs where $D = \Omega(n^\delta)$ for for constant $\delta > 0$. (See \Cref{sec:related_work} for a more detailed comparison to and discussion of previous work.)

Our results build upon a recent breakthrough result of Fineman \cite{Fine18} and a simple, yet-powerful decompositional tool regarding reachability known as \emph{hopsets} or \emph{shortcuts}. Shortcuts are edges which if added to the graph, do not change which pairs of vertices can reach each other. It is well known that in a graph of diameter $D$, i.e. largest shortest-path distance between a pair of vertices which can reach each other is $D$, we can compute reachability in work $O(m)$ and depth $O(D)$. Consequently, a natural approach towards improved reachability algorithms would simply be to find a small set of shortcuts in nearly linear time which decrease the diameter of a graph.

Computing shortcutters is a tantalizing approach to improved reachability algorithms. A simple folklore random sampling argument can be used to show that for every $n$-node digraph and every parameter $t$ there exists a set of $O(t^2 \log^2 n)$ shortcuts such that adding them makes the diameter of the graph is at most $O(n/t)$. Consequently, there is a nearly linear, $O(n \log^2 n)$, number of edges which would reduce the diameter to $O(\sqrt{n})$. Finding such a set of shortcuts in nearly linear work and $O(\sqrt{n})$ depth would immediately yield linear work $O(\sqrt{n})$ depth algorithms for reachability. Unfortunately, even obtaining almost linear time algorithms for constructing a set of shortcutters which reduce the diameter to almost square root $n$ was open prior to this work. 

Fortunately, recent work of Fineman \cite{Fine18} provided some hope towards achieving this goal. This work provided the first nearly linear time algorithm for computing a nearly linear number of shortcutters which provide any polynomial diameter reduction from the trivial $O(n)$ bound. In particular, Fineman's work provided a nearly linear time algorithm which computed with high probability a nearly linear number of shortcutters that decrease the diameter to $\O(n^{2/3})$ and leveraged this result to obtain a parallel reachability algorithm with $\O(m)$ work and $\O(n^{2/3})$ depth that succeeds with high probability in $n$. 

Though an impressive result and a considerable breakthrough, this work left open the question of how well parallel almost linear work algorithms could match the depth bound that would be optimistically predicted by hopsets, i.e. $O(n^{1/2})$. Further, this work left open the question of whether these improvements could be transferred to additional resource constrained computational problems. In this paper we make progress on both questions with \Cref{thm:parallel_informal} and \Cref{thm:distributed_informal}. We provide an $\otilde(m)$ work and $n^{1/2 + o(1)}$ depth algorithm that computes a set of $\otilde(n)$ shortcutters that reduces the diameter to $n^{1/2 + o(1)} $ and use this to achieve improved distributed algorithms.

We achieve our results by strengthening and simplifying parts of Fineman's algorithm (see \Cref{sec:overview} for an overview of the approach). Further, we provide a fairly general strategy to turn improved parallel reachability algorithms into improved distributed algorithms in the \congest-model, building off approaches of \cite{Nan14,GU15}. Interestingly, we show that even Fineman's algorithm can be be modified to achieve improved distributed algorithms (albeit with weaker bounds).

Ultimately we hope this work sheds light on the structure of single source reachability, may lead to faster reachability in more  resource constrained computational environments, and may ultimately lead to more practical massive scale graph processing. 

\paragraph{Paper Outline} The rest of the paper is structured as follows. In the remainder of this introduction we formally state our results in \Cref{sec:our_results} and compare to previous work in \Cref{sec:related_work}. In \Cref{sec:prelim} we cover technical preliminaries and leverage this notation to provide a more technical approach overview in \Cref{sec:overview}. In \Cref{sec:seq} we then provide an $\otilde(m)$ time algorithm for computing $\otilde(n)$ shortcuts which decrease the diameter to $n^{1/2 + o(1)}$ with high-probability. This serial algorithm demonstrates many of the key insights we ultimately build upon to achieve our parallel reachability algorithms in \Cref{sec:parallel} and our distributed reachability algorithms in \Cref{sec:distributed}.

\subsection{Our results}
\label{sec:our_results}

In this paper we provide several parallel and distributed algorithms for efficiently constructing diameter-reducing hopsets and computing reachability in digraphs. Here we provide a brief overview of these results. Throughout this section (and the rest of the paper) we use $\otilde(\cdot)$ to hide polylogarithmic factors in $n$ and we use w.h.p. as shorthand for ``with high probability in $n$" where in both cases $n$ is used to denote the number of vertices in the original input graph. 

First, in \Cref{sec:seq} we provide a sequential algorithm for efficiently computing diameter-reducing hopsets. This algorithm improves upon the previous best diameter bound of $\O(n^{2/3})$ \cite{Fine18}, known to be achievable by a nearly linear time algorithm. Our main result is as follows:

\begin{theorem}[Sequential Diameter Reduction]
\label{thm:seq}
For any parameter $k$, there is an algorithm that given any $n$-node and $m$-edge digraph in $\O(mk)$ time computes  $\O(nk)$ shortcuts such that adding these edges to the graph reduces to the diameter to $n^{1/2 + O\left(1 / \log k\right)}$ w.h.p.
\end{theorem}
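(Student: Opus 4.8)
The plan is to build the shortcut set by repeatedly applying one randomized primitive — sample vertices, grow hop‑bounded ancestor/descendant sets, and insert the induced shortcuts — on the graph augmented with all shortcuts added so far, according to a schedule of $\Theta(\log n/\log k)$ rounds tuned so that the diameter is driven from $n$ down towards $n^{1/2}$ with only an $n^{O(1/\log k)}$ slack that compounds across the rounds.

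\paragraph{The core primitive.} Given the current digraph $G'$ and a scale parameter $r$, I would sample a set $S$ by keeping each vertex independently with probability $\Theta((\log n)/r)$, so $|S|=\otilde(n/r)$; for each $v\in S$ compute the set $\Anc_r(v)$ of its depth‑$\le r$ ancestors and $\Des_r(v)$ of its depth‑$\le r$ descendants via two truncated breadth‑first searches; and add the edges $\{(u,v):u\in\Anc_r(v)\}\cup\{(v,w):w\in\Des_r(v)\}$. The running time is $|S|\cdot O(m)=\otilde(mn/r)$, and the number of new edges is $|S|$ times the size of the largest explored neighborhood, which is $\otilde(n)$ whenever those neighborhoods have size $\otilde(r)$ (the rest being absorbed into the $\otilde(nk)$ budget). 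The diameter guarantee is the standard one: fix any pair $a,b$ with $\mathrm{dist}_{G'}(a,b)=L$ and a canonical shortest path $P=(v_0,\dots,v_L)$. A Chernoff bound shows that w.h.p., simultaneously over all pairs and all length‑$r$ windows of their paths, every such window contains a vertex of $S$. Granting this, if $L\ge r$ then from $v_0$ we repeatedly pick a sampled $v_i$ in the next window and use the two shortcuts $(v_j,v_i)$ (valid since $v_j\in\Anc_r(v_i)$) and $(v_i,v_{i+r})$ (valid since $v_{i+r}\in\Des_r(v_i)$), advancing more than $r$ positions along $P$ per two hops and so reaching $b$ in at most $2L/r$ hops; pairs with $L<r$ keep distance $<r$. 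Hence the new diameter is at most $\max(2D/r,\,r)$ where $D$ is the old diameter — the classical ``$t^2$ shortcuts buy diameter $n/t$'' tradeoff, which at $r=\Theta(\sqrt D)$ sends a diameter‑$D$ graph to diameter $O(\sqrt D)$.

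\paragraph{Why one round is not enough, and the schedule.} Running the primitive once with $r=\Theta(\sqrt n)$ would reach diameter $\otilde(\sqrt n)$ in a single step, but it uses $\otilde(\sqrt n)$ searches and hence $\otilde(m\sqrt n)$ time; with a per‑round budget of $\otilde(k)$ searches, the constraint $|S|=\otilde(n/r)\le\otilde(k)$ forces $r\ge\otilde(n/k)$, so one round only brings the diameter down to roughly $n/k$, and repeating the same global step verbatim makes no further progress once $D\approx n/k$. I would therefore iterate the primitive over a geometric schedule in which each round shrinks the current diameter $D$ by a polynomial factor (roughly $k$), crucially carrying out the searches \emph{locally}: once $D$ is small, a path of length $\le D$ is confined to few vertices, so each search need only explore a neighborhood of size $\otilde(D)$, which keeps the total search cost per round at $\otilde(mk)$ and unlocks progress below $n/k$. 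After $\ell=\Theta(\log n/\log k)$ rounds the diameter reaches $n^{1/2}$, and accumulating the per‑round $O(1)$‑factor slack yields $\sqrt n\cdot 2^{O(\ell)}=n^{1/2+O(1/\log k)}$; meanwhile the shortcut count telescopes to $\otilde(n\ell)=\otilde(nk)$ and the time to $\otilde(mk)$.

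\paragraph{The main obstacle.} I expect essentially all the difficulty to be in the local‑search step: organizing the $\ell$ rounds so that (a) the neighborhoods explored in a round genuinely have size $\otilde(D)$ and collectively cover every relevant path, with the work charged correctly even when degrees are large; (b) a shortcut created in one round still corresponds to a short, hop‑bounded path in $G$, with the per‑round blow‑up kept to a constant factor so the $\ell$ rounds cost only an $n^{O(1/\log k)}$ factor overall; and (c) the union bound over all pairs, all windows, and all $\ell$ rounds still succeeds with high probability in $n$. The BFS subroutines, the Chernoff estimates, and the edge/time accounting should then be routine.
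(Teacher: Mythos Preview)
Your proposal has a genuine gap in the ``local search'' step. You write that ``once $D$ is small, a path of length $\le D$ is confined to few vertices, so each search need only explore a neighborhood of size $\otilde(D)$.'' This conflates path length with neighborhood size: a shortest path of length $D$ has $D+1$ vertices, but a depth-$D$ BFS from a single vertex can visit $\Theta(n)$ vertices---indeed, in a graph of diameter $D$, a depth-$D$ search from any vertex reaches \emph{all} of its descendants. Adding shortcuts only makes neighborhoods larger, never smaller. So nothing in your iterative scheme ever drives the per-search cost below $O(m)$, and you are stuck at the $\otilde(m\sqrt n)$ Ullman--Yannakakis barrier you yourself identified. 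The ``main obstacle'' you flag in point (a) is not a technicality to be cleaned up; it is fatal to the approach as written.

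The paper gets around exactly this obstacle by a structurally different algorithm. Rather than iterating on the whole graph, $\Seq$ \emph{recursively partitions} the vertex set: after sampling shortcutters $S$, it labels each vertex by its relation (ancestor/descendant/unrelated) to every $v\in S$, groups vertices by their exact label vector, and recurses separately on each group. The payoff is \cref{seq:rvbound}: after $r$ levels of this partitioning, every vertex in a level-$r$ subproblem has at most $nk^{-r}$ ancestors and $nk^{-r}$ descendants \emph{within that subproblem}. This geometric shrinking of related sets---not of the diameter---is what makes it safe to raise the sampling rate to $p_r\approx k^{r+1}/n$ at level $r$ while keeping total work $\otilde(mk)$ (\cref{seq:labelbound}, \cref{seq:workbound}). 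The diameter analysis is then not the window-hitting argument you sketch but a potential-function argument on $s(P,G')$, the number of vertices related to a path $P$ inside its current subproblem: \cref{seq:2k+1lemma} shows that hitting $t$ path-related shortcutters drops the expected total $s$-value across the induced subpaths by a factor $2/(t+1)$, and an induction over the recursion (\cref{seq:diambound}) converts this into the $n^{1/2+O(1/\log k)}$ bound. Your plan has no analogue of the partitioning, and without it there is no way to bound either the work or the per-vertex related-set size.
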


This result forms the basis for our improved parallel reachability algorithms. We first argue that careful modification and application of the algorithmic and analytic insights of \Cref{sec:seq} suffice to obtain similar diameter improvements from a parallel algorithm. Our main result of \Cref{sec:parallel} is the following parallel analog of \Cref{thm:seq}.

\begin{theorem}[Parallel Diameter Reduction]
\label{thm:parallel}
For any parameter $k$, there is a parallel algorithm that given any $n$-node and $m$-edge digraph with $\O(mk + nk^2)$ work and  $\poly(k) \cdot n^{1/2+ O(1/\log k)}$ depth computes a set of $\O(nk)$ shortcuts such that adding these edges to the graph reduces the diameter to $n^{1/2 + O(1/\log k)}$ w.h.p.
\end{theorem}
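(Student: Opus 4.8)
The plan is to parallelize the sequential algorithm underlying \Cref{thm:seq} essentially step by step, exploiting two structural features of that algorithm. First, it never runs an unbounded graph search: every reachability/BFS computation it performs is truncated at some depth $h = n^{1/2 + O(1/\log k)}$, and such a bounded search from a single source admits a standard layered-frontier implementation in $\otilde(h)$ depth and $\otilde(m)$ work, with the work scaling by the number of sources (or amortizing against subgraph sizes) for the batched searches it uses. Second, every other operation it performs — random sampling, aggregating information over the reached balls, selecting and inserting shortcut edges, and restricting to the residual subgraph before recursing — reduces to sorting, prefix sums, connected components, and similar primitives, all implementable in polylogarithmic depth and near-linear work.

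With these primitives in hand, I would proceed through \Cref{sec:seq} and replace each ingredient by its parallel counterpart, reordering or restructuring any step whose sequential presentation hides a long chain of dependent searches so that it instead becomes $\poly(k)$ rounds of depth-$h$ searches followed by polylog-depth post-processing. Unrolling the recursion of \Cref{thm:seq} this way yields $\poly(k)$ rounds that must be executed in order — later rounds operate on the graph obtained after inserting the earlier shortcuts — while within each round all searches and aggregations run in parallel at depth at most $\poly(k)\cdot n^{1/2 + O(1/\log k)}$. Multiplying, the total depth is $\poly(k)\cdot n^{1/2 + O(1/\log k)}$. The per-round work matches the sequential bound up to the polylog overhead of the parallel primitives, leaving the dominant work at $\otilde(mk)$; the extra $\otilde(nk^2)$ term accounts for re-examining the up to $\otilde(nk)$ inserted shortcut edges across the $\poly(k)$ rounds (a careful accounting, as in the analysis of \Cref{sec:seq}, keeps everything else within $\otilde(mk)$). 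Since the parallel algorithm inserts exactly the same shortcut edges as the sequential one, the shortcut count is again $\otilde(nk)$ and the diameter bound $n^{1/2 + O(1/\log k)}$ is inherited directly from \Cref{thm:seq}.

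The main obstacle is making sure that nowhere does the parallelization secretly require an unbounded reachability computation: reachability is precisely the problem whose depth we are trying to control, and a single unrestricted parallel search would cost $\Omega(n)$ depth and destroy the bound. The crucial verification is therefore that the correctness and termination arguments of \Cref{thm:seq} rely only on searches truncated at depth $h = n^{1/2 + O(1/\log k)}$, and that whenever such a truncated search fails to reach some intended vertex, the algorithm's response — and the analysis justifying it — carries over to the parallel setting verbatim; this is where the ``careful modification'' of the sequential procedure is required. A secondary subtlety is bookkeeping the work of the batched searches so it stays $\otilde(mk)$, which needs the observation (already used in \Cref{sec:seq}) that within a round the subgraphs searched, or equivalently the active search sources, have total size $\otilde(m)$ because the algorithm removes the resolved part of the graph between rounds.
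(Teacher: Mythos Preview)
Your proposal rests on a false premise: the sequential algorithm of \Cref{sec:seq} does \emph{not} use bounded-depth searches. Look at lines~\ref{line7}--\ref{line11} of \Cref{algo:seq}: the algorithm computes the full sets $\RD_G(v)$ and $\RA_G(v)$ from each shortcutter $v$, which requires an unbounded BFS and hence $\Omega(n)$ depth in the worst case. So the plan ``replace each search by its layered-frontier parallel implementation and inherit the diameter bound from \Cref{thm:seq}'' cannot work as stated --- the very first primitive you need is the one you admit you cannot afford. Consequently the claim that ``the parallel algorithm inserts exactly the same shortcut edges as the sequential one'' is also wrong.

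What the paper actually does in \Cref{sec:parallel} is substantially more than a reimplementation. It replaces the full searches by $\kappa D$-limited searches with $\kappa$ chosen \emph{randomly} in a carefully designed range, which creates a new difficulty absent from the sequential setting: the path $P$ is no longer cleanly split among the induced subproblems, because vertices near the boundary of the search (the ``fringe'' $R(v,(\kappa+1)D)\setminus R(v,(\kappa-1)D)$) may be needed by multiple pieces. The paper handles this with an additional layer of recursion on fringe vertices, tracked by a second depth parameter $r^{\fringe}$, and proves that the random choice of $\kappa$ keeps the expected fringe small enough (\Cref{par:singlefringebound}, \Cref{par:expsize}) that total work stays under control and $r^{\fringe}\le\log n$ w.h.p. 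The diameter analysis (\Cref{par:parapathsplit}, \Cref{par:paradiambound}) then has to be redone from scratch to account for both the distance-limited relations and the extra fringe subpaths; it is not inherited from \Cref{seq:diambound}. Your final paragraph gestures at exactly this obstacle but treats it as a verification step, when in fact it is the entire content of the theorem.
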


Setting $k = O(\log n)$ immediately gives the following corollary.

\begin{corollary}
There is a parallel algorithm that given any $n$-node and $m$-edge digraph performs $\O(m)$ work in depth $n^{1/2 + o(1)}$ and computes a set of $\O(n)$ shortcuts such that adding these edges to the graph reduces the diameter to $n^{1/2 + o(1)}$ w.h.p.
\end{corollary}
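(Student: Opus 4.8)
The plan is simply to instantiate \Cref{thm:parallel} with an appropriate choice of the parameter $k$ and verify that all of the stated bounds collapse to the claimed form; no new argument beyond \Cref{thm:parallel} is required. Concretely, I would take $k = \lceil \log n \rceil$ (any choice of the form $k = \Theta(\log n)$ works equally well). With this choice $\log k = \Theta(\log\log n)$, so the exponent correction $O(1/\log k)$ appearing in both the diameter guarantee and the work/depth bounds of \Cref{thm:parallel} becomes $O(1/\log\log n) = o(1)$. Hence \Cref{thm:parallel} produces shortcuts reducing the diameter to $n^{1/2 + O(1/\log k)} = n^{1/2 + o(1)}$, and the number of shortcuts is $\O(nk) = \O(n\log n) = \O(n)$ since $\O(\cdot)$ already absorbs polylogarithmic factors in $n$.

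For the resource bounds I would argue as follows. The work is $\O(mk + nk^2) = \O(m\log n + n\log^2 n) = \O(m)$, where the last equality again uses that $\O(\cdot)$ hides polylogarithmic factors together with the standard assumption (when solving reachability one may restrict to the subgraph reachable from $s$) that $m = \Omega(n)$. For the depth, $\poly(k) = \poly(\log n) = n^{o(1)}$, so the depth is $\poly(k)\cdot n^{1/2 + O(1/\log k)} = n^{o(1)}\cdot n^{1/2 + o(1)} = n^{1/2+o(1)}$. The high-probability correctness guarantee is inherited verbatim from \Cref{thm:parallel}.

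The only point that requires any care — and it is routine rather than an obstacle — is checking that the two independent sources of sub-polynomial slack, namely the $\poly(\log n)$ depth overhead and the $n^{O(1/\log\log n)}$ exponent slack, are each genuinely $n^{o(1)}$ and may be multiplied together to stay $n^{o(1)}$; both facts are elementary. I do not anticipate any substantive difficulty in this step.
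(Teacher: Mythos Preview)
Your proposal is correct and matches the paper's approach exactly: the paper simply states ``Setting $k = O(\log n)$ immediately gives the following corollary,'' and your verification that each bound collapses to the claimed form under $k = \Theta(\log n)$ is precisely the intended one-line derivation.
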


Further, since single source reachability can be solved by BFS in linear work and depth proportional to the diameter of the graph applying this corollary and then leveraging BFS immediately proves \Cref{thm:main_parallel}, our main result on a parallel solution to single source reachability.

We leverage this parallel reachability result to provide our improved distributed reachability algorithms in \Cref{sec:distributed}. Formally we consider the \congest-model where given a $n$-node digraph $G$  there is a separate processor for each node and in every round, for every vertex $u$, its processor may send $O(\log n)$-bits of information to each of its neighbors (i.e. vertices $v$ for which either $(u,v)$ or $(v,u)$ is an edge). Here the distributed reachability question we consider is how to design a messaging scheme so that each node learns if it is reachable from a single given source in as few rounds as possible. 

Our main result regarding such distributed algorithms is given by \cref{thm:main_distributed}. Formally, we show that if $D$ is the diameter of the undirected graph associated with $G$ (i.e. there is an edge in the undirected graph between $u$ and $v$ if and only if either $(u,v)$ of $(v,u)$ is an edge in $G$) then we can design a distributed algorithm that solves single source reachability in $\otilde(n^{1/2} + n^{1/3 + o(1)} D^{2/3})$ rounds of communication in the \congest-model with high probability in $n$. Due to a known lower bound of $\tilde{\Omega}(\sqrt{n} + D)$ \cite{DHK11} this result is nearly optimal whenever  $D = O(n^{1/4-\eps})$ for any $\eps > 0$. Further, to the best of our knowledge this is the first nearly optimal algorithm for general directed graphs when $D = \Omega(n^\delta)$ for any constant $\delta \in (0,1/4)$. 

Interestingly, we achieve this improved distributed algorithm by following a fairly general framework inspired by \cite{GU15,Nan14}. We argue that there is a fairly general procedure for converting ``nice enough'' work-efficient parallel reachability algorithms into improved bounds on distributed reachability. This procedure  first computes a set of shortcuts using known prior work on distributed algorithms \cite{Pel00}. Leveraging these shortcuts, the procedure then solves reachability  by applying a work-efficient parallel reachability algorithm over the a graph with shortcuts added. We then argue that if the reachability algorithm is ``nice enough'' we can bound the distributed round complexity of the resulting algorithm as a function of the work and depth of the parallel reachability algorithm. Ultimately, \cref{sec:distributed} shows that in addition to proving \cref{thm:main_distributed} we could have used this framework and Fineman's recent work \cite{Fine18} to obtain improved distributed algorithms (even without using our work of \cref{sec:parallel}); albeit with worse bounds.

\subsection{Related Work}
\label{sec:related_work}

The problem of computing single-source reachability from source $s$ in $n$-node $m$-edge digraph $G$ is  one of the most fundamental questions in the theory of parallel algorithms. A more complete survey of previous results can be found in \cite{Fine18} and we describe them  briefly here. 

Two folklore algorithms exist for parallel reachability. First, the complete transitive closure of $G$ can be computed in parallel by repeatedly squaring the adjacency matrix of $G$. This achieves a polynomial running time with polylogarithmic depth, but the $\otilde(n^\omega)$ work algorithm where $\omega < 2.373$ is the matrix multiplication constant \cite{Williams12} is currently not known to be nearly linear even for dense graphs. Second, a straightforward modification to standard breadth-first search, called ``Parallel BFS,'' enables us to compute the single-source reachability from $s$ in $O(m)$ time and $O(n)$ depth. This has near optimal work, but its depth is trivial, i.e.  it is essentially a fully serial algorithm. 

Procedures by Spencer \cite{Spencer1997} and Ullman and Yannakis \cite{UY91} provide work-depth tradeoffs that interpolate between the extremes of these two naive procedures, yet neither improve upon the $\O(n)$ depth bound in the case of $\otilde(m)$ work. Fineman \cite{Fine18} provided a breakthrough by demonstrating the $\otilde(m)$ work algorithm for single-source reachability achieving sublinear depth $\otilde(n^{2/3})$. Our results build on those of Fineman by improving the depth to $n^{1/2 + o(1)}$. These results are summarized in \Cref{table:comparison}.  

\begin{table}
  \begin{center}
    \small
    \begin{tabular}{|c|c|@{\hspace{1em}}l|}\hline & Work & Span \\ \hline\hline
      Parallel BFS & $O(m)$ & $\otilde(n)$ \\
      \hline Parallel Trans. Closure & $\otilde(n^\omega)$ & $\otilde(1)$ \\ \hline
      Spencer's~\cite{Spencer1997} & $\otilde(m+n\rho^2)$ & $\otilde(n/\rho)$ \\ \hline
      UY~\cite{UY91} & $\otilde(m\rho + \rho^4/n)$ & $\otilde(n/\rho)$ \\ \hline
      Fineman's~\cite{Fine18} & $\otilde(m)$ & $\otilde(n^{2/3})$  \\ \hline
      \textbf{This paper} & $\otilde(m)$ & $n^{1/2 + o(1)}$  \\ \hline
    \end{tabular}\vspace{-1em}
  \end{center}
  \caption{Summary of previous results for parallel single-source digraph reachability. 
  Here $\rho \in [1,n]$ can be chosen arbitrarily and $\omega < 2.373$ denotes the matrix multiplication exponent.
  This table was modified from the one in \cite{Fine18}.
}
  \label{table:comparison}
\end{table}

Our algorithm, like \cite{Fine18}, solves a more general problem than single-source reachability: we show show that our algorithm gives  work-efficient parallel construction of an $\otilde(n)$-edge hopset of diameter $n^{1/2 + o(1)}$. With this hopset we can answer arbitrary single-source reachability queries in nearly-linear work and $n^{1/2 + o(1)}$ depth. A natural question is whether our hopset construction can be improved. However even from the perspective of constructibility the true tradeoff between diameter and number of added edges is not known. As mentioned previously a straightforward random construction provides a $O(t^2 \log^2 n)$-edge hopset with diameter $O(n/t)$, but it is not known how to improve upon this result in any regime.  Building off of \cite{Hesse2003}, \cite{HuangP18} demonstrates that $O(n)$-edge hopsets cannot ensure diameter less than $O(n^{1/6})$ and $O(m)$-edge hopsets cannot achieve $O(n^{1/11})$ diameter; no results are known in other parameter regimes. Nevertheless we conjecture that in the case of nearly-linear sized hopsets our construction is tight and therefore any polynomial improvement to our algorithm must somehow avoid the hopset paradigm. \todolater{Actually, the Huang-Pettie hopset construction can be generalized to the superlinear-shortcut regime. Hopsets of size$ n^{1+a}$ cannot have diameter better than $n^{((d-1)/(d(d+1)) - a/d)}$: this is polynomial for all $a<1$} 

Distributed algorithms in the \congest~model \cite{Pel00} have been studied extensively over the past two decades. Though there have been multiple improvements to the round complexity of approximately solving single source shortest paths in this model (see for example \cite{BeckerKKL17, GhaffariL18,ForsterN18}  for the relevant literature), there has been comparatively little progress on the solving the same problem on directed graphs \cite{Nan14,GhaffariL18,ForsterN18}. For the  single-source reachability problem considered in this paper the previous state of the art for this problem is due to \cite{GU15}, which solved the problem in  $\otilde(D + \sqrt{n} D^{1/4} )$ rounds w.h.p. This algorithm in turn improved upon the $\otilde(D + \sqrt{n} D^{1/2})$ w.h.p. round bound of \cite{Nan14}, which was (to the best of our knowledge) the first non-trivial distributed algorithm for this problem.

\section{Preliminaries}
\label{sec:prelim}

We denote vertex set of a graph $G$ by $V(G)$, and the edge set by $E(G)$. We simply write these as $V$ and $E$ when the graph $G$ is clear from context. For a $V' \subseteq V$, we let $G[V']$ denote the induced subgraph on $V'$, i.e. the graph with vertices $V'$ and edges of $G$ that have both endpoints in $V'$.

\paragraph{Digraph relations:} Let $G$ be a \emph{directed graph} or \emph{digraph} for short. We say that $u \pe v$ if there is a directed path from $u$ to $v$ in $G$. In this case we say that \emph{$u$ can reach $v$}, or that \emph{$v$ is reachable from $u$}. We say that $u \not\pe v$ if there is no directed path from $u$ to $v$ in $G$. In this case we say that $u$ cannot reach $v$, or that $v$ is not reachable from $u$. When $u \pe v$ and $v \pe u$ we say that $u$ and $v$ are in the same strongly connected component. We define the \emph{descendants} of $v$ to be $\RD_G(v) \defeq \{ u \in V(G) : v \pe u\}$ and the \emph{ancestors} of $v$ to be $\RA_G(v) \defeq \{u \in V(G) : u \pe v \}.$ We say that $u$ and $v$ are \emph{related} if $u \pe v$ or $v \pe u$. We define the \emph{related vertices} of $v$ as $R_G(v) \defeq \RD_G(v) \cup \RA_G(v).$ Throughout, the letter $R$ we use in the notation should be read as ``related" or ``reachable". We say that $u$ is \emph{unrelated} to vertex $v$ if  $u \in V(G) \bs R_G(v)$. 

We extend this notation to subsets $V' \subseteq V$ in the natural way. We define the ancestors, descendants, and related vertices to $V'$ as
\[ \RD_G(V') \defeq \bigcup_{v \in V'} \RD_G(v) \text{ , } \RA_G(V') \defeq \bigcup_{v \in V'} \RA_G(v) \text{ , and } R_G(V') \defeq \RD_G(V') \cup \RA_G(V'). \] We say that a vertex $v$ is related to a subset $V'$ if $v \in R_G(V').$ When the graph $G$ is clear from context, we will often drop the $G$ subscript and simply write (for example) $\RD(v), \RA(v),$ and $R(v)$ instead of $\RD_G(v), \RA_G(v), R_G(v).$

We further extend this notation to induced subgraphs of $G$. Let $G'$ be a subgraph of $G$, possibly with a different set of vertices and edges than $G$. We say that $u \pe_{G'} v$ if there is a directed path from $u$ to $v$ in the subgraph $G'$; we say that $v$ is reachable from $u$ \emph{through $G'$} in this case. Define
 \[ \RD_{G'}(v) \defeq \{ u \in V(G') : v \pe_{G'} u \}, \RA_{G'}(v) \defeq \{ u \in V(G') : u \pe_{G'} v \}, \text{and}
 \] 
$R_{G'}(v) \defeq \RD_{G'}(v) \cup \RA_{G'}(v)$. We similarly extend this definition to subsets of vertices $V' \subseteq V(G')$:
\[ \RD_{G'}(V') \defeq \cup_{v \in V'} \RD_{G'}(v), \RA_{G'}(V') \defeq \cup_{v \in V'} \RA_{G'}(v), \text{ and } R_{G'}(V') \defeq \RA_{G'}(V') \cup \RD_{G'}(V'). \] As our algorithm performs recursion on subgraphs of $G$, this notation enables us to reference specific subproblems as our algorithm progresses.

A \emph{shortcut} refers to adding an edge $(u,v)$ to a graph $G$ where $u \pe v$ in $G$. Adding the edge does not affect the reachability structure of $G$. A \emph{shortcutter} $v$ is a node we add shortcut edges to and from.
A \emph{hopset} refers to a collection of shortcuts. 

\paragraph{Paths:} Our analysis will consider paths in the graph as well as the relations between the vertices on the path and other vertices in the graph. Let $G$ be a digraph. We denote a path $P = \l v_0, v_1, \dots, v_\ell \r$, where all the $v_i$ are vertices of $G$ and $(v_i,v_{i+1}) \in E(G)$. Here, the length of the path is $\ell$, where we have that $v_0 \pe v_1 \pe \dots \pe v_\ell$. We say that the head of the path is $\head(P) \defeq v_0$ and the tail is $\tail(P) \defeq v_\ell$. We now make the following definitions.
	
\paragraph{Path-related vertices:} We adopt a similar convention as \cite{Fine18}. For a path $P = \l v_0, v_1, \dots, v_\ell \r$ we say that $v$ is \emph{path-related} if $v \in R_G(P)$. Further, for any path $P$ in digraph $G$, we define $s(P, G) \defeq |R_G(P)|$ as the number of path-related vertices. All path-related vertices are one of the following three types:
\begin{itemize}
\item \textbf{Descendants:} We say that a vertex $v$ is a descendant of the path $P$ if $v \in \RD_G(P)\bs \RA_G(P).$ Note that this holds if and only if $v_0 \pe v$ and $v \not\pe v_\ell.$
\item \textbf{Ancestors:} We say that a vertex $v$ is an ancestor of the path $P$ if $v \in \RA_G(P)\bs \RD_G(P).$ Note that this holds if and only if $v_0 \not\pe v$ and  $v \pe v_\ell.$
\item \textbf{Bridges:} We say that a vertex $v$ is a bridge of the path $P$ if $v \in \RD_G(P)\cap \RA_G(P).$ Note that this holds if and only if $v_0 \pe v$ and $v \pe v_\ell.$
\end{itemize}
A vertex which is not a descendant, ancestor, or bridge for a path $P$ is called \emph{unrelated} to $P$.
Later in \cref{sec:parallel} we explain how to extend all these definitions to the distance-limited case.

\paragraph{Subproblems:} During our algorithms' recursions, we will make reference to the induced recursive calls made. Consider a graph $G$ and a path $P$ in $G$. During a call to an algorithm on the graph $G$, we define a \emph{subproblem} to be an induced subgraph $G[V']$ along with a subpath $P'$ of $P$ which lies inside $G[V']$ on which we perform a recursive execution.

\paragraph{Miscellaneous:} We let $B(n,p)$ be the binomial random variables over $n$ events of probability $p$. We have the following standard fact about binomial random variables:

\begin{lemma}[Chernoff Bound]
\label{lemma:chernoff}
Let $X \sim B(n,p)$ be a binomial random variable. Then
\[
\Pr\left[X > (1+\delta)np\right] \leq \exp\left(-\frac{\delta^2}{2+\delta} np\right). 
\]
\end{lemma}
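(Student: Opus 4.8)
The plan is to use the standard exponential-moment (Bernstein/Chernoff) method: for any $t > 0$, Markov's inequality applied to $e^{tX}$ gives $\Pr[X > (1+\delta)np] = \Pr[e^{tX} > e^{t(1+\delta)np}] \le e^{-t(1+\delta)np}\,\E[e^{tX}]$. Since $X \sim B(n,p)$ is a sum of $n$ independent Bernoulli$(p)$ variables, $\E[e^{tX}] = (1 - p + p e^t)^n$. Using $1 + x \le e^x$ with $x = p(e^t - 1)$ yields $\E[e^{tX}] \le \exp\bigl(np(e^t - 1)\bigr)$, so the bound becomes $\Pr[X > (1+\delta)np] \le \exp\bigl(np(e^t - 1) - t(1+\delta)np\bigr)$.

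Next I would optimize over $t$. The exponent $np\bigl(e^t - 1 - t(1+\delta)\bigr)$ is minimized at $e^t = 1 + \delta$, i.e. $t = \ln(1+\delta)$, giving the classical bound $\exp\bigl(-np\bigl((1+\delta)\ln(1+\delta) - \delta\bigr)\bigr)$. The remaining work is purely analytic: show $(1+\delta)\ln(1+\delta) - \delta \ge \frac{\delta^2}{2+\delta}$ for all $\delta \ge 0$. This is a one-variable calculus fact; one clean route is to define $f(\delta) = (1+\delta)\ln(1+\delta) - \delta - \frac{\delta^2}{2+\delta}$, check $f(0) = 0$, and verify $f'(\delta) \ge 0$ for $\delta \ge 0$ (after simplification $f'(\delta) = \ln(1+\delta) - \frac{\delta(\delta+4)}{(\delta+2)^2}$, which is nonnegative since $\ln(1+\delta) \ge \frac{2\delta}{\delta+2}$ and $\frac{2\delta}{\delta+2} \ge \frac{\delta(\delta+4)}{(\delta+2)^2}$). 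Alternatively, rather than optimizing exactly, one can simply plug in the suboptimal but convenient choice $t = \ln(1+\delta)$ only if that already suffices, or directly bound $e^t - 1 - t(1+\delta)$ at a cleverly chosen $t$; but the cleanest presentation optimizes and then invokes the inequality.

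The main obstacle is the elementary inequality $(1+\delta)\ln(1+\delta) - \delta \ge \frac{\delta^2}{2+\delta}$, which, while standard, requires a short monotonicity argument to justify rigorously. Everything else — independence of the Bernoulli summands, the MGF computation, the $1+x \le e^x$ step, and the optimization in $t$ — is routine. Since this is a stated-as-known standard fact in the paper, I would keep the proof terse: state the MGF bound, pick $t = \ln(1+\delta)$, and cite or quickly verify the scalar inequality.
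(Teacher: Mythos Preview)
The paper does not prove this lemma at all; it is introduced as ``the following standard fact about binomial random variables'' and stated without proof. Your proposal is the standard MGF/Chernoff argument and is correct, including the scalar inequality $(1+\delta)\ln(1+\delta)-\delta \ge \frac{\delta^2}{2+\delta}$ and its verification via $f'(\delta)\ge 0$. Since the paper treats this as a black-box citation, there is nothing to compare against; if anything, your write-up is more than the paper provides.
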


The diameter of a directed graph $G$ is defined as $\max \{ d(u, v) : u, v \in V(G) \text{ and } d(u, v) < \infty \}$ i.e. the longest shortest path between two vertices $u, v$ where $u$ can reach $v$.

For functions $f(n)$ and $g(n)$ we say that $f(n) = \O(g(n))$ if $f(n) = O(g(n) \cdot \poly\log n)$. In particular, $\O(1) = O(\poly \log n).$

\section{Overview of Approach}
\label{sec:overview}

Here we provide an overview of our approach towards achieving our algorithmic improvements on reachability. First, we provide a blueprint for our sequential nearly linear time algorithm for computing diameter reducing hopsets. While further work is needed to make this algorithm implementable in low depth and more insights are needed to obtain our parallel and distributed results (and we discuss these briefly), we believe this simple sequential algorithm demonstrates the primary algorithmic insights of the paper.

For the remainder of this section, let $G = (V,E)$ be a digraph for which we wish to efficiently compute diameter-reducing shortcuts. For simplicity we consider the case where $G$ is a directed acyclic graph (DAG); the analysis of the general case is essentially identical and for the purposes of reachability (ignoring parallel computation issues) we can contract every strongly connected component to a single vertex. 

Our shortcutting algorithms follows the general blueprint leveraged by Fineman \cite{Fine18} for efficiently computing diameter reducing shortcuts. Briefly, Fineman's algorithm consists of the following iteration: in every step, a ``shortcutter" vertex $v$ is selected from $V$ uniformly at random. It then constructs three sets: $v$'s ancestors $\RA_G(v)$, $v$'s descendants denoted $\RD_G(v)$, and the set of notes \emph{unrelated to $v$} $U_G(v) \defeq V \bs \{\RD_G(v) \cup \RA_G(v) \}$. The algorithm then adds shortcut edges from $v$ to every node in $\RD_G(v)$ and from every node in $\RA_G(v)$ to $v$. The algorithm then computes the induced graphs $G_D =  G[\RD_G(v)], G_A = G[\RA_G(v)]$ and $G_U = G[U_G(v)]$ and recursively applies the procedure to each of these three graphs. 

To analyze this procedure, consider any path $P$ in $G$. The algorithm in \cite{Fine18} considers how the shortcuts the algorithm constructs affect the distance between the endpoints of $P$. When a shortcutter vertex $v$ is picked, there are four possibilities with how it interacts with $P$: it is either 
\begin{enumerate}
    \item Unrelated to every node in $P$.
    \item An ancestor to some nodes in $P$ forming a subpath $P_1$ and unrelated to the remaining subpath $P_2$.
    \item A descendant to some nodes in $P$ forming a subpath $P_2$ and unrelated to the remaining subpath $P_1$.
    \item An ancestor to the tail of  $P$ and a descendant to the head of $P$.
\end{enumerate}
Consider shortcutting through any vertex $v$ and following the recursion of the above algorithm. By the above, it is clear that after shortcutting through any vertex $v$ one of three things can happen: either $P$ remains intact in a subproblem (case 1), it gets split into exactly two pieces in two different subproblems (cases 2 or 3), or the connectivity between the endpoints of $P$ is resolved through $v$, i.e. we can go from $P$'s head to tail in two edges by going through $v$ (case 4). Thus we either split $P$ into at most two pieces or we ensure the endpoints $P$ are distance $2$ from each other. Let $P_i$ be the pieces $P$ is split into at some state of the algorithm's execution, and let $V_i$ be the subproblem vertex set containing $P_i$. 

The key insight of Fineman is to define the following function (which we defined in \cref{sec:prelim}) and to use it to reason about the effect of this random process:
\[ s(P_i, G_i) \defeq  \left|\{u \in V(G_i)| u \text{ is related to some node in } P_i\}\right|. \] 
Observe that $s(P_i, G_i)$ is an overestimate of the length of $P_i$, and that $s(P, G) \le n$. Define 
\[ L(P) \defeq \sum_{i} s(P_i, G_i) \]
to be the sum over all $s(P_i, G_i)$ at this state of our algorithm. $L(P)$ is a random variable, but by reasoning about the above cases we can reason about how $L(P)$ changes in expectation. For any subpath $P_i$, consider the induced subproblems after shortcutting through a randomly selected node $v$. If $v$ lands in case 1 nothing changes, and if $v$ lands in case 4 we resolve the connectivity of $P_i$ and set $s(P_i, G_i)$ to $0$, as there is no remaining subproblem. In cases $2$ and $3$ however, we split $P_i$ into two pieces: call these $P_{i1}$ and $P_{i2}$. In these cases, Fineman is able to argue that a randomly chosen node can ensure that the number of nodes which are related to either $P_{i1}$ or $P_{i2}$ decreases by some constant factor $c$ in expectation. Thus if $f(x)$ is the expected shortcut length of a path $P_i$ with $s(P_i, G_i) = x$ we can essentially guarantee by induction that $f(x) \le \max_{a+b = cx} f(a) + f(b)$. Fineman achieves a constant $c = 3/4$: this gives a bound of $f(n) \le O(n^{1/\log(8/3)})$. A sophisticated refinement of this argument allows him to obtain his claimed $\O(n^{2/3})$ bound.

Our algorithm is almost identical to that of Fineman with one crucial modification: we pick more than one shortcutter node before we recurse. Specifically, we shortcut from $k$ random vertices in the graph instead of only a single vertex. After shortcutting, we partition the vertices of the graph into subsets, much like Fineman's algorithm partitioned the vertices in ancestors, descendants, and unrelated vertices. In our partitioning scheme, two vertices are in the same subset if and only if they have the same relationship to each of the $k$ shortcutters. As an example, two vertices $u_1$ and $u_2$ are \emph{not} in the same subset if say $u_1$ is an ancestor of shortcutter $v$ and $u_2$ is a descendant of shortcutter $v$. If we pick $k$ shortcutter nodes from cases $2$, $3$, or $4$ at a time and partition in the way described, we are able to guarantee that the number of path-related nodes after we recurse decreases by a factor of $\frac{2}{k+1}$ in expectation after we recurse (\cref{seq:2k+1lemma}). Although the path splits into $k+1$ pieces after recursing, analyzing the resulting recursion in the same manner as Fineman reveals that $k = \omega(1)$ will ensure our algorithm will shortcut paths to length $n^{1/2+ o(1)}$ as desired. Unfortunately, we are not able to guarantee this directly. The above analysis requires that in any recursive level we pick either $0$ or $k$ path-relevant shortcutters in any recursive level; however we do not know how to obtain such fine-grained control without knowing the path.

Intuitively, we would like to pick as many shortcutter vertices as possible while staying within our nearly-linear work bound-- the more shortcutters we pick, the more likely we are to obtain the $\frac{2}{k+1}$ reduction in path-related nodes. However, we cannot simply pick the same number of shortcutters in every level of recursion: because the number of path-related nodes goes down rapidly, picking $k$ shortcutters per level of recursion will eventually only enable us to pick a single path-relevant shortcutter per round. Instead, we show that after each level of recursion the structure of the subproblems is such that we can pick \emph{$k$ times more} shortcutters while still having nearly-linear work. This, combined with a new inductive analysis in \cref{seq:diambound} to get around the fact that we don't have as precise control over the change in $L(P)$ enables us to obtain our result.

\paragraph{Parallel Implementation:} Our techniques as described give us a nearly-linear work algorithm which constructs a nearly-linear number of shortcuts that reduce the diameter to $n^{1/2+o(1)}$. We make our construction parallel in a similar fashion to Fineman. The key insight to \cite{Fine18}'s parallelization is to consider \emph{$D^{\text{search}}$-restricted} searches; instead of computing the ancestor, descendant, and unrelated sets with full graph traversals from a vertex $v$, Fineman computes collections of $D^{\text{search}}$-ancestors and $D^{\text{search}}$-descendants. These are the set of ancestors (resp. descendants) which are reachable from $v$ at distance at most $D^{\text{search}}$. Now although these can be computed in low depth, we cannot use these as a direct replacement for the full ancestor and descendant sets as we can no longer guarantee an expected decrease in $L(P)$.

Fineman gets around this issue with a new idea. Let $G$ be a digraph. Assume that we could efficiently find a set of edges $F$ to add to $G$ such that if $s$ and $t$ are nodes at distance $D$ from each other their distance in $G \cup F$ is at most $D/5$ w.h.p. Then for any nodes $u, v$ at distance more than $D$ from each other we observe that their distance in $G \cup F$ is halved w.h.p., i.e. this breaks up the $u-v$ shortest path into chunks of length $D$ and observe that each subpath's size falls by a constant factor with constant probability. Thus by repeating this procedure on $G \cup F$ and iterating $O(\log n)$ times we observe that every pair of reachable nodes $u,v$ can be brought within distance $D$. Doing this reduction only costs logarithmic factors in total work and parallel depth.

Fineman therefore modifies his recursion in the following way. In every level set $D^{\text{search}} = (\kappa+1) D$, where $\kappa$ is a random variable. Fineman then constructs the $D^{\text{search}}$-ancestors and $D^{\text{search}}$-descendants, but he then defines the unrelated set to be set of all nodes which are \emph{not $\kappa D$-ancestors or $\kappa D$-descendants}. This modification duplicates all nodes at distance between $\kappa D$ and $(\kappa + 1)D$ from the shortcutter $v$, but now any path of length $D$ is partitioned into two contiguous subpaths, copies of which can be found in these three induced sets. Fineman argues that the expected increase in the number of nodes can be controlled and that an analogous bound on $L(P)$ in expectation can be obtained as in the serial setting by picking and choosing the specific copies of subpaths to split $P$ into in the recursion. Combining these pieces allows him to obtain his claimed $\O(n^{2/3})$ depth algorithm. Our approach (\cref{sec:parallel}) will build off of these ideas with several modifications for our new algorithm to leverage our inductive analysis.

\paragraph{Application to Distributed Reachability:} Let $G$ be a $n$-vertex directed graph with undirected hop diameter $D$. We describe our approach for solving the single source reachability problem in the \congest~model on $G$. Our approach involves combining our parallel diameter reduction algorithm with the approaches of Ghaffari and Udwani \cite{GU15} and Nanongkai \cite{Nan14}. The approach (loosely) involves using $\O(\alpha + n/\alpha)$ rounds of communication in the \congest~model to reduce the problem to computing reachability on a set of vertices $S$ of size $\alpha$, with the difference that the vertices must communicate via global broadcasting, as the vertices in $S$ aren't actually connected in the original graph. We then simulate our parallel reachability algorithm on $S$. By analyzing our parallel reachability algorithm, we can analogously get a bound on the number of rounds needed to simulate it in the \congest~model.

\section{Sequential Algorithm}
\label{sec:seq}

The main goal of this section is to prove \cref{thm:seq} showing that for all $k$ there is an $\otilde(mk)$ time algorithm which adds $\otilde(nk)$ shortcuts which reduce the diameter to $n^{1/2+ O(1 /  \log k)}$ w.h.p. In \cref{sec:seq_alg_desc} we present our algorithm for achieving this result. In \cref{sec:seq_work_and_shortcut} we bound the work of the algorithm and the number of shortcuts it adds. In \cref{seq:path_related} we provide our main technical lemma regarding diameter reduction and then in  \cref{sec:recurse_and_diam} we apply this lemma repeatedly to prove that the algorithm reduces diameter, thereby proving \cref{thm:seq}. 

\subsection{Algorithm Description }
\label{sec:seq_alg_desc}

Here we present our sequential short-cutting algorithm (see \cref{algo:seq}). Before stating the algorithm, we give some definitions and intuition for the quantities defined in the algorithm. Let $G$ be the graph that we input to our algorithm and consider the following.
\begin{itemize}
	\item \textbf{Inputs $k, r$}: $k$ is a parameter governing the speed that we recurse at. Intuitively, our algorithm  picks shortcutters so that graphs at one level deeper in the recursion are ``smaller" by a factor of $k$. This is made precise in \cref{seq:rvbound}. $r \le \log_k n$ is the depth of recursion that the algorithm is currently at, where we start at $r = 0$.
	\item \textbf{Set $S$}: $S$ is the set of vertices from which we search and build shortcuts from.
	\item \textbf{Set $F$}: $F$ is the final set of shortcuts we construct.
	\item \textbf{Probability $p_r$}: At recursion depth $r$, for each vertex $v \in V(G)$, we put $v$ in $S$ with probability $p_r$.
	\item \textbf{Labels $\vD, \vA, \xmark$}: We want to distinguish vertices by their relations to vertices in $S$. Therefore, when we search from a vertex $v$ we add a label $\vD$ to add vertices in $\RD_G(v)\bs \RA_G(v)$, a label $\vA$ to all vertices in $\RA_G(v)\bs \RD_G(v)$, and a label $\xmark$ to all vertices in $\RD_G(v) \cap \RA_G(v).$ The label $\xmark$ should be understood as ``eliminating" the vertex (since it is in the same strongly connected component as $v$ and we have shortcut through $v$ already).
\end{itemize} 

\begin{algorithm}[h!]
\caption{$\Seq(G, k, r)$. Takes a graph $G$, parameter $k$ and recursion depth $r \le \log_k n$ (starts at $r = 0$). Returns a set of shortcut edges to add to $G$. Sequential diamater reduction algorithm. $n$ denotes the number of vertices at the top level of recursion.}
\begin{algorithmic}[1]
\State $p_r \assign \min\left\{1, \frac{20 k^{r+1} \log n}{n} \right\};$ \Comment{Begin level $r$ of recursion.}
\State $S \assign \emptyset;$
\For{$v \in V$}
	\State With probability $p_r$ do $S \assign S \cup \{v\};$
\EndFor
\State $F \assign \emptyset;$
\For{$v \in S$} \label{line6}
	\For{$w \in \RD_G(v)$} add edge $(v,w)$ to $F$; \label{line7}
	\EndFor
	\For{$w \in \RA_G(v)$} add edge $(w,v)$ to $F$. \label{line8}
	\EndFor
	\For{$w \in \RD_G(v)\bs \RA_G(v)$} add label $\vD$ to vertex $w$. \label{line9}
	\EndFor
	\For{$w \in \RA_G(v)\bs \RD_G(v)$} add label $\vA$ to vertex $w$. \label{line10}
	\EndFor
	\For{$w \in \RD_G(v)\cap \RA_G(v)$} add label $\xmark$ to vertex $w$. \label{line11}
	\EndFor
\EndFor
\State $W \assign \{ v \in V : v \text{ has no label of } \xmark \}.$ \label{line13}
\State $V_1, V_2, \dots, V_\ell \assign$ partition of $W$ such that $x, y \in V_i$ if and only if $x$ and $y$ have the same exact labels. \Comment{Vertices in the $V_i$ have no label of $\xmark$.}
\For{$1 \le i \le \ell$}
	\State $F \assign F \cup \Seq(G[V_i], k, r+1)$ \label{line15}
\EndFor
\State \Return{$F$}
\label{algo:seq}
\end{algorithmic}
\end{algorithm}

Our algorithm can be thought of as an extension of Fineman's shortcut construction procedure. In every iteration, we seek to add as many shortcutters as possible while staying within our claimed work bound. Thus, in the first iteration we add $\otilde(k)$ shortcutters w.h.p. and perform $\otilde(mk)$ work. We then partition the nodes into clusters such that any two nodes $x$ and $y$ which are in the same $V_i$ have exactly the same labels assigned to them by the shortcutters, none of which are $\xmark$. We will show how to implement this step later (\cref{seq:workbound}). We then recursively apply the algorithm within each cluster with a sampling probability that is a factor of $k$ larger. We will show two things. First, we show that the increase in sampling probability is offset by a decrease in the number of related pairs such that the work done in an iteration is the same w.h.p. (\cref{seq:rvbound} and \cref{seq:labelbound}). Second, we show that if we pick $q$ shortcutters that are path related to a path $P$ we get an expected decrease in the number of path-related nodes to all the induced subproblems of $P$ (\cref{seq:2k+1lemma}). This second fact enables us to replace the recursion in Fineman with one that decreases more quickly (\cref{seq:diambound}): this gives our depth improvement. 

\subsection{Work and Shortcut Bound}
\label{sec:seq_work_and_shortcut}
 
In this section we bound the work and number of shortcuts added by \cref{algo:seq}. In any recursion level of our algorithm there are two sources of work. The first source is from computing the requisite labels $\vD, \vA,$ and $\xmark$ for every node $v$ we shortcut from. The second source comes from grouping the nodes by these labels to generate the subproblems for the next level. We will bound both of these sources of work by using a useful fact on the number of ancestors and descendants a node has in the subproblem it belongs to in any level. 
\begin{lemma}
\label{seq:rvbound}
Consider an execution of $\Seq(G, k, 0)$ on $n$-node $m$-edge $G$. With probability $1 - n^{-10}$ in each recursive execution of  $\Seq(G',k,r)$ in line \ref{line15} of \cref{algo:seq} the following holds 
\[
\RD_{G'}(v) \leq n k ^{-r} ~~~ \text{and} ~~~ \RA_{G'}(v) \leq n k^{-r}
~~~ \text{for all } v \in G'.
\]
\end{lemma}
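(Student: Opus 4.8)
The statement is a high-probability bound: in every recursive subproblem $G'$ reached at recursion depth $r$, every vertex $v \in V(G')$ has at most $nk^{-r}$ ancestors and at most $nk^{-r}$ descendants within $G'$. The natural approach is induction on $r$, with the base case $r=0$ being trivial since $\RD_G(v), \RA_G(v) \le n = nk^0$. For the inductive step, I would fix a subproblem $G'$ at depth $r$ satisfying the bound (inductively, with $|\RD_{G'}(v)|, |\RA_{G'}(v)| \le nk^{-r}$ for all $v$), and show that with high probability every child subproblem $G[V_i]$ produced inside the call $\Seq(G', k, r)$ satisfies $|\RD_{G[V_i]}(v)|, |\RA_{G[V_i]}(v)| \le nk^{-(r+1)}$ for all $v \in V_i$. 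Then a union bound over all $O(n)$ subproblems across all $O(\log_k n)$ levels (there are at most $n$ subproblems total since they partition subsets of $V$) turns the per-subproblem failure probability into the claimed $n^{-10}$ — though one should be a bit careful and set the per-event failure probability to something like $n^{-12}$ to absorb the union bound; I'd phrase the lemma's constant loosely enough that this is fine.

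**Key step.** The heart of the argument is the following: fix a vertex $v \in V(G')$ and consider its descendant set $\RD_{G'}(v)$, which by induction has size $N \le nk^{-r}$. In the call $\Seq(G', k, r)$ we sample $S$ by including each vertex independently with probability $p_r = \min\{1, 20 k^{r+1}\log n / n\}$. After sampling, a child subproblem $G[V_i]$ containing $v$ consists of vertices all sharing exactly $v$'s label vector and none labeled $\xmark$; in particular, any descendant $w$ of $v$ in $G[V_i]$ is still a descendant of $v$ in $G'$ and was not "eliminated" by any sampled shortcutter, and moreover shares $v$'s relationship to every sampled shortcutter. The claim is that if $\RD_{G'}(v)$ were large — say larger than $nk^{-(r+1)} = N' $ — then with high probability some sampled shortcutter $u$ lands inside $\RD_{G'}(v)$ in a way that splits it. More carefully: restrict attention to $\RD_{G'}(v)$ as an induced subgraph and note it has a topological order; consider the vertex $u^\star \in \RD_{G'}(v)$ that is the $N'$-th vertex in a reverse-topological order (i.e., $u^\star$ has at most $N'$ descendants within $\RD_{G'}(v)$ but everything before it has more). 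If $N > N'$, then sampling $\RD_{G'}(v)$ at rate $p_r$ hits the "top $N - N'$" portion with probability $\ge 1 - (1-p_r)^{N-N'}$, and whichever such shortcutter $u$ is sampled either has $v$ as a descendant (placing $v$ and all of $\RD_{G'}(u) \cap \RD_{G'}(v)$ into a subproblem of size controlled by $u$'s descendant count) or is an ancestor of $v$ — and the key point is that any descendant $w$ of $v$ lying "below" $u$ in the topological order gets a different label than a descendant lying "not below" $u$, so $V_i$ can contain at most the descendants of $v$ that agree with $v$'s label on $u$. Chasing this through, the surviving descendant set in any child has size at most $N'$ with high probability, provided $p_r (N - N') = \Omega(\log n)$, which is exactly why $p_r$ was chosen to be $\Theta(k^{r+1}\log n/n)$: we get $p_r \cdot nk^{-(r+1)}(k-1) = \Omega(\log n)$. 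A symmetric argument handles ancestors.

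**Main obstacle.** The delicate part is making the "label-splitting" argument precise — showing that a single sampled shortcutter $u$ inside $\RD_{G'}(v)$ genuinely partitions $v$'s surviving descendants so that no child retains more than $nk^{-(r+1)}$ of them. One has to argue this for the *worst* child, and track that the relevant quantity is really "descendants of $v$ that are also descendants of $v$ within $G'$ and share all labels" — the subtlety is that $u$ might be an ancestor rather than descendant of $v$, in which case $u$'s being sampled doesn't obviously shrink $\RD(v)$; here one instead uses that $u$ being an ancestor of $v$ makes $\RD_{G'}(v) \subseteq \RD_{G'}(u)$, and since $u$ is *also* a shortcutter, a descendant $w$ of $v$ that is within distance/reach structure differently from $v$ relative to some *other* sampled vertex gets separated — actually the cleanest route is to apply the sampling/Chernoff argument (\cref{lemma:chernoff}) directly to the descendant-counts: order $\RD_{G'}(v)$ reverse-topologically as $w_1, w_2, \dots, w_N$ and observe that once we sample any $w_j$ with $j \le N - N'$, vertex $v$'s surviving descendant set inside its child subproblem is contained in $\{w_{j'} : w_{j'} \text{ has the same label w.r.t. } w_j \text{ as } v\}$, which excludes the $\ge N'$ vertices reachable from $w_j$ if $w_j \preceq$-relates to them differently than to $v$; the probability that *none* of $w_1, \dots, w_{N-N'}$ is sampled is $(1-p_r)^{N-N'} \le \exp(-p_r N (1 - k^{-1})) \le \exp(-\Omega(\log n)) \le n^{-12}$ for suitable constants. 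I expect writing this carefully, especially the case analysis of whether the sampled splitter is an ancestor or descendant of $v$ and confirming the label vectors genuinely differ, to be where most of the real work lies; the Chernoff/union-bound bookkeeping is routine.
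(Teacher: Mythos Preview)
Your plan --- induction on $r$, trivial base case, union bound over the at most $n$ subproblems across $O(\log_k n)$ levels --- matches the paper's, and you correctly identify that the key step is to sample a shortcutter $y \in \RD_{G'}(v)$, note that $v$ gets label $y^{\Anc}$, and deduce that any surviving descendant of $v$ in $v$'s child subproblem must also carry the label $y^{\Anc}$ and hence be an ancestor of $y$. But the direction of your topological argument is inverted, and consequently your probability bound breaks. You sample from the ``top $N-N'$'' portion of $\RD_{G'}(v)$ and bound the failure probability by $(1-p_r)^{N-N'} \le \exp(-p_r N(1-k^{-1}))$; however $N = |\RD_{G'}(v)|$ can be as small as $N'+1$ (the inductive hypothesis gives only $N \le nk^{-r}$, not equality), so $N-N'$ can be $1$ and the bound is vacuous. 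More fundamentally, sampling a shortcutter $y$ \emph{far} from $v$ in topological order leaves the survivor set $\RA_{G'}(y) \cap \RD_{G'}(v)$, which can be almost all of $\RD_{G'}(v)$ when $y$ is near a sink; your phrasing ``excludes the $\ge N'$ vertices reachable from $w_j$'' would at best leave $N-N'$ survivors, not $N'$.

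The paper's fix is to sample from the opposite end. Topologically order the strongly connected components of $\RD_{G'}(v)$ with $v$'s component first, and consider the \emph{first} $N' = nk^{-(r+1)}$ vertices --- those closest to $v$. If any such $y$ is sampled as a shortcutter, then every surviving descendant of $v$ is an ancestor of $y$ and hence precedes $y$ in the order, so there are fewer than $N'$ of them. The failure probability is then $(1-p_r)^{N'} \le \exp\bigl(-p_r \cdot nk^{-(r+1)}\bigr) = e^{-20\log n} = n^{-20}$, independent of $N$, and this is what the union bound needs. Note also that the case split you worry about --- whether the sampled $u$ is an ancestor or a descendant of $v$ --- never arises: since you are sampling from $\RD_{G'}(v)$, vertex $v$ is always an ancestor of $u$ and receives either $u^{\Anc}$ or $\xmark$; in the latter case $v$ lies in no child subproblem and there is nothing to prove.
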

\begin{proof}
We prove by induction on $r$. Clearly the claim is true for the one recursive call at $r = 0$. We will show that assuming the claim for all recursive calls with $r=j$ the result holds for all $r=j+1$ problems with probability at least $1-n^{-11}$. By applying union bound over all $\otilde(1)$ values of $r$ encountered in the algorithm implies the result.

Assume the result holds for every recursive execution with $r=j$. Let $v \in V$ be any vertex, and let $G'$ be the induced subgraph our algorithm is recursively called on with $r=j+1$ which contains $v$. We prove the claim for $\RD_{G'}(v)$ as the claim for $\RA_{G'}(v)$ follows by a symmetric argument.

Observe that the recursive call $\Seq(G',k,j+1)$ is ultimately called through an execution of $\Seq(H, k, j)$ on some $H \subseteq G$. Let $Q$ be the set of nodes in $V(G')$ which are descendants of $v$ in $H$. Now if $|Q| = \RD_{H}(v)$ is less than $n k^{-r}$ we are done since the induced subgraphs we recurse on only decrease in size. Thus assume $|Q| \geq n k^{-r}$.

Let $Q_1, Q_2, \cdots$ be the strongly connected components of $Q$, and consider any topological order over these subsets of $Q$, where $Q_i$ precedes $Q_j $ whenever a path from $Q_i$ to $Q_j$ exists.  Consider any $x,y \in Q$ where $y$ precedes $x$ in this order. We investigate the random choices in $\Seq(H,k,j)$ that lead to $G'$'s formation. Observe that if we chose $y$ as a shortcutter for $H$, $G'$ would not contain $x$ since $v$ is in $G'$ yet $x$ and $v$ receive different labels from $y$: $v$ is $y$'s ancestor but $x$ is either a descendant of or unrelated to $y$. Further, we observe that if we shortcut from $y$ any node $z$ in $y$'s strongly connected component would also fail to be in $G'$: $z$ would be given an $\xmark$ label.  Thus if we shortcut the graph with any of the $n k^{-j}$ nodes which are earliest in the topological order of $Q$ (which are closest to $v$ in $H$) we can guarantee that $v$ in the $G'$ level has at most $n k^{-j}$ descendants. Since we choose each node with probability $\frac{20 k^j \log n}{n}$ we fail to do this with probability at most 
\[
\Big( 1 - \frac{20 k^j \log n}{n} \Big)^{n k^{-j}} \leq e^{-20 \log n} = n^{-20}. 
\]
By union bounding over all vertices in $G'$ and over all induced subgraphs encountered at level $r = j$ we see that our bound holds for all recursive calls with $r=j$ with probability at least $1-n^{-11}$. The result follows.
\end{proof}

We now bound the number of labels any vertex $v$ receives in any recursive execution which contains it. This will provide us with an elegant way to bound the total work of our procedure.

\begin{lemma}
Consider an execution of $\Seq(G, k, 0)$ on $n$-node $m$-edge $G$ with $k \geq 2$. With probability $1 - 2n^{-10}$, every recursive execution $\Seq(G[V_i],k,r)$ assigns at most $80 k \log n$ labels to every node $w \in V_i$ in lines \ref{line9}, \ref{line10}, and \ref{line11}, where $\xmark$ labels assigned by different shortcutters are counted as distinct labels. 
\label{seq:labelbound}
\end{lemma}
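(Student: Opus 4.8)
\textbf{Proof plan for \cref{seq:labelbound}.}

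The plan is to fix a recursion level $r$ and a vertex $w$, and bound the number of labels $w$ receives \emph{within the single recursive execution} $\Seq(G[V_i],k,r)$ that contains it. A label is added to $w$ in lines \ref{line9}--\ref{line11} only by a shortcutter $v \in S$ for which $w \in R_{G[V_i]}(v)$, i.e. $v$ is an ancestor or descendant (or both) of $w$ \emph{in the current subgraph} $G[V_i]$. So the number of labels $w$ gets is at most (twice) the number of sampled vertices lying in $\RA_{G[V_i]}(w) \cup \RD_{G[V_i]}(w)$. The first step is therefore to invoke \cref{seq:rvbound}: with probability $1 - n^{-10}$, in every recursive execution at depth $r$ we have $|\RD_{G[V_i]}(w)| \le n k^{-r}$ and $|\RA_{G[V_i]}(w)| \le n k^{-r}$, hence $|R_{G[V_i]}(w)| \le 2 n k^{-r}$. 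Condition on this event.

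Next I would do the Chernoff calculation. Each of the at most $2 n k^{-r}$ vertices related to $w$ is independently placed in $S$ with probability $p_r = \min\{1, 20 k^{r+1}\log n / n\}$. If $p_r = 1$ then $n \le 20 k^{r+1}\log n$, so $2nk^{-r} \le 40 k \log n$, and $w$ gets at most $2 \cdot 40 k \log n = 80 k \log n$ labels deterministically (each related shortcutter contributes at most $2$ labels, counting an $\xmark$ and the possibility of being counted from both ancestor and descendant side — one should double-check the exact bookkeeping here, but the slack is ample). Otherwise the expected number of sampled related vertices is at most $2 n k^{-r} \cdot 20 k^{r+1}\log n / n = 40 k \log n$. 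Letting $X$ be this count, $X$ is stochastically dominated by $B(2nk^{-r}, p_r)$ with mean $\mu \le 40 k \log n$, and \cref{lemma:chernoff} with $\delta = 1$ gives $\Pr[X > 2\mu] \le \exp(-\mu/3) \le \exp(-(40 k \log n)/3) \le n^{-13}$ (using $k \ge 2$; in fact $k \ge 1$ suffices for a good enough bound). Wait — I should be careful that $\mu$ could be small if $k^{-r}$ is not tight; but the point is that the number of \emph{related} vertices is at most $2nk^{-r}$ and the sampling probability is fixed, so $\Pr[X > 40 k \log n]$ is controlled by a Chernoff bound whose exponent is $\Theta(k \log n)$ regardless. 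Thus each related shortcutter contributing at most $2$ labels to $w$, with probability $\ge 1 - n^{-13}$ the vertex $w$ receives at most $2 \cdot 40 k \log n = 80 k \log n$ labels.

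Finally I would union-bound. There are at most $n$ choices of $w$ and at most $\log_k n = \O(1)$ values of $r$, and for each $(w, r)$ pair the relevant recursive execution and its failure probability $n^{-13}$ are as above; summing gives failure probability at most $n \cdot \log_k n \cdot n^{-13} \le n^{-10}$ for the Chernoff failures. Adding the $n^{-10}$ failure probability from the conditioning on \cref{seq:rvbound} gives the claimed $1 - 2n^{-10}$. \textbf{The main obstacle} is not the probabilistic estimate, which is routine, but the bookkeeping: being precise about the fact that the relevant ancestor/descendant sets are those \emph{in the induced subgraph $G[V_i]$} (not in $G$), so that \cref{seq:rvbound} — which is stated for the subgraphs the algorithm recurses on — applies directly; and being precise that a single shortcutter adds at most two labels to $w$ across lines \ref{line9}--\ref{line11} (it cannot simultaneously make $w$ a strict ancestor, a strict descendant, and an $\xmark$; at most one of the three cases applies per shortcutter, so really one label per related shortcutter, and the factor $2$ in $80 k \log n$ versus $40 k \log n$ is pure slack). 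I would state the per-shortcutter bound as ``at most one label,'' carry $\le 40 k \log n$ related shortcutters through the Chernoff bound, and then note the stated $80 k \log n$ follows a fortiori.
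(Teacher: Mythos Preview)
Your proposal is correct and follows essentially the same approach as the paper: invoke \cref{seq:rvbound} to bound $|R_{G[V_i]}(w)| \le 2nk^{-r}$, apply the Chernoff bound (\cref{lemma:chernoff}) with mean $40k\log n$ and $\delta = 1$ to get failure probability at most $\exp(-\tfrac{40}{3}k\log n) \le n^{-12}$, then union bound over all vertices and recursion levels and add the $n^{-10}$ from conditioning on \cref{seq:rvbound}. The paper's proof is terser and does not separately handle the $p_r = 1$ case or explicitly discuss the stochastic domination when the related set is smaller than $2nk^{-r}$, but the substance is identical; your observation that each shortcutter assigns exactly one label (the three cases in lines \ref{line9}--\ref{line11} are mutually exclusive) is correct, and the factor of two in $80k\log n$ is indeed just slack coming from the Chernoff threshold $2\mu$.
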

\begin{proof}
Note that $v$ receives a label from a shortcutter $u$ only if $u$ is related to $v$. By \cref{seq:rvbound} we have that at most $2 n k^{-r}$ nodes are related to $v$ for all $v$ in all executions of $\Seq$ with probability $1-n^{-10}$. Since we pick nodes in the $r^{th}$ level with probability $p_r = \frac{20 k^{r+1} \log n}{n}$ we see that the probability that more than $80k\log n$ labels are given to $v$, assuming that at most $2nk^{-r}$ vertices are related to $v$, is at most
\[
\Pr \left[B\Big(2 n k^{-r}, \frac{20 k^{r+1} \log n}{n}\Big) > 80k \log n \right] \leq \exp\left(-\frac{40}{3} k \log n\right) \leq n^{-12}
\]
by \cref{lemma:chernoff}. Thus $v$ receives at most $80 k \log n$ labels with probability at least $1 - n^{-12}$. Union bounding this over all nodes in all recursive executions of $\Seq$ implies the result.
\end{proof}

Finally, we conclude this subsection by bounding the total work of $\Seq$, as well as the number of shortcut edges it adds. 

\begin{lemma}
\label{seq:workbound}
Consider an execution of $\Seq(G, k, 0)$ on $n$-node $m$-edge $G$ with $k \geq 2$. With probability $1-2n^{-10}$ $\Seq(G, k, 0)$ runs in $\otilde(mk)$ time and adds $\otilde(nk)$ shortcuts.
\end{lemma}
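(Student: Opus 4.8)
The plan is to condition on the high-probability events established in \cref{seq:rvbound} and \cref{seq:labelbound}, and then argue deterministically about the work and number of shortcuts, level by level of recursion. First, note that the recursion has depth at most $\log_k n = \O(1)$, so it suffices to bound the work and shortcuts contributed at any single recursion level by $\otilde(mk)$ and $\otilde(nk)$ respectively, and then sum over the $\O(1)$ levels (losing only a polylog factor). Fix a level $r$. The subgraphs $G[V_i]$ at level $r$ that descend from a common parent are vertex-disjoint, and in fact \emph{all} subproblems at level $r$ across the whole recursion tree have pairwise disjoint vertex sets (they are obtained by repeatedly partitioning $V$), hence their vertex counts sum to at most $n$ and, since induced subgraphs only shrink, their edge counts sum to at most $m$.

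Next I would bound the work within a single subproblem $\Seq(G',k,r)$ on an $n'$-node, $m'$-edge graph $G'$. Sampling $S$ takes $O(n')$ work. For each $v \in S$ we compute $\RD_{G'}(v)$ and $\RA_{G'}(v)$ by forward/backward graph search, which is $O(n'+m')$ work each, and we add one shortcut edge per vertex of $\RD_{G'}(v) \cup \RA_{G'}(v)$ and one label per such vertex. By \cref{seq:rvbound}, conditioned on its event, $|\RD_{G'}(v)|, |\RA_{G'}(v)| \le n k^{-r}$, so each shortcutter contributes $O(nk^{-r})$ shortcuts and $O(nk^{-r})$ labels, and $O(n'+m')$ search work. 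The number of shortcutters $|S|$ is a binomial $B(n', p_r)$ with $p_r \le 20k^{r+1}\log n / n$; a Chernoff bound (\cref{lemma:chernoff}) gives $|S| \le O(n' k^{r+1}\log n/n) + O(\log n)$ w.h.p., and union-bounding over all $\O(1)$ subproblems keeps this within the $2n^{-10}$ failure budget. Summing the search work $|S|\cdot O(n'+m')$ over all subproblems at level $r$: since $\sum n' \le n$ and $\sum m' \le m$ and each $|S| \le \O(n' k^{r+1}/n) + \O(1)$, the total is $\O(k^{r+1}/n)\sum_{i} (n'_i+m'_i)\cdot(\text{something})$ — here I must be slightly careful, because $|S|$ scales with the individual $n'_i$, so the bound is really $\sum_i \O(n'_i k^{r+1}/n + 1)\cdot O(n'_i + m'_i)$. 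Using $n'_i \le nk^{-r}$ from \cref{seq:rvbound} (applied one level up), each $|S_i| \le \O(k^{r+1}\cdot k^{-r}) = \O(k)$, so the per-subproblem search work is $\O(k)\cdot O(n'_i+m'_i)$ and the sum over the level is $\O(k(n+m)) = \O(mk)$, as desired. Similarly the number of shortcuts per subproblem is $|S_i|\cdot O(nk^{-r}) = \O(k)\cdot O(nk^{-r}) = \O(k \cdot n'_i)$ roughly, summing to $\O(nk)$; and I would also account for the $+\O(1)$ slack in $|S_i|$, which contributes at most $\O(1)\cdot O(nk^{-r})$ per subproblem and $\O(n)$ over a level after using disjointness.

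Finally, I would bound the work of the partitioning step (lines \ref{line13}--\ref{line15}): computing $W$ and grouping the vertices of $W$ by their label multisets. By \cref{seq:labelbound}, each vertex receives at most $80k\log n = \O(k)$ labels, so each vertex's label signature has size $\O(k)$; sorting or hashing the $n'$ signatures to form the partition costs $\O(n' k)$ work. Summed over all level-$r$ subproblems this is $\O(nk)$, absorbed into $\O(mk)$. Combining the sampling, searching, shortcut-adding, and partitioning costs over all $\O(1)$ levels gives total work $\O(mk)$ and total shortcuts $\O(nk)$. The failure probability is a union of the $n^{-10}$ event of \cref{seq:rvbound}, the $2n^{-10}$ event of \cref{seq:labelbound}, and the $\O(1)$ Chernoff bounds on the $|S_i|$, each of which can be made $n^{-\Omega(1)}$ and union-bounded to at most $2n^{-10}$ with appropriate constants.

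The main obstacle I anticipate is the bookkeeping in the work sum: one has to use \cref{seq:rvbound} \emph{at the parent level} to conclude that each level-$r$ subproblem has $n'_i \le nk^{-r}$ vertices (which makes $|S_i| = \O(k)$ rather than growing), while simultaneously using disjointness $\sum_i n'_i \le n$, $\sum_i m'_i \le m$ to avoid an overcount — the two facts together are what collapse the naive $\O(mk \cdot k^r)$ bound down to $\O(mk)$ per level. The second subtlety is handling the additive $\O(\log n)$ term in the Chernoff bound on $|S_i|$ (relevant when $p_r n'_i$ is small), which must be charged separately but is easily absorbed since there are only $\O(1)$ subproblems times $\O(1)$ levels.
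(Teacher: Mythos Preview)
Your argument has a genuine gap at the step where you assert $n'_i \le nk^{-r}$ ``from \cref{seq:rvbound} applied one level up.'' That lemma bounds only $|\RD_{G'}(v)|$ and $|\RA_{G'}(v)|$ for each individual vertex $v$; it says nothing about $|V(G')|$. A level-$r$ subproblem can contain nearly all of $V$: the partition cell consisting of vertices that received \emph{no} labels at the previous level (those unrelated to every shortcutter) is a single subproblem, and for instance if $G$ consists of $n$ isolated vertices then after one level all non-shortcutter vertices land in that one cell of size $n-|S|\approx n$. In such a subproblem $|S_i|\sim B(n'_i,p_r)$ has expectation $\Theta(n'_i k^{r+1}(\log n)/n)$, which can be $\Theta(k^{r+1}\log n)$ rather than $\O(k)$, so your bound $|S_i|\cdot O(n'_i+m'_i)$ becomes $\Theta(k^{r+1})\cdot O(n+m)$ and blows up across levels. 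The companion claim that ``there are only $\O(1)$ subproblems'' is also false --- there can be up to $n$ nonempty cells at each level --- so the additive Chernoff slack on $|S_i|$ cannot be absorbed the way you propose either.

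The paper's proof sidesteps bounding $|S_i|$ entirely. It charges the traversal work to the vertices being \emph{visited} rather than to the searches: \cref{seq:labelbound} says each vertex $w$ in a subproblem receives at most $\O(k)$ labels, and since $w$ receives a label from $v$ precisely when $w$ is reached by the forward or backward BFS from $v$, the number of times $w$ is visited across all searches in that subproblem is $\O(k)$. Each visit scans $w$'s incident edges, so the total traversal cost in a subproblem with $\hat m$ edges is $\sum_w \O(k)\cdot\deg(w)=\O(\hat m k)$; summing over the disjoint level-$r$ subproblems gives $\O(mk)$, and over the $\O(1)$ levels the same. The shortcut count is handled identically (each label corresponds to one shortcut edge, hence $\O(\hat n k)$ per subproblem and $\O(nk)$ per level). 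This swap of the order of summation --- from ``per shortcutter'' to ``per visited vertex'' --- is the missing idea; once you invoke \cref{seq:labelbound} this way, no separate Chernoff bound on $|S_i|$ is needed and the failure probability is exactly the $1-2n^{-10}$ inherited from \cref{seq:rvbound} and \cref{seq:labelbound}.
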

\begin{proof}
By our given probability of failure, we may assume \cref{seq:rvbound} and \cref{seq:labelbound} hold deterministically. 

We begin by considering a single recursive execution $\Seq(G[V_i], k ,r)$ generated by $\Seq$. We will bound the number of shortcuts added by this call and amount of work it performs before it recurses in \ref{line15}. We will then aggregate these bounds over all recursive executions and obtain our final result. For convenience, let $G[V_i]$ have $\hat{n}$ nodes and $\hat{m}$ edges. 

We first bound the number of shortcuts added by $\Seq(G[V_i], k ,r)$. By \cref{seq:labelbound} we observe that every recursive execution $\Seq(G[V_i], k, r)$ assigns $\otilde(k)$ labels to every $w \in V_i$. As each label corresponds to a shortcut we add in lines \ref{line7} and \ref{line8}, we see that $\Seq(G[V_i], k, r)$ adds $\otilde(k)$ edges to every $w \in V_i$: this is $\otilde(\hat{n}k)$ edges in total. 

We now bound the work performed by $\Seq(G[V_i], k ,r)$. Within a call to $\Seq$, we perform work in two places: within the loop in line \ref{line6} and when generating the partition in line \ref{line13}. We bound the contributions of these sources in order. First, observe that the loop in \ref{line6} can be implemented by computing breadth-first searches forwards and backwards from every $w$ in the shortcutter set $S$. The amount of work needed to apply the labels and and the shortcuts themselves is clearly $\O(\hat{n}k)$ by the above argument, so we need only to bound the cost of running these traversals. 

Observe that by \ref{seq:labelbound} $\Seq(G[V_i], k ,r)$ assigns $\O(k)$ labels to every $w \in V_i$. Now the number of labels $w$ receives is within a factor of two the number of times it is visited in searches. Thus $w$ is visited $\O(k)$ times in our traversals. Each time we encounter $w$ in a traversal we perform a constant amount of work for each edge incident upon it. Thus if $\delta_i(w)$ is the undirected degree of $w$ in $G[V_i]$, the total work performed by $\Seq(G[V_i], k ,r)$ is 
\[ \O\left(\sum_{w \in V_i} k \delta_i(w)\right) = \O(\hm k). \]

We finally bound the cost of generating the partition in line \ref{line13}. We implement this in two parts. First, we check each vertex to see whether it has an $\xmark$ label and discard any vertex which does. Next, we define an order over all possible combinations of labelings a node could receive. We then sort the remaining nodes by this order: we can then trivially read off the partition. To implement this order of labelings, pick an arbitrary ordering on the individual labels we distribute to nodes. To compare two labeling schemes $a$ and $b$ we internally sort $a$ and $b$ by our arbitrary ordering, and then determine the order amongst $a$ and $b$ lexicographically.

To implement this procedure, we first note that by \cref{seq:labelbound} every node receives at most $\otilde(k)$ labels. Determining which of the $\hat{n}$ nodes have an $\xmark$ label clearly takes $O(\hat{n}k)$ time. It is straightforward to verify that comparing two labelings each with at most $\otilde(k)$ labels with this scheme requires $\otilde(k)$ time: thus this partitioning can be found in $\otilde(\hat{n}k)$ time using a mergesort. Combining this with the previous bound we see that $\Seq(G[V_i], k ,r)$ requires $\otilde(\hat{m}k)$ time before recursing. 

We now obtain our final work and shortcut bounds by aggregating. If we consider the set of recursive calls $\Seq(H,k,r)$ for any fixed value of $r$, we see that the calls are applied to a disjoint collection of subgraphs of $G$. Thus, the total number of nodes in all of these subproblems is $n$, and the total number of edges is at most $m$. Thus cost of performing all of these calls without recursing is $\otilde(mk)$, and these calls collectively add $\otilde(nk)$ shortcuts. As there are at most $\otilde(1)$ different values of $r$ our claim follows. 
\end{proof}

\subsection{Path Related Nodes and Main Helper Lemma}
\label{seq:path_related}

We now prove a significant helper lemma that will enable us to prove our diameter bound. We begin with some context. Recall that the goal of our algorithm is, for any path $P \in G$ with endpoints $s$ and $t$, to find a bridge for $P$.  If in a recursive call to $S$ we succeed in finding a bridge for $P$ we add the corresponding shortcuts to connect $s$ and $t$ with a length $2$ path: there is nothing more for us to do. If instead we do not find a bridge in $S$, we observe that $P$ gets split amongst several different node-disjoint subproblems: we then seek to find bridges for each of these subproblems separately. Thus the collection of paths $P_j$ represents the ``residual" paths left for our algorithm to resolve: we either pick a bridge and entirely resolve the path or split it into pieces. While this splitting of the path may seem counterproductive, we show that the total number of path-related vertices in the next recursion level summed over all $P_i$ decreases significantly in expectation when we recurse. We thus can ensure some form of progress whether we resolve the path or not.

In the below lemma, for a path $P'$ in a subgraph $G' \subseteq G$, we define $s(P', G')$ to be the number of vertices in $G'$ that are related to $P'$, as was done in \cref{sec:prelim}.

\begin{lemma}
\label{seq:2k+1lemma}
Let $G$ be a digraph and let $P$ be a path in $G$. Let $T$ be a uniformly random subset of $V[G]$, where any node $v \in V[G]$ is in $T$ with some probability $p$. Define $S = T \cap R_G(P)$, and let $|S| = t$. Consider running lines 5-13 of $\Seq(G,k,r)$ (\cref{algo:seq}) with this choice of $S$. Then there exists a partition of $P$ into exactly $t+1$ (possibly empty) subpaths $P_1, P_2, \cdots, P_{t+1}$ which satisfies the following conditions:
\begin{enumerate}
    \item If $S$ contains a bridge for $P$, then all the $P_j$ are empty. 
    \item If $S$ contains no bridges for $P$, then the vertex disjoint union of the $P_j$ is exactly $P$.
    \item Each $P_j$ is inside some $G[V_{f(j)}]$ generated by $\Seq$ for a recursive execution for some $f(j)$.
\end{enumerate}
Further, 
\[
\mathbb{E}_{|S| = t} \left[ \sum_i s(P_i, G[V_{f(i)}]) \right] \leq \frac{2}{t+1} \cdot s(P,G).
\]
Here, the expectation is conditioned on the event that $|S| = t$; equivalently, we may take the expectation over a uniformly random subset of $t$ elements from $R_G(P)$.
\end{lemma}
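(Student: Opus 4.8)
The plan is to set up the partition of $P$ explicitly from the labels assigned by the shortcutters in $S$, and then analyze the expected size of $\sum_i s(P_i, G[V_{f(i)}])$ by a charging argument. First I would handle case 1: if some $v \in S$ is a bridge for $P$ (i.e. $\head(P) \pe v \pe \tail(P)$), then shortcutting through $v$ connects the endpoints of $P$ via a length-$2$ path, so we can declare all $P_j$ empty; this trivially satisfies the inequality since the left side is $0$. So assume henceforth that $S$ contains no bridge for $P$. In that case, walk along $P = \langle v_0, \dots, v_\ell \rangle$; each vertex $v_a$ on $P$ gets some set of labels (none of them $\xmark$, since an $\xmark$ on a path vertex would mean that shortcutter is a bridge — this is the key observation making the partition well-defined). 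Moreover, as we move along the path from $v_0$ to $v_\ell$, for any fixed shortcutter $v \in S$ the label on $P$-vertices changes monotonically: a prefix of $P$ may be unrelated to $v$ or descendants-of... wait, more carefully, since $v$ is not a bridge, $v$ is an ancestor of a (contiguous, suffix) part of $P$ or a descendant of a (contiguous, prefix) part of $P$ or unrelated to all of $P$ — this is exactly cases 1–3 from the overview. So each shortcutter induces at most one "breakpoint" on $P$, and with $t = |S|$ shortcutters we get at most $t$ breakpoints, hence $P$ splits into at most $t+1$ contiguous subpaths $P_1, \dots, P_{t+1}$ (pad with empty paths to get exactly $t+1$), and consecutive vertices in the same piece have identical label-sets, so each $P_j$ lies inside a single $G[V_{f(j)}]$. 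This establishes conditions 1–3.

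For the expectation bound, fix the set $R_G(P)$ of path-related vertices, $|R_G(P)| = s(P,G) =: s$, and condition on $|S| = t$, i.e. $S$ is a uniformly random $t$-subset of $R_G(P)$ (the non-path-related choices in $T$ are irrelevant to the partition and to which pieces the $P_j$ land in, so we may ignore them). For each $u \in R_G(P)$ I want to bound $\Pr[u \text{ is related to } P_{f^{-1} \text{piece containing } u}\ldots]$ — more precisely, $\sum_i s(P_i, G[V_{f(i)}])$ counts pairs $(u, i)$ where $u \in V(G[V_{f(i)}])$ and $u$ is related (through $G[V_{f(i)}]$) to $P_i$. Since the $V_{f(i)}$ are disjoint and a given $u$ has a fixed label-set, $u$ can contribute to at most one piece $P_i$ — namely the piece whose vertices carry exactly $u$'s label-set (if any). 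So $\sum_i s(P_i, G[V_{f(i)}]) = \sum_{u \in R_G(P)} \mathbbm{1}[u \text{ survives and lands with its matching piece and is related to it}]$, and it suffices to show each such indicator has probability at most $\frac{2}{t+1}$. For a related vertex $u$, consider its position relative to $P$: roughly, $u$'s "reach-interval" on $P$ — the set of path-vertices $v_a$ that $u$ is related to — is a contiguous subpath (by the same monotonicity as above, since $u \pe \text{ some suffix}$ and/or $u \se \text{ some prefix}$ of $P$, unless $u$ itself is a bridge, but a bridge $u \in R_G(P)$ that's not in $S$ still reaches all of $P$ so it gets split off and is unrelated to every nonempty $P_i$ in its own subproblem... actually I'd need to check this doesn't cause contribution, but a path-bridge ends up in a $V_i$ that no path vertex is in, so it contributes $0$). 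For $u$ to contribute, $u$ must survive (not be $\xmark$'d, which happens iff no shortcutter in $S$ is in $u$'s SCC and on both sides... ) AND the piece carrying $u$'s label-set must be nonempty and $u$ must be related to it. The standard argument (mirroring Fineman's $3/4$-type bound but sharpened for $t$ samples) is that if $u$'s reach-interval on $P$ is "large", then with good probability one of the $t$ random shortcutters separates $u$ from the relevant breakpoints; quantitatively one shows $\Pr[\text{contribute}] \le \frac{2}{t+1}$ by a combinatorial identity over the $\binom{s}{t}$ choices of $S$, using that each of the $t$ samples independently has a constant-ish chance of "cutting off" $u$'s relation.

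The main obstacle — and the heart of the lemma — is the quantitative $\frac{2}{t+1}$ bound: getting the right constant. I would argue it as follows. Order $R_G(P)$ along $P$ by, say, the first path-vertex each related vertex reaches (breaking ties consistently); this gives each $u$ a rank. Then I claim that $u$ contributes to the sum only if, among the $t$ sampled shortcutters, at most one "falls on the $u$-side" in an appropriate sense relative to $u$'s breakpoints — concretely, $u$'s matching piece is the one between two consecutive breakpoints $b_{j}, b_{j+1}$ (coming from shortcutters $v^{(j)}, v^{(j+1)} \in S$) such that $u$ is related to all path vertices strictly between them, which forces that no other sampled shortcutter lies "between" $v^{(j)}$ and $v^{(j+1)}$ in the relevant ordering of $R_G(P)$; counting, the probability that a uniformly random $t$-subset of an $s$-set has no element strictly between two particular "consecutive" chosen elements is a hypergeometric-type quantity that works out to at most $\frac{2}{t+1}$ after summing over the $\le t+1$ pieces $u$ could match and the two "sides". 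Essentially this is the discrete statement: if you drop $t$ uniform random points into $s$ slots and look at the gap structure, the expected total "mass of related vertices landing in gaps" is $\frac{2}{t+1} s$ because the $t$ points cut the related-set into $t+1$ groups of expected size $\frac{s}{t+1}$ each, and each related vertex is related to at most $2$ of the resulting pieces (one on account of being below, one on account of being above — or just one). I would write this out carefully as a double-counting / linearity-of-expectation over ordered pairs $(S, u)$, since that is cleaner than per-$u$ conditional probabilities; the factor $2$ rather than $1$ is exactly what lets a related vertex be charged to both an "ancestor piece" and a "descendant piece" of $P$, which is why the bound is $\frac{2}{t+1}$ and not $\frac{1}{t+1}$.
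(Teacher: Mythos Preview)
Your construction of the partition (conditions 1--3) is essentially the paper's: each non-bridge shortcutter cuts $P$ at one index, so $P$ falls into at most $t+1$ label-constant pieces. That part is fine.

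The gap is in the $\tfrac{2}{t+1}$ bound. You have the right high-level shape --- write $\sum_i s(P_i,G[V_{f(i)}])$ as $\sum_{u\in R_G(P)}\mathbbm{1}[u\text{ survives}]$ and show each indicator has probability at most $\tfrac{2}{t+1}$ by viewing $\{u\}\cup S$ as a uniform $(t{+}1)$-sample from $R_G(P)$ --- but your justification via a single linear ordering of $R_G(P)$ and ``gap structure'' does not go through. Two concrete problems:
\begin{itemize}
\item The ordering by ``first path-vertex reached'' ignores the reachability relations \emph{among} the related vertices themselves. Whether $u$ lands in the same $V_i$ as a given path piece is determined by the labels $u$ receives from the shortcutters, and a shortcutter $v$ that is an ancestor of $u$ (i.e.\ $u$ gets $v^{\Anc}$) separates $u$ from \emph{every} path piece regardless of where $\alpha(u)$ sits in the gap structure. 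So ``$u$ falls in the right gap'' is neither necessary nor sufficient for $u$ to survive.
\item Your side remark that a non-sampled bridge $u$ ``ends up in a $V_i$ that no path vertex is in'' is false: a bridge $u$ can receive exactly the same label set as some $v_i$, since for an ancestor shortcutter $v$ one has $u\not\preceq v$ (else $v_0\preceq u\preceq v$ would contradict $v$ being a pure ancestor), so $u$ gets either $v^{\Des}$ or no label from $v$ --- the same two possibilities a path vertex has. Bridges must be handled, not dismissed.
\end{itemize}

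The paper's argument supplies exactly the missing combinatorial fact. It treats ancestors and descendants separately and proves (\cref{seq:1llemma}): among any $l{+}1$ path-ancestors, if you shortcut through $l$ of them, \emph{at most one} choice of the remaining vertex can stay path-relevant --- namely the one that is (i) strictly minimal in the ancestor partial order among the $l{+}1$, and (ii) among those unrelated to it, has the smallest $\alpha$-value. This is the non-obvious step, and it is what makes the ``at most $2$ out of $t{+}1$'' count rigorous (one from the ancestor side, one from the descendant side). The paper then conditions on the ancestor/descendant/bridge split of the $(t{+}1)$-sample, uses $\E[(X+1)^{-1}]<\frac{1}{p(n+1)}$ for $X\sim B(n,p)$ to average over the split, and closes with the elementary inequality $(1-g)^t\bigl(\tfrac{2}{t+1}+g\bigr)\le\tfrac{2}{t+1}$ to absorb the bridge probability $g$. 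Your double-counting sketch would need all three of these ingredients to become a proof.
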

Before we prove this lemma, we describe our general proof strategy. Depending on what $S$ is, we will construct a partition of $P$ which satisfies our four constraints. If the set $S$ contains a bridge for $P$, our partition will be empty: $P$'s endpoints are connected through the bridge. If $S$ does not contain any bridges, we simply consider all nonempty subpaths of $P$ inside the recursively generated subproblems induced by $S$: we will show that this partition does not form too many pieces. To prove the expected decrease in the number of path-related nodes, we will explicitly use the randomness of $S$. Consider any set $C$ of ancestors and decendants of $P$. Let $v$ be a vertex inside $C$, and imagine shortcutting $P$ with $C-\{v\}$ and forming the subpaths by our partition. Now consider the event that $v$ is path-relevant for one of these induced subpaths. We will show that there are at most two choices of $v$ from $C$ such that this happens. The result follows with some computation.
\begin{lemma}
    	\label{seq:1llemma}
        Let $P$ be a path, and let $A = \{a_1, a_2, \cdots, a_{l+1}\}$ be ancestors (resp. descendants) of nodes in $P$. If we pick $a_i$ at random from this collection and shortcut using the other $l$ points, $a_i$ is path-relevant for one of the subproblems with probability at most $1/(l+1)$. 
    \end{lemma}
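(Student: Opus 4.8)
The plan is to prove the stronger combinatorial statement that \emph{at most one} of $a_1,\dots,a_{l+1}$ is path-relevant to a subproblem produced by shortcutting with the other $l$ points; since $a_i$ is chosen uniformly, the probability bound $1/(l+1)$ is then immediate. I treat the ``ancestors'' case (the ``descendants'' case follows by reversing every edge). Write $P=\langle v_0,\dots,v_\ell\rangle$ and recall that each $a\in A$ is an ancestor of $P$, so $a\pe v_\ell$ but $v_0\not\pe a$. The first thing I would record is that then \emph{no} vertex of $P$ can reach $a$: $v_m\pe a$ would give $v_0\pe v_m\pe a$, a contradiction. Consequently, when we shortcut $P$ with any subset of $A$: (i) no vertex of $P$ ever gets an $\xmark$ label; (ii) from a shortcutter $a$ a vertex $v_m$ receives the label $a^{\Des}$ when $a\pe v_m$ and no label otherwise; and (iii) if $\alpha(a)$ denotes the least index with $a\pe v_{\alpha(a)}$ (well-defined since $a\pe v_\ell$), then the label vector of $v_m$ is determined by $\{a:\alpha(a)\le m\}$, which grows monotonically in $m$. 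Hence after shortcutting with $A\setminus\{a_i\}$ the whole of $P$ survives, cut into contiguous subpaths delimited by the distinct values among $\{\alpha(a_j):j\ne i\}$; and any vertex sharing a subproblem with path vertices can only carry $\Des$-labels.

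\textbf{Key structural claim.} Next I would prove: if $a_i$ is path-relevant to a subproblem after shortcutting with $A\setminus\{a_i\}$, then $a_i$ lands in the \emph{same} subproblem as $v_{\alpha(a_i)}$. Suppose $a_i$ is relevant to the subpath $P^{(i)}=\langle v_c,\dots,v_d\rangle$ sitting in the induced subproblem $G[V_f]\ni a_i$; by observation (i) $a_i$ can only reach, not be reached by, path vertices, so there is a path inside $G[V_f]$ from $a_i$ to some $v_e$ with $c\le e\le d$, giving $\alpha(a_i)\le e\le d$. To see $\alpha(a_i)\ge c$, assume $\alpha(a_i)<c$ and let $v_{c'}$ be the first path vertex on that $G[V_f]$-route, so $c\le c'\le d$ and (since the route lies in $G$) $\alpha(a_i)\le c'$. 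As $a_i\in V_f$, its label vector equals that of $v_{c'}$ and so consists only of $\Des$-labels; thus for every $a_j$ with $\alpha(a_j)\le c'$ the vertex $v_{c'}$, hence $a_i$, carries $a_j^{\Des}$, forcing $a_j\pe a_i$; but then $a_j\pe a_i\pe v_{\alpha(a_i)}$ gives $\alpha(a_j)\le\alpha(a_i)$. So $\{a_j:\alpha(a_j)\le c'\}=\{a_j:\alpha(a_j)\le\alpha(a_i)\}$, meaning $v_{\alpha(a_i)}$ and $v_{c'}$ have identical label vectors, so $v_{\alpha(a_i)}\in V_f$ and therefore lies in $P^{(i)}$, contradicting $\alpha(a_i)<c$. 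Hence $c\le\alpha(a_i)\le d$, so $a_i$ and $v_{\alpha(a_i)}$ share the subproblem $V_f$.

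\textbf{Uniqueness.} Finally I would conclude. Suppose $a_i$ and $a_j$ ($i\ne j$) were both path-relevant, and, relabeling, $\alpha(a_i)\le\alpha(a_j)$. Since $a_i$ shares a subproblem with path vertices it receives no $\Anc$ or $\xmark$ label, so $a_i\not\pe a_j$. Apply the structural claim to the run that removes $a_j$: there $a_j$ shares a subproblem with $v_{\alpha(a_j)}$. In that run $a_i\in A\setminus\{a_j\}$ is a shortcutter, and $\alpha(a_i)\le\alpha(a_j)$ gives $a_i\pe v_{\alpha(a_j)}$, while $v_{\alpha(a_j)}\not\pe a_i$ by observation (i); hence $v_{\alpha(a_j)}$ carries the label $a_i^{\Des}$. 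Since $a_j$ has the same label vector as $v_{\alpha(a_j)}$, it also carries $a_i^{\Des}$, which forces $a_i\pe a_j$ --- contradicting $a_i\not\pe a_j$. Thus at most one of $a_1,\dots,a_{l+1}$ is path-relevant, so a uniformly chosen $a_i$ is path-relevant with probability at most $1/(l+1)$.

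\textbf{Main obstacle.} The delicate step is the structural claim: the bookkeeping of label vectors along a route that is required to stay inside one induced subproblem, and in particular ruling out that $a_i$ could be relevant to a subpath lying strictly ``ahead of'' $v_{\alpha(a_i)}$. The setup observations and the final uniqueness deduction are routine once the behaviour of $\Des$-labels along $P$ is pinned down.
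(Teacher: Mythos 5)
Your proof is correct, and it establishes the same strengthened claim (at most one $a_i$ survives) that the paper proves, using the same basic ingredients: the index function $\alpha$, the observation that a surviving vertex must be a $\pe$-minimal element of $A$, and the fact that a vertex sharing a subproblem with a path vertex must share its label vector, which for path vertices consists only of $\Des$-labels and is monotone along $P$. Where you diverge from the paper is in the intermediate lemma: the paper phrases the second necessary condition as ``among all $a_j$ \emph{unrelated} to the survivor, the survivor has the strictly smallest $\alpha$,'' and then proves uniqueness by splitting the two hypothetical survivors into related vs.\ unrelated cases. You instead prove a structural claim — the survivor must land in the \emph{same subproblem} as $v_{\alpha(a_i)}$ — and then obtain the contradiction in a single case by combining that claim (applied to the run omitting the larger-$\alpha$ candidate) with minimality (obtained from the run omitting the smaller-$\alpha$ candidate). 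The two intermediate claims are in fact logically equivalent given minimality, so the content is the same, but your packaging avoids the related/unrelated case split and makes the final step uniform at the cost of a somewhat more delicate route-tracing argument inside $G[V_f]$. One small point to keep in mind when invoking your structural claim is that the label vector of a path vertex $v_m$ is precisely $\{a_j^{\Des} : j\ne i,\ \alpha(a_j)\le m\}$ only because all elements of $A$ are strict ancestors of $P$ (so $v_m \not\pe a_j$, ruling out $\xmark$ and $\Anc$ labels on the path); you do record this in observation (i), and it is what makes the ``same label vector as $v_{\alpha(a_i)}$'' comparison go through.
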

\begin{proof}
         Index the path from head to tail \footnote{We assign the head an index of 0.}, and let $\alpha(x)$ be the lowest-indexed node $p$ such that $x \preceq p$. Assume node $a \in A$ remains path-relevant after shortcutting through all the other $a_i$.  We will show that $a$ must satisfy the following: 
        \begin{itemize}
            \item $a$ is a strictly minimal element amongst the $A$: no $a_i$ can have $a$ as an ancestor.
            \item Amongst all $a_i$ which are unrelated to $a$, $\alpha(a) < \alpha(a_i)$.
        \end{itemize}
        For the first claim, assume for the sake of contradiction that some $a_j$ had $a$ as an ancestor. Note that since we shortcut from $a_j$ it labels $a$ with either $a_j^{\Anc}$ or $\xmark$ depending on whether $a$ an reach $a_i$. In the latter case we are done since we do not recurse on nodes which receive $\xmark$. In the former case we observe that all the $a_i$ can reach the path yet no node in the path can reach any $a_i$ since the $a_i$ are all ancestors. Thus every node $p \in P$ receives either no label from $a_j$ or a $a_j^{\Des}$ label. As $a$ receives a different label from every node in $P$ we conclude that it cannot be path-relevant to any path subproblem in the next level of the recursion.
        
        For the second claim, assume there existed $a_j$ which was unrelated to $a$ such that $\alpha(a_j) \leq \alpha(a)$. This implies that if $a$ is an ancestor to a node $p \in P$ then $a_j$ is also an ancestor for $p$. Thus every node in the path is either unrelated to $a$ or a descendant of $a_j$. As all of $a_j$'s descendants get a label $a_j^{\Des}$ yet $a$ receives no label from $a_j$, we conclude that $a$ can only be placed in a subproblem (if at all) with a piece of the path that it is unrelated to: thus $a$ ceases to be path relevant. 
        
        We now show that there is at most one vertex amongst $A$ which satisfies both of these conditions. Assume for the sake of contradiction that both $a_i$ and $a_j$ would remain path relevant if we shortcutted through $A \bs a_i$ and $A \bs a_j$ respectively. If these two vertices were related, then by the first of our conditions the one which was an ancestor will not remain path-related; contradiction. If the two vertices were unrelated, then $\alpha(a_i) \geq \alpha(a_j)$ or vice-versa: the vertex with the smaller $\alpha$ value cannot remain path related by our second condition. Thus at most one node satisfies our condition, and the claim follows. 

        An identical proof can be used in the case where the $a_i$ are all path descendants. 
    \end{proof}
With this, we complete the proof of \cref{seq:2k+1lemma}.
\begin{proof}[Proof of \cref{seq:2k+1lemma}]
We begin by defining the partition $P_i$. If $S$ contains a bridge for $P$, we set all the $P_i = \emptyset$: this clearly satisfies the conditions. If $S$ does not contain a bridge, we look at the induced subgraphs generated by $\Seq$ in a recursive call. Let $Q_i$ denote $P$ intersected with $V_i$-- the part of $P$ that lies in $G[V_i]$. We choose the $P_j$ to be exactly those $Q_i$ which are nonempty. We need only show that there are at most $|S|+1$ nonempty $P_j$: conditions $1, 2, 3$ are trivial. Observe that as $S$ does not contain any bridges, every path-relevant shortcutter picked is either an ancestor or a descendant of $P$. Thus every shortcutter $s \in S$ induces a ``cut" of $P$ into two contiguous pieces each assigned a different label from $P$. It is straightforward to verify that this implies that $P$ can be split into at most $|S|+1$ contiguous regions each internally with the same labels: this forms our partition $P_i$. 

We now turn our attention to the second fact. Consider picking a random point $v$ from $R_G(P)$. We will show that $v$ is counted in some $s(P_i, G[V_{f(i)}])$ (that is, it remains path-relevant for some subpath) with probability at most $\frac{2}{t+1}$.  Let the points of $S$ be $s_1, s_2, \cdots, s_t$, and observe that the $t+1$ points $v, s_1, \cdots, s_t$ form a random sample from $s(P,G)$. Amongst the $s_1, \text{\ldots} s_t$ we chose, some points are path ancestors, some are path descendants, and some are both (bridges). Assume that there are $\alpha$ ancestors, $\delta$ descendants, and $\beta$ bridges amongst the $s(P,G)$ path related points. We relabel the $s_i$ as $x_1, \cdots, x_a$, $y_1, \cdots, y_d$, and $z_1, \cdots, z_b$ where the $x$ are path ancestors, the $y$ are path descendants, and the $z$ are path bridges. Here, among the $s_i$ there are $a$ ancestors, $d$ descendants, and $b$ bridges. There are two cases: either we picked at least one path bridge (in which case the endpoints of the path are linked and the path gets completely resolved) or we picked $0$ bridge nodes. The probability that we picked $0$ bridge vertices is at most 
    \[
    \left(1 - \frac{\beta}{s(P,G)} \right)^t.
    \]
Now as $v$ is also a randomly chosen vertex: with probability $p = \frac{\alpha}{s(P,G)}$ it is an ancestor, with probability $q = \frac{\delta}{s(P,G)}$ it is a descendant, and with probability $g = \frac{\beta}{s(P, G)}$ it is a bridge. If we condition on the event that none of the $s_i$ were bridges, we observe that $a \sim B(t,\frac{p}{p+q})$ and $d \sim B(t,\frac{p}{p+q})$. Thus the probability $v$ is path relevant is at most 
    \[
    \mathbb{E} \left[ \frac{p}{a+1} + \frac{q}{d+1} \right] + g.
    \]
    by simply applying \cref{seq:1llemma} to the cases where $v$ is an ancestor and $v$ is a descendant separately and union bounding the events. We now recall a useful fact: if $X \sim B(n,p)$, $\mathbb{E}[(X+1)^{-1}] = \frac{1 - (1-p)^{n+1}}{p(n+1)} < \frac{1}{p(n+1)}$. By applying this fact we observe that the probability $v$ is path relevant conditioned on us never picking a bridge shortcut is at most $\frac{2}{t+1} + g$.
    Thus by multiplying by the chance of us never picking a bridge, we see that the final probability $v$ survives is at most 
    \[
        \Big(1 - \frac{\beta}{s(P,G)} \Big)^t \Big( \frac{2}{t+1} + g \Big) = (1-g)^t \Big( \frac{2}{t+1} + g \Big).
    \]
    This can be verified to be at most $\frac{2}{t+1}$, and the result follows. 
\end{proof}

\subsection{Recursion and Inductive Diameter Bound}
\label{sec:recurse_and_diam}

With this helper lemma in place, we now use it to prove our claimed diameter bound. 

\begin{lemma}[Inductive diameter bound]
\label{seq:diambound}
Consider running $\Seq(G, k, 0)$, and consider a recursive execution $\Seq(G[V'], k, r)$ with path $P'$ inside $G[V']$. Let $t \leq \log_k n$ be the largest value of $r$ ever encountered in our recursive calls.  If we complete our algorithm's recursion from $\Seq(G[V'], k, r)$, the expected distance from $\head(P')$ to $\tail(P')$ after applying our computed shortcuts is at most $(4\sqrt{2})^{\br} s(P', G')^{1/2}$, where $\br = \log_k n - r.$ \footnote{We note that the bound obtained here is weaker than the one obtained in the parallel setting. We give a less tight analysis for this lemma in pursuit of a shorter proof.}
\end{lemma}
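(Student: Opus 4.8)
The plan is to prove \cref{seq:diambound} by induction on $\br = \log_k n - r$, working from the deepest level of recursion ($\br = 0$, i.e. $r = t$) back up to the top. The statement to establish is that the expected $\head(P')$-to-$\tail(P')$ distance after adding all shortcuts computed in the sub-recursion rooted at $\Seq(G[V'],k,r)$ is at most $(4\sqrt2)^{\br}\, s(P',G')^{1/2}$.

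\textbf{Base case.} When $\br = 0$ we are at the bottom of the recursion, so $s(P',G')$ is at most some small quantity (a constant or $O(1)$ after accounting for the sampling probability $p_t$ being $1$). Since the trivial distance bound from $\head(P')$ to $\tail(P')$ is the length of $P'$, which is at most $s(P',G')$, and $s(P',G') \le s(P',G')^{1/2}$ once $s(P',G') \le 1$, the claim holds; more carefully, when $p_r = 1$ the level-$r$ call picks every vertex of $R_{G'}(P')$ as a shortcutter, hence either finds a bridge (distance $2$) or the residual subpaths are all singletons and the shortcuts directly connect consecutive vertices, yielding distance at most $s(P',G')$, which is at most $(4\sqrt2)^0 s(P',G')^{1/2}$ once we check $s(P',G')$ is $O(1)$ there. (This is the ``less tight'' spot the footnote alludes to; I would just absorb constants.)

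\textbf{Inductive step.} Assume the bound holds for all recursive executions at depth $r+1$. Consider $\Seq(G[V'],k,r)$ with path $P'$, and write $x = s(P',G')$. Run lines 5--13 with the random shortcutter set; condition on $|S \cap R_{G'}(P')| = q$. By \cref{seq:2k+1lemma} there is a partition of $P'$ into $q+1$ subpaths $P'_1,\dots,P'_{q+1}$, each living inside some level-$(r+1)$ subproblem $G[V_{f(i)}]$, with
\[
\E\!\left[\sum_i s(P'_i, G[V_{f(i)}]) \;\middle|\; |S\cap R_{G'}(P')| = q\right] \le \frac{2}{q+1}\,x .
\]
If $S$ contains a bridge the distance is $2$ and we are done, so assume not. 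Applying the inductive hypothesis to each $P'_i$ (noting each sub-call is at $\br - 1$), the expected total distance is at most $\sum_i (4\sqrt2)^{\br-1} s(P'_i,G[V_{f(i)}])^{1/2}$, plus at most $q$ extra edges to stitch the $q+1$ residual subpaths together at the cut vertices (each cut vertex is a shortcutter, so its shortcut edges bridge adjacent pieces in $O(1)$ hops — I need to confirm the exact constant, but it is at most $2q$ or so). Taking expectations over $q$ and over the per-level randomness, and using concavity of $\sqrt{\cdot}$ together with Cauchy--Schwarz / power-mean to pass the sum of square roots to $\sqrt{(q+1)\sum_i s(P'_i,\cdot)}$, I would bound
\[
\E[\text{dist}] \;\le\; \E_q\!\left[ (4\sqrt2)^{\br-1}\sqrt{(q+1)\cdot \tfrac{2}{q+1}\,x} \;+\; 2q \right] \;=\; \E_q\!\left[ (4\sqrt2)^{\br-1}\sqrt{2x} + 2q\right].
\]
Then I control $\E_q[q]$: the expected number of path-relevant shortcutters is $p_r \cdot x$, which by the design of $p_r = 20k^{r+1}\log n / n$ and the bound $x \le s(P',G') \le 2nk^{-r}$ from \cref{seq:rvbound} is $O(k\log n)$ — but crucially, to get the clean $\sqrt{x}$ recursion one balances so that the ``$2q$'' term is dominated by the main term. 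The key algebraic identity is that $(4\sqrt2)^{\br-1}\sqrt{2x} = \frac{1}{4}(4\sqrt2)^{\br} x^{1/2}/\sqrt{x}\cdot\sqrt{2x}$... more directly, $(4\sqrt2)^{\br-1}\cdot\sqrt2 = (4\sqrt2)^{\br}/4$, so the main term is $\tfrac14 (4\sqrt2)^{\br} x^{1/2}$, leaving a factor-$4$ slack of $\tfrac34(4\sqrt2)^{\br} x^{1/2}$ to absorb the additive stitching cost $O(q) = O(k\log n)$; since at non-bottom levels $x$ is polynomially large relative to $k\log n$ this is comfortable, and the factor $4\sqrt2$ (rather than a tighter constant) is exactly the room that makes this go through.

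\textbf{Main obstacle.} The delicate point is passing from $\sum_i \sqrt{s(P'_i,\cdot)}$ — a sum of $q+1$ square roots of \emph{random, dependent} quantities whose expectations we only control in aggregate — to something like $\sqrt{2x}$ uniformly. Jensen gives $\sum_i \sqrt{s_i} \le \sqrt{(q+1)\sum_i s_i}$ pointwise, but then I have $\E_q[\sqrt{(q+1)\sum_i s_i}]$ and the inner expectation of $\sum_i s_i$ is conditioned on $q$; I need $\E[\sqrt{(q+1)\sum_i s_i}] \le \sqrt{\E[(q+1)\sum_i s_i]} = \sqrt{2x}$ by one more application of Jensen on the outer randomness, which works since $(q+1)\sum_i s_i$ and the conditional expectation $\tfrac{2}{q+1}x$ combine to give unconditional expectation exactly $2x$. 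Care is needed that these two Jensen steps compose correctly and that the additive $O(q)$ stitching term, whose expectation is $O(k\log n)$, genuinely stays below the slack $\tfrac34(4\sqrt2)^{\br}x^{1/2}$ at every recursion level — this requires tracking that $s(P',G')$ never drops below roughly $k\log n$ until the final level, which is exactly where the $\sqrt{x}$ form of the bound and the choice $t = \log_k n$ combine to yield the target $n^{1/2 + O(1/\log k)}$ at the top.
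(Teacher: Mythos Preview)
Your overall structure (downward induction on $\br$, apply \cref{seq:2k+1lemma}, then Cauchy--Schwarz and Jensen to collapse $\sum_i \sqrt{s_i}$ to $\sqrt{2x}$) matches the paper, and your identification of the main term $(4\sqrt2)^{\br-1}\sqrt{2x}=\tfrac14(4\sqrt2)^{\br}\sqrt{x}$ is correct. The gap is in how you dispose of the additive stitching cost.

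You try to bound the stitching cost by its expectation $\E[q]=p_r\cdot x$ and absorb it into the slack $\tfrac34(4\sqrt2)^{\br}\sqrt x$. This does not close. With $x$ near its maximum $2nk^{-r}$ (which \cref{seq:rvbound} allows) one has $p_r\sqrt{x}\;\le\;\tfrac{20k^{r+1}\log n}{n}\cdot\sqrt{2nk^{-r}}\;=\;\Theta\!\left(k\log n\cdot k^{r/2}/\sqrt{n}\right)$, and at $\br=1$ (so $r=\log_k n-1$) this is $\Theta(\sqrt{k}\log n)$, while the slack is only $\tfrac34(4\sqrt2)^{1}=3\sqrt2$. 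So the inequality $2p_r x\le \tfrac34(4\sqrt2)^{\br}\sqrt x$ fails badly near the bottom of the recursion; your proposed fix of ``tracking that $s(P',G')$ never drops below $k\log n$'' does not help, because the bad regime is when $x$ is \emph{large}, not small. Inflating the base $4\sqrt2$ to accommodate this would force a base of order $\sqrt{k}\log n$, destroying the final $n^{1/2+O(1/\log k)}$ bound.

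The paper avoids bounding $\E[q]$ altogether. Conditioned on $|S|=t$, it writes the bound as $t+2+\sum_{i=1}^{t+1} A_i$ with $A_i=(4\sqrt2)^{\br-1}\E[s(P'_i,\cdot)^{1/2}\mid |S|=t]$, observes that each nonempty subpath gives $A_i\ge 1$ (since $s\ge 1$ and $\br\ge 1$), and hence $t\le\sum_i A_i$, so $t+2+\sum_i A_i\le 2+2\sum_i A_i$. Only \emph{after} this doubling does it apply Jensen and Cauchy--Schwarz, arriving at $2+\tfrac12(4\sqrt2)^{\br}\sqrt x\le (4\sqrt2)^{\br}\sqrt x$. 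The point is that the stitching cost $t$ is absorbed \emph{pointwise in $t$}, before any expectation is taken, so no control on $\E[t]$ is ever needed. If you replace your separate $\E[q]$ bound with this doubling trick, your argument goes through.
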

\begin{proof}
We proceed by induction on $r$ with base case $r = \log_k n$. Note that at this stage we have no more recursion to do: $P'$ must consist of a single node and thus the distance from $\head(P')$ to $\tail(P')$ is $0$. Therefore, the result holds for $r = \log_k n$.

For our induction step, consider an inductive execution of algorithm $\Seq(G[V'], k, r)$, and assume the result for depth $r+1$. Consider the subexecutions directly induced by $\Seq(G[V'], k, r)$. We shortcut $P'$ in the following way. Say that our algorithm chose a set $S$ of $t$ vertices in $R_G(P')$ as shortcutters. If one of the vertices in $S$ is a bridge of $P'$, we would simply traverse the bridge and go from our path's head to tail in $2$ edges. Otherwise, by \cref{seq:2k+1lemma} we would split $P'$ into $t+1$ subpaths $P'_1, P'_2, \cdots, P'_{t+1}$ where
\begin{itemize}
    \item The disjoint vertex union of the $P'_i$ is $P'$.
    \item Each $P'_i$ is inside some $G[V_{f(i)}]$ on which $\Seq$ executes $\Seq(G[V_{f(i)}], k, r+1)$ on for some $f(i)$.
\end{itemize}

We can get from the head to tail of $P'$ by inductively traversing each $P'_i$ in the order we encounter them, while using $t$ extra edges to go between these paths. Now in either case we use at most $t+2$ edges to traverse between the subpaths formed by our recursion. We additionally use some number of edges to traverse from the tail to head of each of the subpaths we form. By our inductive hypothesis, we see that the length of each $P'_i$ after applying our shortcuts satisfies
\[
\mathbb{E}\left[\text{shortcut length of } P'_i \right] \le (4 \sqrt{2})^{\br-1} \mathbb{E} [s(P'_i, G[V'_i])^{1/2}]. 
\]
Thus if we condition on the fact that $|S| = t$ the expected shortcut length of $P'$ is at most 
\begin{align*}
\mathbb{E}\left[\text{shortcut length of } P' \Big| |S| = t \right] &\leq t+2 +  \sum_{i=1}^{t+1} (4 \sqrt{2})^{\br-1} \mathbb{E} \left[s(P'_i, G[V'_i])^{1/2} \Big| |S| = t \right] \\
&\leq 2 + \sum_{i=1}^{t+1} (2 \sqrt{2})^{-1} (4 \sqrt{2})^{\br} \mathbb{E} \left[s(P'_i, G[V'_i])^{1/2} \Big| |S| = t \right]
\end{align*}
as $(4 \sqrt{2})^{\br-1} \mathbb{E} [s(P'_i, G[V'_i])^{1/2}] \geq 1$ for every $i$. We obtain 
\begin{align*}
\mathbb{E}\left[\text{shortcut length of } P'  \Big| |S| = t \right] &\leq 2 +  \sum_{i=1}^{t+1} (2 \sqrt{2})^{-1} (4 \sqrt{2})^{\br} \mathbb{E} \left[s(P'_i, G[V'_i])^{1/2} \Big| |S| = t \right] \\
&\leq 2 +  \sum_{i=1}^{t+1} (2 \sqrt{2})^{-1} (4 \sqrt{2})^{\br} \mathbb{E} \left[s(P'_i, G[V'_i]) \Big| |S| = t \right]^{1/2} \\
&\leq 2 + (2 \sqrt{2})^{-1} (4 \sqrt{2})^{\br} \sqrt{t+1} \mathbb{E} \left[ \sum_{i=1}^{t+1} s(P'_i, G[V'_i]) \Big| |S| = t \right]^{1/2} \\
&\leq 2 + (2 \sqrt{2})^{-1} (4 \sqrt{2})^{\br} \sqrt{t+1} \left[\frac{2}{t+1} s(P', G[V']) \right]^{1/2} \\
&\leq  2 + 2^{-1} (4 \sqrt{2})^{\br} s(P', G[V'])^{1/2} \\
&\leq (4 \sqrt{2})^{\br} s(P', G[V'])^{1/2},
\end{align*}
where we used Jensen's inequality in the second inequality, Cauchy-Schwarz in the third, and \cref{seq:2k+1lemma} in the fourth.  As our final bound is independent of the value of $t$, the result follows.
\end{proof}

With this result in place, we can now prove our final theorem statement by setting $r = 0$ in \cref{seq:diambound}.

\begin{theorem}
\label{thm:main5}
Let $G = (V,E)$ be a digraph with $n$ nodes and $m$ edges. Then with probability $1 - 2 n^{-10}$ algorithm $\Seq(G, k, 0)$ runs in $\otilde(mk)$ time and constructs a set $F$ of $\otilde(nk)$ shortcuts such that an arbitrary path $P$ is shortcut to length $n^{1/2 + O(1/\log k)}$ w.h.p.
\end{theorem}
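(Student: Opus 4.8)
The plan is to assemble the three components developed in the section. The work and shortcut bounds are exactly \cref{seq:workbound}, which already says that with probability $1-2n^{-10}$ the call $\Seq(G,k,0)$ runs in $\otilde(mk)$ time and adds $\otilde(nk)$ shortcuts. So the only thing left to prove is the diameter claim: every path $P$ in $G$ gets shortcut to length $n^{1/2 + O(1/\log k)}$ w.h.p. First I would invoke \cref{seq:diambound} with $r=0$, $V'=V$, and $P'=P$: since $s(P,G) \le n$ and $\br = \log_k n$, the lemma gives that the \emph{expected} distance from $\head(P)$ to $\tail(P)$ after adding $F$ is at most $(4\sqrt2)^{\log_k n}\, n^{1/2}$. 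Now $(4\sqrt2)^{\log_k n} = n^{\log_k(4\sqrt2)} = n^{O(1/\log k)}$, so the expected shortcut length is $n^{1/2 + O(1/\log k)}$.

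The remaining step is to upgrade this expectation bound to a ``with high probability'' bound for a \emph{fixed} path $P$, and then to conclude for all relevant pairs of vertices simultaneously. For a fixed $P$, I would apply Markov's inequality to the nonnegative random variable ``shortcut length of $P$'': its being more than, say, $n^{10}$ times its expectation happens with probability at most $n^{-10}$. Since the expectation is $n^{1/2+O(1/\log k)}$ and the extra $n^{10}$ factor is absorbed into the $n^{O(1/\log k)}$ slack (more carefully, one repeats or reruns the construction $\Theta(\log n)$ times, or just notes $n^{10}\cdot n^{1/2+O(1/\log k)} = n^{1/2 + O(1/\log k)}$ since $10 = O(1/\log k)\cdot\log n$ is false in general — so the cleaner route is independent repetition), we get that a fixed $P$ has shortcut length $n^{1/2+O(1/\log k)}$ with probability $1 - n^{-10}$. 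To obtain the diameter statement we only need this for the $\binom{n}{2}$ pairs $(s,t)$ with $s\pe t$, choosing for each such pair a witnessing path $P_{s,t}$; a union bound over these $O(n^2)$ events costs another polynomial factor in the failure probability, which is still $\poly(n)^{-1}$ after adjusting the constant $10$. Thus w.h.p.\ the graph $G\cup F$ has diameter $n^{1/2+O(1/\log k)}$, which is exactly the claim, and the theorem follows by combining with \cref{seq:workbound} via a union bound on the two failure events.

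The main obstacle is the expectation-to-high-probability conversion: Markov's inequality alone only gives an inverse-polynomial tail if we allow a polynomial blowup in the length bound, and absorbing a $\poly(n)$ factor into $n^{O(1/\log k)}$ is \emph{not} legitimate when $k$ is a constant. The honest fix is to run $\Theta(\log n)$ independent copies of $\Seq$ and take the union of all their shortcut sets: each copy independently fails to shortcut a fixed $P$ below $O(1)$ times its expected length with probability at most (say) $1/2$ by Markov, so after $\Theta(\log n)$ copies the probability that \emph{every} copy fails is $n^{-10}$, and the union of their edge sets achieves the short distance whenever any single copy does. This multiplies work and shortcut count by only $\otilde(1)$, so the bounds in \cref{seq:workbound} are unaffected up to the hidden polylog factors. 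One then union-bounds over the $O(n^2)$ witness paths as above. (The footnote in \cref{seq:diambound} flags that this analysis is deliberately loose; nothing more delicate is needed here since the loss is only in constants inside the exponent $O(1/\log k)$.)
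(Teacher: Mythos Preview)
Your proposal is correct and follows essentially the same route as the paper: invoke \cref{seq:workbound} for the work and shortcut bounds, then apply \cref{seq:diambound} at $r=0$ with $s(P,G)\le n$ and compute $(4\sqrt2)^{\log_k n}=n^{O(1/\log k)}$. In fact the paper's own proof of this theorem stops at the expected-length bound and defers the Markov-plus-repetition step (running $\Seq$ $O(\log n)$ times and union-bounding over $O(n^2)$ pairs) to the subsequent derivation of \cref{thm:seq}; you have folded that step in and correctly diagnosed why bare Markov is insufficient when $k$ is constant, which is exactly the fix the paper applies.
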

\begin{proof}
By \cref{seq:workbound} our claimed work and shortcut bound follow immediately. Let $P$ be any path from $s$ to $t$. By \cref{seq:diambound} we observe that the expected length of $P$ after applying our constructed shortcuts is at most $(4 \sqrt{2})^{\log_k n} (s(P,G))^{1/2}$. As $s(P,G) \leq n$, this equals $n^{1/2 + O(1/\log k)}$ as claimed.
\end{proof}
We observe that \cref{thm:seq} follows from this by calling $\Seq(G,k,0)$ $O(\log n)$ times: the probability that any given pair $s,t$ fails to have a path of length twice the bound in \cref{thm:main5} between them after $c \log n$ calls is at most $1/n^c$ by Markov's inequality. By union bounding over all $n^2$ pairs of points the claim follows.

\section{Parallel Algorithm}
\label{sec:parallel}

In this section, we show how to extend the nearly linear work sequential shortuctting algorithm of \cref{sec:seq} into a work-efficient, low depth parallel algorithm. In \cref{sec:parallelprelim} we extend the definitions in \cref{sec:prelim} to the setting of distance limited searches and reachability, which we need throughout the section. In \cref{sec:parallelalgo} we state our main algorithms \ParallelSC~and \ParallelDiam, and give intuition for how to reason about them. In \cref{sec:parallelwork} we bound the total work, depth, and number of shortcut edges added in the algorithm. Finally, in \cref{sec:paralleldiam} we bound the diameter of the resulting graph after the execution of our algorithms.

\subsection{Notation for Distance Limited Searches}
\label{sec:parallelprelim}
\paragraph{Digraph distances and distance-limited relations:} Let $G$ be a digraph. For vertices $u, v \in V$ define $d(u, v)$ to be the length of the shortest path from $u$ to $v$. We define $d(u, v) = +\infty$ if $v \not\in \RD_G(u)$, i.e. $v$ is not reachable from $u$. We also define distance-limited reachability, which is a natural extension of the notion of reachability defined in \cref{sec:prelim}. For a parameter $D$, we define the $D$-descendants, $D$-ancestors, and $D$-related vertices to $v$ as \[ \RD_G(v, D) = \{ u \in V : d(v, u) \le D \} \text{ and } \RA_G(v, D) = \{ u \in V : d(u, v) \le D \} \] \[ \text{ and } R_G(v, D) = \RD_G(v, D) \cup \RA_G(v, D). \] We extend all this notation to subgraphs $G'$ of $G$ in the natural way. Define $d_{G'}(u, v)$ to the length of the shortest path from $u \to v$ only using only vertices and edges in $G'$. Then we define $\RD_{G'}(v, D)$, $\RA_{G'}(v, D)$, and $R_{G'}(v, D)$ as above. A vertex $u$ is \emph{unrelated within distance $D$} to a vertex $v$ (with respect to a subgraph $G'$) if $u \in V[G'] \bs R_{G'}(v, D).$

\paragraph{Distance limited path relations.} In Algorithm \ParallelSC, our searches are limited to distance $\kappa D$ for some random parameter $\kappa.$ Here we define distance limited path relations, analogous to the vertex and path relations defined in \cref{sec:prelim}.
As before, we denote a path $P = \l v_0, v_1, \dots, v_\ell \r$, where all the $v_i$ are vertices of $G$. We now make the following definitions. As the range the parameter $\kappa$ is chosen from depends on $r$, the current recursion depth of the algorithm, our below definitions also depend on $r$. This dependence is made explicit in lines \ref{line:k2r} and \ref{line:pickk} of Algorithm \ParallelSC.

Throughout, we say that $u \pe^s v$ if $d_G(u, v) \le s$, i.e. there is a path of length at most $s$ from $u$ to $v$. We say that $u \not\pe^s v$ if $d_G(u, v) > s.$
\begin{itemize}
\item \textbf{Fully path-related vertices.} We say that vertex $v$ is fully path-related if $v \in R_G(P, \kappa_{2r+1}D).$ In other words, $v \in R_G(P, \kappa D)$ for all $\kappa \in [\kappa_{2r+1}, \kappa_{2r}].$ 

\item \textbf{Partially path-related vertices.} We say that a vertex $v$ is partially path-related if $v \in R_G(P, \kappa_{2r}D).$ In other words, $v \in R_G(P, \kappa D)$ for some $\kappa \in [\kappa_{2r+1}, \kappa_{2r}].$

We distinguish three types of fully or partially path-related vertices. The below definitions depend on the parameter $\kappa$ chosen.

\item \textbf{$\kd$-Descendants.} We say that a vertex $v$ is a $\kd$-descendant of the path $P$ if $v \in \RD_G(P, \kd)\bs \RA_G(P, \kd).$ Note that then we have that $v_0 \pe^{(\kappa+1)D} v$ and $v \not\pe^{\kd} v_\ell.$
\item \textbf{$\kd$-Ancestors.} We say that a vertex $v$ is an $\kd$-ancestor of the path $P$ if it is the case that $v \in \RA_G(P, \kd)\bs \RD_G(P, \kd).$ Note that then we have that $v_0 \not\pe^{\kd} v$ and $v \pe^{(\kappa+1)D} v_\ell.$
\item \textbf{$\kd$-Bridges.} We say that a vertex $v$ is a $\kd$-bridge of the path $P$ if $v \in \RD_G(P, \kd)\cap \RA_G(P, \kd).$ Note that then we have that $v_0 \pe^{(\kappa+1)D} v$ and $v \pe^{(\kappa+1)D} v_\ell.$
\end{itemize}

Now, we define the following quantities and briefly explain their importance in \cref{algo:parallelsc}. Further details are explained in the paragraph below (explanation of \cref{algo:parallelsc} and \cref{algo:paralleldiam}). Let $G$ be the $n$-vertex $m$-edge digraph which we input, with vertex set $V$ and edge set $E$.
\begin{itemize}
\item \textbf{Inputs $k, r, r^\fringe$:} The input $k$ denotes the speed that our algorithm recurses at, and intuitively digraphs one level lower in the recursion are ``smaller" by a factor of $k$. This is made precise by \cref{par:pararvbound}. The level $r \le \log_k n$ denotes the level of recursion the algorithm is at. Additionally, we have an inner recursion level $r^\fringe \le \log n$ for the ``fringe vertices" (defined below). For each level of recursion $r$, it has at most $\log n$ inner levels of recursion.
\item \textbf{Set $S$:} $S$ is the set of vertices from which we search and build shortcuts from.
\item \textbf{Set $F$:} $F$ is the final set of shortcuts we construct.
\item \textbf{Search scale / distance $D$:} The parameter $D$ denotes the scale on which the algorithm is runs breadth first searches. Our main claim is that through one call of $\ParallelSC(G, k, 0, 0)$ an arbitrary path in $G$ of length $D$ will be shortcut to expected length $\frac{D}{10}$ (\cref{par:paradiambound}).
\item \textbf{Parameters $\kappa_i, \kappa$:} In the $r$-th recursion level, we randomly choose the parameter $\kappa \in [\kappa_{2r+1}, \kappa_{2r}]$ to obtain our search radius $\kappa D$ for our breadth first searches.
\item \textbf{Probability $p_r$:} At recursion depth $r$, for each vertex $v \in V(G)$, we put $v$ in $S$ with probability $p_r$.
\item \textbf{Labels $\vD, \vA, \xmark$:} We want to distinguish vertices by their relations to vertices in  the set $S$. Therefore, when we search from a vertex $v$ we assign a label $\vD$ to add vertices in $\RD(v, D)\bs \RA(v, D)$, label $\vA$ to all vertices in $\RA(v, D)\bs \RD(v, D)$, and label $\xmark$ to all vertices in $\RD(v, D) \cap \RA(v, D).$ The label $\xmark$ should be understood as ``eliminating" the vertex (since it is in the same strongly connected component as $v$ and we have shortcut through $v$ already).
\end{itemize}

\subsection{Main Algorithm Description}
\label{sec:parallelalgo}
\begin{algorithm}[p]
\caption{$\ParallelSC(G, k, r, r^\fringe)$. Takes a digraph $G$, parameter $k$, recursion depth $r \le \log_k n$ (starts at $r = 0$), and inner fringe node recursion depth $r^\fringe \le \log n$. Returns a set of shortcut edges to add to $G$. $n$ denotes the number of vertices at the top level of recursion, not the number of vertices in $G$.}
\begin{algorithmic}[1]
\State $p_r \assign \min\left(1, \frac{10 k^{r+1} \log n}{n} \right).$ \label{line:setpr}
\State $S \assign \emptyset.$ \label{line:sets}
\State $\kappa_{2r+1} \assign 10^6 k^2 \log^5 n \left(1 + \frac{1}{4\log n} \right)^{-2r-1}$ and $\kappa_{2r} \assign 10^6 k^2 \log^5 n \left(1 + \frac{1}{4\log n} \right)^{-2r}.$ \label{line:k2r}
\State Choose $\kappa \in [\kappa_{2r+1}, \kappa_{2r}]$ uniformly at random. \Comment{Picking a random search distance} \label{line:pickk}
\State $D \assign 100 \cdot \sqrt{2}^{\log_k n} \cdot n^\frac12 \cdot \log^2 n.$ \label{line:setd}
\For{$v \in V$} \label{line:makes1}
	\State With probability $p_r$ do $S \assign S \cup \{v\}.$ \label{line:makes2}
\EndFor
\State $F \assign \emptyset.$ \label{line:setf}
\For{$v \in S$} \label{line:bfsstart}
	\For{$w \in \RD(v, (\kappa+1)D)$} add edge $(v, w)$ to $F$. \label{line:addsc1}
	\EndFor
	\For{$w \in \RA(v, (\kappa+1)D)$} add add $(w, v)$ to $F$.	\label{line:addsc2}
	\EndFor
	\For{$w \in \RD(v, \kd) \bs \RA(v, \kd)$} add label $\vA$ to vertex $w$. \label{line:addanc}
	\EndFor
	\For{$w \in \RA(v, \kd) \bs \RD(v, \kd)$} add label $\vD$ to vertex $w$. \label{line:adddes}
	\EndFor
	\For{$w \in \RD(v, \kd) \cap \RA(v, \kd)$} add label $\xmark$ to vertex $w$. \label{line:addxmark}
	\EndFor
	\State $V^\ring_v \assign R(v, (\kappa+1)D) \bs R(v, (\kappa-1)D).$ \label{line:bfsend}
	\State $F \assign F \cup \ParallelSC(G[V^\ring_v], k, r, r^\fringe+1).$ \Comment{Recursion on fringe nodes} \label{line:fringerecurse}
\EndFor
\State $W \assign \{ v \in V : v \text{ has no label of } \xmark \}.$ \label{line:remxmark}
\State $V_1, V_2, \dots, V_\ell \assign$ partition of $W$ such that $x, y \in V_i$ iff $x$ and $y$ have exactly the same labels. \Comment{Vertices in the $V_i$ have no label of $\xmark$.} \label{line:vi}
\For{$1 \le i \le \ell$} \label{line:normalrecurse0}
	\State $F \assign F \cup \ParallelSC(G[V_i], k, r+1, 0).$ \label{line:normalrecurse}
\EndFor
\State \Return $F$
\end{algorithmic}
\label{algo:parallelsc}
\end{algorithm}

\begin{algorithm}[p]
\caption{$\ParallelDiam(G, k)$. Takes a digraph $G = (V, E)$, parameter $k$. Modifies digraph $G$. Parallelizable diameter reduction algorithm.}
\begin{algorithmic}[1]
\For{$i = 1$ to $10\log n$}
	\For{$j = 1$ to $10\log n$}
		\State $S_j \assign \ParallelSC(G, k, 0, 0)$, aborting if the work or shortcut edge count exceeds $10$ times the bound in \cref{par:finalparabound}.
	\EndFor
	\State $E(G) \assign E(G) \cup \left(\bigcup_j S_j \right)$
\EndFor
\end{algorithmic}
\label{algo:paralleldiam}
\end{algorithm}

\paragraph{Overview of \cref{algo:parallelsc}.} \cref{algo:parallelsc} is similar to \cref{algo:seq}. All parts of \cref{algo:seq} can be implemented in low parallel depth except for the breadth first searches from the vertices in $S$ (lines \ref{line:bfsstart} to \ref{line:bfsend}). To resolve this, a natural idea is to limit the distance of the breadth first searches to $D$, where $D$ denotes the diameter bound given by \cref{seq:diambound}. Running these incomplete searches introduces issues in the analysis though. To get around this, we follow an approach similar to \cite{Fine18} and perform some additional computation on the fringe of our breadth first searches. Specifically, we choose a random integer $\kappa$ in the range $[\kappa_{2r+1}, \kappa_{2r}]$ (think of these as parameters which are $\poly(\log n, k)$), and search from a vertex $v$ to distance approximately $\kappa D.$ Then, we call the vertices in the set $R(v, (\kappa+1)D) \bs R(v, (\kappa-1)D)$ the \emph{fringe vertices}. We chose $\kappa$ randomly to ensure that the expected number of fringe vertices is sufficiently small. We then recurse on the fringe vertices, which is done in line \ref{line:fringerecurse} of \cref{algo:parallelsc}.

In addition, we assign labels based on reachability \emph{within distance $\kappa D$}. As we show later in the section, the analysis as done in \cref{sec:seq} can be modified to tolerate these changes and obtain a similar result.

We would like to note some differences between the ways fringe vertices are handled in our algorithms compared to those in \cite{Fine18}. One difference is that we directly handle fringe vertices by recursing on only that set, while in Fineman's algorithm the fringe vertices are lumped into the recursion on ancestor sets. The reason for the difference is that our way of partitioning our vertex set before recursing is more involved. Additionally, in our algorithm, we explicitly track the depth of recursion on fringe vertices (parameter $r^\fringe$) inside our algorithm. We must do this as our algorithm requires good control on the number of ancestors and descendants of a vertex (\cref{seq:rvbound}), and we do not obtain the required bound on the number of ancestors or descendants when we recurse on fringe vertices. Additionally, we must ensure that $r^\fringe \le \log n$ so that our algorithm still has low parallel depth.

\paragraph{Explanation of \cref{algo:parallelsc} and \cref{algo:paralleldiam}.} Here, we give a detailed description of what each part of \cref{algo:parallelsc} and \cref{algo:paralleldiam} is doing. Lines \ref{line:setpr} and \ref{line:sets} of \cref{algo:parallelsc} are simply initializing the set of vertices we search from $S$ to the empty set and picking the probability $p_r$ for which $v \in S$. In line \ref{line:setd}, we choose our search scale $D$. This is chosen to be a constant factor larger than the bound given in \cref{seq:diambound} and the analogous \cref{par:paradiambound}. In lines \ref{line:k2r} and \ref{line:pickk}, we define the parameters $\kappa \in [\kappa_{2r+1}, \kappa_{2r}]$, which will define our search radius $\kappa D.$ Note that $\kappa_0 \ge \kappa_1 \ge \kappa_2 \ge \cdots$ and that $\kappa_i \ge \frac{1}{2}\kappa_0$ for all $i$, as $r \le \log_k n.$ This way, the search radius is decreasing with every recursion level. In lines \ref{line:makes1} and \ref{line:makes2}, we are choosing the set of vertices to breadth first search from, and we are adding them to $S$. In line \ref{line:setf}, we initialize the set of shortcut edges we are going to eventually add to the empty set.

In lines \ref{line:addsc1} and \ref{line:addsc2}, we are adding all the shortcut edges through a vertex $v \in S$. In lines \ref{line:addanc} to \ref{line:addxmark} we are applying the ancestor, descendant, and ``eliminated" labels to other vertices. In line \ref{line:fringerecurse}, we are running a recursion on the fringe vertices in our search from $v$. Note that we have increased $r^\fringe$ to $r^\fringe+1$ but have kept the parameter $r$ the same.

In lines \ref{line:remxmark} and \ref{line:vi}, we are processing the labels assigned to the vertices. We first remove from consideration all vertices which have a $\xmark$ label, and then we partition the remaining ones into groups based on matching labels. In lines \ref{line:normalrecurse0} and \ref{line:normalrecurse} we recurse on these groups we have created. Note that we have increased $r$ to $r+1$ and reset $r^\fringe$ to $0$.

In \cref{algo:paralleldiam} we are simply running \ParallelSC~for multiple iterations. Specifically, we can guarantee that the expected head to tail distance of any path of original length at most $D$ is now at most $\frac{D}{10}$. Running this $O(\log n)$ times ensures that any fixed path of length $D$ has been shortcut to length $\frac{D}{5}$ with high probability. Now, if we run this procedure $O(\log n)$ times, it is easy to see that any path's head to tail distance will get reduced to $D$ with high probability: split this path into polynomially many paths of length at most $D$ and note that with high probability each of these paths' lengths gets reduced by a constant factor. Thus the original path's length falls by a constant factor, and we continue to do this until the original path's length becomes at most $D$. 

\paragraph{How to reason about the randomness.} In order to reason about the randomness used in sampling $\kappa$ and $S$, we should imagine that the during an execution of $\ParallelSC(G', k, r, r^\fringe)$ the algorithm first samples $\kappa$ before doing anything else. After sampling $\kappa$, we then know precisely which vertices are $\kd$-descendants, $\kd$-ancestors, and $\kd$-bridges with respect to the specific path $P$ that we are analyzing. Given this, we can essentially proceed forwards with similar arguments to those in \cref{sec:seq}.

\subsection{Shortcut and Work Bound}
\label{sec:parallelwork}
In this section, we prove many lemmas which help us bound the total amount of work and shortcut edges added. We attempt to make the bounds with high probability whenever possible, but some are in expectation.

We start by proving an analogue of \cref{seq:rvbound}.
\begin{lemma}
\label{par:pararvbound}
Consider an execution of $\ParallelSC(G, k, 0, 0)$ on $n$-node $m$-edge digraph $G$. With probability $1 - n^{-4}$ in each recursive execution of the form $\ParallelSC(G',k,r,r^\fringe)$ of Algorithm~\ref{algo:parallelsc} the following holds:
\[
|\RD_{G'}(v, \kappa_{2r}D)| \leq n k ^{-r} ~~~ \text{and} ~~~ |\RA_{G'}(v, \kappa_{2r}D)| \leq n k^{-r}
~~~ \forall v \in V(G') ~.
\]
\end{lemma}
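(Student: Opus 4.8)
### Proof Proposal for \Cref{par:pararvbound}

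\textbf{Overall strategy.} The plan is to mimic the proof of \Cref{seq:rvbound} as closely as possible, carrying out a double induction: an outer induction on the ``main'' recursion level $r$, and within each level an inner induction on the fringe level $r^\fringe$. The key point is that the bound $nk^{-r}$ degrades only with $r$, not with $r^\fringe$. So for the fringe recursion (line \ref{line:fringerecurse}, which increments $r^\fringe$ but keeps $r$ fixed) I only need the trivial monotonicity observation: passing to an induced subgraph can never increase $|\RD_{G'}(v,\kappa_{2r}D)|$ or $|\RA_{G'}(v,\kappa_{2r}D)|$, and $\kappa_{2r}$ does not change. Hence if the bound holds at the start of a main-level-$r$ call it continues to hold through all of its fringe subcalls. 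The real work is the step from main level $r=j$ to main level $r=j+1$ (line \ref{line:normalrecurse}), which is where the $\RD/\RA$ labels are assigned and where the distance threshold changes from $\kappa_{2j}D$ down to $\kappa_{2(j+1)}D \le \kappa_{2j}D$.

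\textbf{The main inductive step.} Assume the bound holds for all recursive calls at main level $j$ (for all their fringe depths). Fix a vertex $v$ and let $G'$ be the induced subgraph on which a level-$(j+1)$ call runs containing $v$; let $\ParallelSC(H,k,j,r^\fringe)$ be the level-$j$ call that spawned it (via line \ref{line:normalrecurse}, so $G'=G[V_i]$ for one of the label-classes $V_i$). I argue about $\RD_{G'}(v,\kappa_{2(j+1)}D)$; the ancestor case is symmetric. Let $Q$ be the set of vertices of $G'$ that are $\kappa_{2j}D$-descendants of $v$ in $H$ --- note $|Q|\le |\RD_H(v,\kappa_{2j}D)|\le nk^{-j}$ by the inductive hypothesis, and also $Q \supseteq \RD_{G'}(v,\kappa_{2(j+1)}D)$ since $\kappa_{2(j+1)}D \le \kappa_{2j}D$ and $G'\subseteq H$. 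If $|Q|\le nk^{-(j+1)}$ we are already done, so assume $|Q|>nk^{-(j+1)}$. Now order the strongly connected components of $Q$ in a topological order (nearest to $v$ first), exactly as in \Cref{seq:rvbound}; if the level-$j$ call at $H$ had sampled into $S$ any one of the $nk^{-(j+1)}$ vertices of $Q$ closest to $v$ in this order --- and additionally had picked that vertex's random search radius $\kappa$ large enough that the search from it actually reaches $v$ and these descendants --- then $v$ and the ``far'' part of $Q$ would receive different labels (or $\xmark$), so $G'$ would not simultaneously contain $v$ and those vertices. This would force $|\RD_{G'}(v,\cdot)|\le nk^{-(j+1)}$. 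The probability this good event fails is bounded by: (i) none of those $nk^{-(j+1)}$ vertices lands in $S$, which is $(1-p_j)^{nk^{-(j+1)}}\le e^{-10\log n}=n^{-10}$ since $p_j = 10k^{j+1}\log n/n$; times any correction for the $\kappa$ choice, which I handle below. Union-bounding over all $v\in V(G')$ and all level-$j$ subgraphs gives failure $\le n^{-\Omega(1)}$ per level, and union-bounding over the $\O(1)$ values of $r$ gives the claimed $n^{-4}$ (with appropriate constants).

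\textbf{The main obstacle.} The subtlety absent from \Cref{seq:rvbound} is the random search radius $\kappa$ and the gap between the distance threshold used for \emph{labelling} ($\kappa D$, with $\kappa\in[\kappa_{2r+1},\kappa_{2r}]$) and the radius used for \emph{shortcut edges} ($(\kappa+1)D$), together with the fact that $\kappa_{2r+1}\le\kappa_{2r}$ but these are not equal. I need to be careful that when I say ``a vertex $u$ earlier in the topological order of $Q$ separates $v$ from a far vertex $x$,'' the search from $u$ at radius $\kappa D$ genuinely reaches both $v$ and $x$ so that they get distinct labels. Since $u$ is among the closest-to-$v$ vertices of $Q$ and $Q$ consists of $\kappa_{2j}D$-descendants of $v$ in $H$, the relevant distances are all $\le \kappa_{2j}D \le \kappa D$ (because we are at level $j$ where the sampled $\kappa\ge\kappa_{2j}$... — here I need to double-check the direction of the inequalities from line \ref{line:k2r}: $\kappa_{2j}\le\kappa_{2j-1}$ and every $\kappa_i\ge\frac12\kappa_0$, and the sampled $\kappa$ at level $j$ lies in $[\kappa_{2j+1},\kappa_{2j}]$). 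The cleanest fix, matching the paper's ``reason about the randomness'' paragraph, is to condition on $\kappa$ first; once $\kappa$ is fixed the labelling is deterministic and the only randomness left is $S$, so the argument reduces to the \Cref{seq:rvbound} computation verbatim, and I just need the distance bookkeeping to confirm that the separating vertex's radius-$\kappa D$ ball contains the vertices it must contain. I expect this distance-threshold bookkeeping --- confirming that the ``closest $nk^{-(j+1)}$ vertices'' argument survives the switch from $\kappa_{2j}D$ to $\kappa_{2(j+1)}D$ and the $\kappa$/$\kappa+1$ gap --- to be the one genuinely new thing to verify carefully; everything else is a transcription of \Cref{seq:rvbound} plus the trivial fringe-monotonicity observation.
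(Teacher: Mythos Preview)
Your treatment of the fringe recursion is right and matches the paper: $r$ does not change, the subgraph only shrinks, so the bound persists by monotonicity. The gap is in the main inductive step.

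You propose to transplant the SCC--topological--order argument from \cref{seq:rvbound}: sample some $y$ among the first $nk^{-(j+1)}$ vertices of $Q$ and conclude that every ``far'' $x$ receives a label different from $v$'s (or $\xmark$). That argument has two ingredients: (i) if $x$ lies in a strictly later SCC than $y$, then $x$ cannot be an ancestor of $y$ and is separated; (ii) if $x$ lies in $y$'s own SCC, then $x$ receives $\xmark$. Ingredient (i) survives the distance restriction intact, since $d(x,y)=\infty$. Ingredient (ii) does not: with $\kappa D$--limited labelling, $x$ receives $\xmark$ only when \emph{both} $d(y,x)\le\kappa D$ and $d(x,y)\le\kappa D$. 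A large SCC can contain many $x$ with $d(x,y)\le\kappa D$ but $d(y,x)>\kappa D$; each such $x$ is a $\kappa D$--ancestor--only of $y$ and receives exactly the same label as $v$, so it is \emph{not} separated. Concretely, take a directed cycle $a_1\to a_2\to\cdots\to a_L\to a_1$ with $v$ having a direct edge to every $a_i$; the whole cycle is one SCC, every $a_i$ lies in $Q$, yet sampling $y=a_1$ still leaves the $\kappa D$ vertices $a_{L-\kappa D+1},\dots,a_L$ with the same label as $v$. So the ``single good sample suffices'' mechanism of \cref{seq:rvbound} genuinely fails here, and the distance bookkeeping you flag as the remaining check is not the only, or even the main, obstacle.

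The paper replaces the topological order with a tournament--style argument that does not require transitivity. Having fixed $\kappa$, for each $x_i\in\RD_{G'}(v,\kappa D)$ it sets
\[
S_i \;=\; \{\,x_j:\ x_i\in\RD_{G'}(x_j,\kappa D)\ \text{or}\ x_i\text{ is unrelated to }x_j\text{ within }\kappa D\,\},
\]
the set of shortcutters whose sampling would give $x_i$ a label different from $v$'s. One checks that for every pair $i,j$ at least one of $x_i\in S_j$ or $x_j\in S_i$ holds (both failing would force $d(x_i,x_j)\le\kappa D$ and $d(x_i,x_j)>\kappa D$ simultaneously). Hence, by averaging, the set $Y=\{i:|S_i|<nk^{-r-1}/2\}$ contains some index with $|S_i|\ge|Y|/2$, giving $|Y|\le nk^{-r-1}$; every $x_i\notin Y$ is eliminated with probability at least $1-(1-p_r)^{|S_i|}\ge 1-n^{-5}$. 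This is the missing idea your proposal needs.
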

\begin{proof}
We proceed by induction. The base case $r = r^\fringe = 0$ is clear. All recursive calls are one of the following forms: let $G'$ be a digraph in which we made a call to $\ParallelSC(G', k, r, r^\fringe)$. We consider the cases of a recursive call to $\ParallelSC(G'[V^\ring_v],k,r,r^\fringe+1)$ and a recursive call to $\ParallelSC(G'[V_i],k,r+1,0)$ separately (line \ref{line:fringerecurse} and \ref{line:normalrecurse}). In the former case, the claim is clear, as in our recursive call to $\ParallelSC(G'[V^\ring_v],k,r, r^\fringe+1)$ the parameter $r$ stays fixed, and the digraph $G'$ already satisfied \[
\RD_{G'}(v, \kappa_{2r}D) \leq n k ^{-r} ~~~ \text{and} ~~~ \RA_{G'}(v, \kappa_{2r}D) \leq n k^{-r}
~~~ \forall v \in V(G') ~.
\]
Now we consider the case referring to $G'[V_i]$. Consider a vertex $v \in V_i$, and let $\kappa$ be the parameter chosen by the algorithm. Let the vertices in $\RD_{G'}(v, \kd)$ be $x_1, x_2, \dots, x_M$, where $M = |\RD_{G'}(v, \kd)|.$ For $1 \le i \le M$, define \[ S_i = \{ x_j : 1 \le j \le M, x_i \in \RD_{G'}(x_j,\kd) \text{ or } x_i \text { unrelated within distance } \kd \text{ to } x_j \}. \] In other words, $S_i$ consists of all vertices $x_j$ such that $x_i$ is a $\kd$-descendant of $x_j$ or $x_i$ is unrelated within distance $\kd$ to $x_j$.

We now show that if $x_j$ is in the shortcutter set $S$ and $x_j \in S_i$, then $x_i \notin \RD_{G'[V_i]}(v,\kd)$, i.e. is not a $\kd$-descendant of $v$ in the recursive subproblem induced. Indeed, $v$ would receive a $x_j^\Anc$ label, while $x_i$ receives either a $x_j^\Des$ label, unrelated label, or $\xmark$. Because $\kappa_{2r+2} \le \kappa$, then $x_i \notin \RD_{G'[V_i]}(v,\kappa_{2r+2}D)$ also. Now, if $|S_i| \ge nk^{-r-1}/2$, then the probability that no $x_j \in S_i$ is chosen to be in the shortcutter set $S$ is at most $(1-p_r)^{|S_i|} \le n^{-5}$ by our choice of $p_r$. A union bound over all $i$ gives that the failure probability is bounded by $n^{-4}.$

To finish the proof, we argue that the number of indices $i$ with $|S_i| < nk^{-r-1}/2$ is at most $nk^{-r-1}.$ Define \[ Y = \{ 1 \le i \le M : |S_i| < nk^{-r-1}/2 \}. \] Note that for all $1 \le i,j \le M$, either $x_i \in S_j$ or $x_j \in S_i.$ Therefore, some $i \in Y$ satisfies $|S_i| \ge \frac{|Y|}{2}$, as $x_i \in S_i$ also, so $|Y| \le 2|S_i| \le nk^{-r-1}$ as desired.
\end{proof}
As we must recurse separately on fringe vertices (hence redoing the computation on them), we need to be able to control the number of fringe vertices. Therefore, it is natural to  bound the expected number of times a fixed vertex is a fringe vertex for some breadth-first search.
\begin{lemma}
\label{par:singlefringebound}
Consider an execution of $\ParallelSC(G, k, 0, 0)$ on $n$-node $m$-edge digraph $G$ and a recursive execution of the form $\ParallelSC(G',k,r,r^\fringe)$. Let $u \in V(G').$ The expected number of times $u$ is in a recursive fringe subproblem, i.e. \[ u \in R_{G'}(v, (\kappa+1)D))\bs R_{G'}(v, (\kappa-1)D)) \] for some $v \in S$ (line \ref{line:fringerecurse}), is at most $\frac{1}{1000 k \log^3 n}.$
\end{lemma}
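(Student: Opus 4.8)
The plan is to fix the vertex $u$ and the execution $\ParallelSC(G',k,r,r^\fringe)$, and to bound the expected number of shortcutters $v \in S$ for which $u$ lands in the fringe $R_{G'}(v,(\kappa+1)D)\setminus R_{G'}(v,(\kappa-1)D)$. I would handle the randomness exactly as the ``How to reason about the randomness'' paragraph suggests: first condition on the choice of $\kappa$, then take expectation over the sampling of $S$. For a fixed $\kappa$ and a fixed candidate shortcutter $v\in V(G')$, the event ``$u$ is a fringe vertex for $v$'' means that the distance between $u$ and $v$ (in whichever direction makes them related) lies in the window $((\kappa-1)D,(\kappa+1)D]$. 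Since each $v$ is placed in $S$ independently with probability $p_r$, the expected number of such $v$ is $p_r$ times the number of vertices $v$ whose distance to/from $u$ falls in that width-$2D$ annulus. So the crux is: for how many $v$ can $d_{G'}(v,u)$ or $d_{G'}(u,v)$ lie in a fixed interval of length $2D$?

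The key step is to average over $\kappa$ to kill this annulus count. The annuli $((\kappa-1)D,(\kappa+1)D]$ for $\kappa$ ranging over the integers in $[\kappa_{2r+1},\kappa_{2r}]$ overlap only $O(1)$-fold, and they all sit inside the ball of radius $\kappa_{2r}D$ around $u$ (in each direction). By \cref{par:pararvbound}, with high probability $|R_{G'}(u,\kappa_{2r}D)| \le 2nk^{-r}$. Hence, summing the annulus counts over the $\ge \tfrac12 \kappa_0 = \Omega(k^2\log^5 n)$ admissible integer values of $\kappa$, the total is $O(nk^{-r})$; so the \emph{average} annulus count over a uniformly random $\kappa$ is $O\!\big(nk^{-r}/(k^2\log^5 n)\big)$. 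Multiplying by $p_r = 10k^{r+1}\log n / n$ (the per-vertex inclusion probability) gives an expected fringe-membership count of $O\!\big(k^{r+1}\log n / n \cdot nk^{-r}/(k^2\log^5 n)\big) = O(1/(k\log^4 n))$, which is below the claimed $\tfrac{1}{1000 k\log^3 n}$ once the constants in lines \ref{line:k2r}--\ref{line:pickk} (the $10^6$ and the $\log^5 n$) are accounted for. I would also fold in the $n^{-4}$ failure probability of \cref{par:pararvbound}: on that event the count is trivially at most $n$, contributing a negligible $n^{-3}$ to the expectation.

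The main obstacle is being careful that the averaging argument is applied to the right quantity. We want $\E_{\kappa}\E_{S}[\#\text{fringe memberships of }u]$, and the inner expectation over $S$ is linear, so it equals $p_r \cdot \E_\kappa[\,\#\{v : u \text{ fringe for }v \text{ at scale }\kappa\}\,]$; the subtlety is only that \cref{par:pararvbound} bounds the ball sizes \emph{simultaneously for all $v$ in the subproblem with high probability}, which is exactly what licenses summing the annulus counts (each vertex $v$ is counted in the annulus of $\kappa$ iff $u$ is at the corresponding distance from $v$, and $\sum_\kappa \one[\,d(u,v)\in((\kappa-1)D,(\kappa+1)D]\,] \le 2$ for every fixed $v$). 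One must also verify the edge case $p_r = 1$ (when $10k^{r+1}\log n \ge n$): then $k^{r+1} \ge n/(10\log n)$, so $r$ is essentially maximal, $nk^{-r}$ is small, and the bound still goes through — I would dispatch this quickly rather than belabor it. Everything else is a constant-chasing verification against the specific values $\kappa_{2r}=10^6 k^2\log^5 n\,(1+\tfrac{1}{4\log n})^{-2r}$ and $p_r$, which I would state but not grind through.
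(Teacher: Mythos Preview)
Your proposal is correct and follows essentially the same argument as the paper: fix a vertex $u$, observe that for each fixed candidate shortcutter $v$ there are at most two values of $\kappa$ placing $u$ in the width-$2D$ annulus, so the per-$v$ probability over $(\kappa,S)$ is at most $p_r\cdot\frac{2}{\kappa_{2r}-\kappa_{2r+1}}$, and then sum over the at most $2nk^{-r}$ vertices $v\in R_{G'}(u,\kappa_{2r}D)$ using \cref{par:pararvbound}; the paper's three-line proof is exactly this computation. One small slip: the number of admissible integer values of $\kappa$ is $\kappa_{2r}-\kappa_{2r+1}\approx \kappa_{2r}/(4\log n)=\Theta(k^2\log^4 n)$, not $\ge \tfrac12\kappa_0$ as you wrote, but this only costs a single $\log n$ factor and the bound $\tfrac{1}{1000k\log^3 n}$ still holds with the stated constants.
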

\begin{proof}
For a vertex $v \in R_{G'}(u, \kappa_{2r}D)$, the probability that $v \in S$ and $u \in R_{G'}(v, (\kappa+1)D))\bs R_{G'}(v, (\kappa-1)D))$ over a random $\kappa \in [\kappa_{2r+1}, \kappa_{2r}]$ is clearly at most $p_r \cdot \frac{2}{\kappa_{2r} - \kappa_{2r+1}}.$ Also, we have that $|R_{G'}(u, \kappa_{2r}D)| \le \frac{2n}{k^r}$ with high probability by \cref{par:pararvbound}, so the expected number of times $u$ is in a fringe subproblem is at most \[ p_r \cdot \frac{2}{\kappa_{2r} - \kappa_{2r+1}} \cdot \frac{2n}{k^r} \le \frac{1}{1000 k \log^3 n} \] from our choice of $\kappa_{2r}, \kappa_{2r+1}, p_r$.
\end{proof}
Now, we can show that the total size of all the digraphs we process during the algorithm doesn't increase much between levels of recursion. Additionally, at deep fringe recursion levels (for large $r^\fringe$) exponentially few vertices are processed.
\begin{lemma}[Expected total size bounds]
\label{par:expsize}
Consider an execution of $\ParallelSC(G, k, 0, 0)$ on $n$-node $m$-edge digraph $G$. For any recursion depth $r \le \log_k n$ and fringe recursion depth $r^\fringe$ we have that:
\begin{itemize}
\item The expected value of the total number of vertices of the digraphs $G'$ in all recursive executions $\ParallelSC(G',k,r,r^\fringe)$ is at most $n \cdot \left(1 + \frac{1}{\log n}\right)^r \cdot \frac{1}{(2 \log n)^{r^\fringe}}.$
\item The expected value of the total number of edges of the digraphs $G'$ in all recursive executions $\ParallelSC(G',k,r,r^\fringe)$ is at most $m \cdot \left(1 + \frac{1}{\log n}\right)^r \cdot \frac{1}{(2 \log n)^{r^\fringe}}.$
\end{itemize}
\end{lemma}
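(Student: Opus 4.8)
The plan is to prove both bullets simultaneously by a double induction on the pair $(r, r^\fringe)$, ordered first by $r$ and then, for fixed $r$, by $r^\fringe$. Write $N(r, r^\fringe)$ for the expected total number of vertices over all recursive executions of the form $\ParallelSC(G', k, r, r^\fringe)$; the argument for edges is run in parallel, replacing ``vertex'' by ``edge'' and $n$ by $m$ throughout, so I focus on vertices. The base case is $N(0,0) = n$, since the only depth-$(0,0)$ execution is the top-level call on $G$, and $(1+\tfrac1{\log n})^0 (2\log n)^{0} = 1$.

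Next I would isolate the two kinds of recursive calls in \cref{algo:parallelsc} and turn each into a recurrence on $N$. First, the \emph{fringe recurrence}: an execution $\ParallelSC(G', k, r, r^\fringe)$ spawns in line \ref{line:fringerecurse} the executions $\ParallelSC(G'[V^\ring_v], k, r, r^\fringe+1)$ over $v \in S$, and these are exactly the depth-$(r, r^\fringe+1)$ executions whose parent has depth $(r, r^\fringe)$. Summing \cref{par:singlefringebound} over all $u \in V(G')$ gives $\E\!\left[\sum_{v \in S} |V^\ring_v| \mid G'\right] \le |V(G')| / (1000 k \log^3 n)$, so by linearity of expectation and the tower rule over the (random) recursion tree, $N(r, r^\fringe+1) \le \frac{1}{1000 k \log^3 n} N(r, r^\fringe) \le \frac{1}{2\log n} N(r, r^\fringe)$, where the last step uses $500 k \log^2 n \ge 1$. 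For edges, one observes that an edge of $G'[V^\ring_v]$ has its tail in $V^\ring_v$, so the number of fringe subproblems containing a fixed edge is at most the number in which its tail is a fringe vertex, which again has expectation at most $1/(1000 k \log^3 n)$ by \cref{par:singlefringebound}. Second, the \emph{normal recurrence}: an execution at depth $(r, r^\fringe)$ spawns in line \ref{line:normalrecurse} the executions $\ParallelSC(G'[V_i], k, r+1, 0)$, and since $V_1, \dots, V_\ell$ are disjoint subsets of $V(G')$ we have $\sum_i |V_i| \le |V(G')|$ deterministically. Every depth-$(r+1, 0)$ execution arises this way from a parent at depth $(r, r^\fringe)$ for \emph{some} $r^\fringe \ge 0$, so $N(r+1, 0) \le \sum_{r^\fringe \ge 0} N(r, r^\fringe)$.

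With these two recurrences the induction closes cleanly. Within fixed $r$, the inner induction on $r^\fringe$ uses the fringe recurrence, which multiplies the bound by $1/(2\log n)$ at each step and thus exactly produces the $(2\log n)^{-r^\fringe}$ factor. For the step $r \to r+1$, I would apply the normal recurrence together with the outer inductive hypothesis to get $N(r+1, 0) \le \sum_{r^\fringe \ge 0} n (1+\tfrac1{\log n})^r (2\log n)^{-r^\fringe} = n (1+\tfrac1{\log n})^r \cdot \frac{2\log n}{2\log n - 1}$, and then check the elementary inequality $\frac{2\log n}{2\log n - 1} \le 1 + \frac1{\log n}$ (equivalent to $\log n \ge 1$), which bumps the exponent from $r$ to $r+1$ as required; the inner induction on $r^\fringe$ at level $r+1$ then proceeds exactly as before. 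The edge statement follows verbatim with $m$ in place of $n$.

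The only genuinely delicate point is the accounting in the normal recurrence: depth-$(r+1,0)$ executions inherit from parents at \emph{every} fringe depth $r^\fringe \ge 0$, so one must sum a geometric series over $r^\fringe$, and it is precisely this geometric factor $\frac{2\log n}{2\log n - 1} \approx 1 + \frac{1}{2\log n}$ that the claimed growth rate $(1+\frac1{\log n})^r$ is calibrated to absorb (with a little slack, which is why the $\log n$-in-the-exponent bound survives all $r \le \log_k n$ levels). Everything else is routine linearity-of-expectation bookkeeping over the random recursion tree, leaning on the already-established \cref{par:pararvbound} and \cref{par:singlefringebound}.
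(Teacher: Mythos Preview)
Your proposal is correct and follows essentially the same approach as the paper's proof: a double induction on $(r, r^\fringe)$, using \cref{par:singlefringebound} for the fringe step $(r, r^\fringe) \to (r, r^\fringe+1)$ and a geometric sum over $r^\fringe$ (bounded by $1 + \tfrac1{\log n}$) for the normal step $(r, \cdot) \to (r+1, 0)$. Your write-up is in fact slightly more explicit than the paper's in evaluating the geometric series and in justifying the edge bound via the tail vertex, but the structure and key ideas are identical.
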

\begin{proof}
We focus on proving the first point, as the second is analogous. The first point follows directly by induction from a combination of \cref{par:pararvbound} and \cref{par:singlefringebound}. Specifically, we separate the cases $r^\fringe > 0$ and $r^\fringe = 0.$ In the former case, note that this must result as a recursive call of the form $\ParallelSC(G''[V^\ring_v],k,r,r^\fringe)$, where we also made a recursive call of the form $\ParallelSC(G'', k,r,r^\fringe-1).$ By induction and \cref{par:singlefringebound}, we know that the expected total number of vertices in all recursive calls to $\ParallelSC(G''[V^\ring_v],k,r,r^\fringe)$ is at most \[ \frac{1}{1000 k \log^3 n} \cdot n \cdot \left(1 + \frac{1}{\log n}\right)^r \cdot \frac{1}{(2 \log n)^{r^\fringe-1}} \le n \cdot \left(1 + \frac{1}{\log n}\right)^r \cdot \frac{1}{(2 \log n)^{r^\fringe}} \] as desired. We would like to note that the value $\frac{1}{1000 k \log^3 n}$ is much smaller than we need to prove \cref{par:expsize}; instead it is needed in \cref{par:parapathsplit} below.

In the case $r^\fringe = 0$, it must have resulted in a recursive call to $\ParallelSC(G''[V_i],k,r+1,0)$ for some digraph $G''$, where we also made a recursive call of the form $\ParallelSC(G'', k,r,r^\fringe).$ As the $V_i$ form a partition of $V(G'')$, we can see that the total number of vertices over all calls to $\ParallelSC(G''[V_i],k,r+1,0)$ for such $G''$ and $V_i$ is at most
\[ \sum_{r^\fringe \ge 0} n \cdot \left(1 + \frac{1}{\log n}\right)^r \cdot \frac{1}{(2 \log n)^{r^\fringe}} \le n \cdot \left(1 + \frac{1}{\log n}\right)^{r+1} \] as desired.
\end{proof}
Because exponentially few vertices are processed at large fringe recursion depths, we can guarantee that with high probability that $r^\fringe \le \log n$ always.
\begin{corollary}[Fringe depth is at most logarithmic]
\label{cor:fringedepth}
Consider an execution of $\ParallelSC(G, k, 0, 0)$ on an $n$-node $m$-edge digraph $G$. With probability at least $1 - n^{-10}$ we make no recursive calls of the form $\ParallelSC(G',k,r,r^\fringe)$ and $r^\fringe > \log n$.
\end{corollary}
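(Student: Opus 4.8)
The plan is to derive the corollary from the expected-total-size bound of \cref{par:expsize} by Markov's inequality, exploiting the super-polynomial decay of the factor $(2\log n)^{-r^\fringe}$.

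First I would reduce to a single fringe level. Let $f$ be the smallest integer strictly greater than $\log n$. Because the fringe parameter $r^\fringe$ is incremented by exactly one at each fringe recursion (line \ref{line:fringerecurse}) and is reset to $0$ only when $r$ is incremented (line \ref{line:normalrecurse}), along any root-to-node path of the recursion tree the fringe depths within a fixed value of $r$ form a consecutive run $0,1,2,\dots$. Hence if some recursive call has $r^\fringe > \log n$, i.e.\ $r^\fringe \ge f$, then there is also a recursive call $\ParallelSC(G',k,r,f)$ for some $G'$; moreover we may assume this $G'$ is nonempty, since a call on an empty subgraph has empty shortcutter set $S$ and therefore spawns no deeper fringe recursion (so one may simply never recurse on empty subgraphs). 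Consequently, letting $N_f$ be the total number of vertices summed over all recursive executions of the form $\ParallelSC(\cdot,k,r,f)$, over all $r$, the event in the corollary fails only if $N_f \ge 1$.

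Next I would bound $\E[N_f]$. Summing the first bullet of \cref{par:expsize} over all $0 \le r \le \log_k n$, and using $(1+1/\log n)^r \le (1+1/\log n)^{\log n} \le e$ (valid since $r \le \log_k n \le \log n$ for $k \ge 2$),
\[
\E[N_f] \;\le\; \sum_{r=0}^{\log_k n} n\left(1+\tfrac{1}{\log n}\right)^r (2\log n)^{-f} \;\le\; (\log_k n + 1)\cdot e\cdot n\cdot (2\log n)^{-f}.
\]
Since $f > \log n$ and $2\log n \ge 2$,
\[
(2\log n)^{f} \;\ge\; (2\log n)^{\log n} \;=\; 2^{\log n}\,(\log n)^{\log n} \;=\; n\,(\log n)^{\log n},
\]
so $\E[N_f] \le (\log n + 1)\,e\,/\,(\log n)^{\log n}$. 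As $(\log n)^{\log n} = 2^{\log n\log\log n}$ grows faster than any fixed polynomial in $n$, this is at most $n^{-10}$ for all sufficiently large $n$ (small $n$ being handled trivially). Markov's inequality then gives
\[
\Pr[\exists\text{ a call with } r^\fringe > \log n] \;\le\; \Pr[N_f \ge 1] \;\le\; \E[N_f] \;\le\; n^{-10},
\]
which is the claim.

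The main point requiring care — rather than a genuine obstacle — is the passage from the in-expectation statement of \cref{par:expsize} to the with-high-probability conclusion here; this is precisely where Markov together with the super-polynomial decay of $(2\log n)^{-r^\fringe}$ is essential, and one must verify that $(2\log n)^{\log n}$ dominates $n^{11}$ (equivalently that $\log\log n$ eventually exceeds the relevant constant). A secondary bookkeeping point is the empty-subgraph case noted above, which is why it suffices to track vertex counts rather than the number of calls.
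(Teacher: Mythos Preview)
Your proposal is correct and follows essentially the same approach as the paper: apply the expected-size bound of \cref{par:expsize}, sum over $r$, use the super-polynomial decay of $(2\log n)^{-r^\fringe}$, and conclude by Markov. The only cosmetic difference is that the paper sums over all $r^\fringe > \log n$ (obtaining $\sum_r \sum_{r^\fringe > \log n} n(1+1/\log n)^r (2\log n)^{-r^\fringe} \le 3n/(\log n)^{\log n} \le n^{-10}$), whereas you first reduce to the single boundary level $f$ via the one-step-at-a-time increment of $r^\fringe$; both arrive at the same conclusion and your extra care about empty subgraphs is not needed for the paper's version.
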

\begin{proof}
The expected value of the total number of vertices of the digraphs $G'$ in all calls to \\ $\ParallelSC(G', k, r, r^\fringe)$ where $r^\fringe > \log n$ is at most \[ \sum_{r = 0}^{\log_k n} \sum_{r^\fringe > \log n} n \cdot \left(1 + \frac{1}{\log n}\right)^r \cdot \frac{1}{(2 \log n)^{r^\fringe}} \le \frac{3n}{(\log n)^{\log n}} \le \frac{1}{n^{10}} \] by \cref{par:expsize}. Thus, the claim follows.
\end{proof}
Additionally, we can bound the total sizes of all digraphs we process during the algorithm.
\begin{corollary}[Total size bound]
\label{cor:totsize}
Consider an execution of $\ParallelSC(G, k, 0, 0)$ on an $n$-node $m$-edge digraph $G$.
\begin{itemize}
\item The expected value of the total number of vertices in the digraphs $G'$ in all recursive calls to $\ParallelSC(G',k,r,r^\fringe)$ for some $r \le \log_k n, r^\fringe$ is at most $3n \log n$.
\item The expected value of the total number of edges in the digraphs $G'$ in all recursive calls to $\ParallelSC(G',k,r,r^\fringe)$ for some $r \le \log_k n, r^\fringe$ is at most $3m \log n$.
\end{itemize}
\end{corollary}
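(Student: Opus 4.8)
The plan is to obtain both bounds by summing the per-level estimates of \cref{par:expsize} over all recursion depths and all fringe depths, using linearity of expectation; there is essentially no substantive obstacle here, only a short geometric-series computation.

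First I would note that, by linearity of expectation, the expected total number of vertices across all digraphs $G'$ that appear in some recursive call $\ParallelSC(G',k,r,r^\fringe)$ equals the sum over all valid pairs $(r,r^\fringe)$ of the expected total number of vertices among the calls at depth $(r,r^\fringe)$. Applying \cref{par:expsize} to each such pair, and noting that $r$ ranges over $0,\dots,\log_k n$ while $r^\fringe$ ranges over all nonnegative integers (the series in $r^\fringe$ converges, so no a priori bound on fringe depth is needed for this step), this quantity is at most
\[ \sum_{r=0}^{\log_k n}\sum_{r^\fringe\ge 0} n\left(1+\tfrac1{\log n}\right)^{r}\frac{1}{(2\log n)^{r^\fringe}} \;=\; n\left(\sum_{r=0}^{\log_k n}\left(1+\tfrac1{\log n}\right)^{r}\right)\left(\sum_{r^\fringe\ge 0}\frac{1}{(2\log n)^{r^\fringe}}\right). \]

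Next I would bound the two factors separately. The fringe series is geometric with ratio $\tfrac1{2\log n}\le\tfrac12$, hence equals $\frac{1}{1-1/(2\log n)}\le 1+\tfrac1{\log n}$. For the recursion-depth series the key observation is that $k\ge 2$ forces $\log_k n\le\log n$, so every term obeys $\bigl(1+\tfrac1{\log n}\bigr)^{r}\le\bigl(1+\tfrac1{\log n}\bigr)^{\log n}< e$; since there are at most $\log n+1$ terms, the series is at most $e(\log n+1)$. Multiplying, the expected total number of vertices is at most $n\bigl(1+\tfrac1{\log n}\bigr)\,e(\log n+1)$, which is $O(n\log n)$ and in particular is bounded by $3n\log n$ once $n$ exceeds a fixed constant. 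This proves the first claim. The second claim (edges) follows from the identical computation, using the edge half of \cref{par:expsize}, whose statement has exactly the same form with $m$ in place of $n$.

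The only point requiring any care is keeping the per-level blow-up factor $\bigl(1+\tfrac1{\log n}\bigr)^{r}$ bounded by an absolute constant, which is exactly why it matters that the recursion depth is capped at $\log_k n\le\log n$; given that, the rest is routine summation of geometric series. Optionally, \cref{cor:fringedepth} can be cited to remark that the terms with $r^\fringe>\log n$ are in fact negligible even before taking the expectation, w.h.p.
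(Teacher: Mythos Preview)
Your proposal is correct and takes essentially the same approach as the paper: both sum the per-level bounds from \cref{par:expsize} over all $(r,r^\fringe)$ and evaluate the resulting double geometric series. The paper's proof simply writes down the double sum and asserts it is at most $3n\log n$ without displaying the intermediate bounds on the two factors, so your version is just a more detailed rendition of the same argument.
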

\begin{proof}
Note that the total number of vertices over all the $G'$ in recursive calls is at most \[ \sum_{r = 0}^{\log_k n} \sum_{r^\fringe \ge 0} n \cdot \left(1 + \frac{1}{\log n}\right)^r \cdot \frac{1}{(2 \log n)^{r^\fringe}} \le 3n \log n \] by \cref{par:expsize}. The edge bound follows similarly.
\end{proof}
We now proceed towards proving our ultimate bounds on expected total work and expected number of shortcut edges added.
\begin{lemma}
\label{par:paralabelbound}
Consider an execution of $\ParallelSC(G, k, 0, 0)$ on an $n$-node, $m$-edge digraph $G$. Consider a recursive execution of $\ParallelSC(G',k,r,r^\fringe)$. For all integers $\kappa \in [\kappa_{2r+1}, \kappa_{2r}]$, with probability $1-n^{-10}$ over the choice of $S$ we have that for all vertices $v \in V(G')$, the number of vertices $u \in S$ such that $v \in R_{G'}(u, \kd)$ is at most $50k\log n.$
\end{lemma}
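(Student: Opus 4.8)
The plan is to follow the proof of the sequential analogue \cref{seq:labelbound} almost verbatim. The first step is an observation that cleans up the event in question: the relation ``$\kd$-related'' is symmetric, since $v \in R_{G'}(u, \kd)$ holds exactly when $d_{G'}(u,v) \le \kd$ or $d_{G'}(v,u) \le \kd$, which is precisely the condition $u \in R_{G'}(v, \kd)$. Hence for a fixed $v$, the set of $u \in S$ with $v \in R_{G'}(u,\kd)$ is exactly $S \cap R_{G'}(v,\kd)$. Since the algorithm puts each vertex of $V(G')$ into $S$ independently with probability $p_r$ (lines \ref{line:makes1}--\ref{line:makes2}), the cardinality $|S \cap R_{G'}(v,\kd)|$ is distributed as $B(|R_{G'}(v,\kd)|,\, p_r)$, a binomial over the fixed vertex set $R_{G'}(v,\kd)$ that does not depend on the sampling of $S$ inside this execution.

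The second step is to bound $|R_{G'}(v,\kd)|$ using \cref{par:pararvbound}. Because the lemma considers only integers $\kappa \in [\kappa_{2r+1}, \kappa_{2r}]$ we always have $\kappa \le \kappa_{2r}$, so $R_{G'}(v,\kd) \subseteq R_{G'}(v, \kappa_{2r}D)$; assuming the event of \cref{par:pararvbound} (which fails with probability at most $n^{-4}$ over the whole run) we get $|R_{G'}(v,\kd)| \le |\RD_{G'}(v,\kappa_{2r}D)| + |\RA_{G'}(v,\kappa_{2r}D)| \le 2nk^{-r}$. Thus $|S \cap R_{G'}(v,\kd)|$ is stochastically dominated by $B(2nk^{-r},\, p_r)$, whose mean is at most $2nk^{-r} \cdot \tfrac{10 k^{r+1}\log n}{n} = 20 k \log n$. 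In the degenerate case $p_r = 1$, the equality $p_r = 1$ forces $nk^{-r} \le 10 k \log n$, so $|R_{G'}(v,\kd)| \le 2nk^{-r} \le 20 k\log n < 50 k \log n$ holds deterministically and there is nothing to prove.

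The third step is the Chernoff estimate. Applying \cref{lemma:chernoff} with $\delta = 3/2$, so that $(1+\delta)\cdot 20k\log n = 50 k\log n$, gives $\Pr\big[\,|S \cap R_{G'}(v,\kd)| > 50 k\log n\,\big] \le \exp\!\big(-\tfrac{9}{14}\cdot 20 k\log n\big) = n^{-90k/7} \le n^{-12}$ for $k \ge 1$. Finally I would union bound over the at most $n$ vertices $v \in V(G')$ (and, if one wants the bound to hold simultaneously for every integer $\kappa$ in the range, also over the $\poly(k,\log n)$ many such integers, which costs only another polynomial factor), bringing the total failure probability below $n^{-10}$.

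I do not expect a genuine obstacle here. The only points needing care are the symmetry observation that turns the count of ``$u \in S$ with $v \in R_{G'}(u,\kd)$'' into a clean binomial over the fixed set $R_{G'}(v,\kd)$, and checking that \cref{par:pararvbound} is applicable at the random search radius $\kd$ — which it is, precisely because $\kappa \le \kappa_{2r}$. Everything past that is the same Chernoff-plus-union-bound computation as in the sequential setting.
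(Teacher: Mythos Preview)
Your proposal is correct and follows essentially the same approach as the paper: use \cref{par:pararvbound} to bound $|R_{G'}(v,\kd)| \le 2nk^{-r}$, observe the resulting count is binomial with mean at most $20k\log n$, apply the Chernoff bound, and union bound over vertices. Your treatment is in fact slightly more careful than the paper's, in that you make the symmetry of $R_{G'}(\cdot,\kd)$ explicit, handle the degenerate case $p_r=1$, and phrase the reduction via stochastic domination.
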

\begin{proof}
Fix $v \in V(G').$ By \cref{par:pararvbound}, the number of $u$ for which $v \in R_{G'}(u, \kd)$ is at most $2nk^{-r}.$ Therefore, the expected number of $u \in S$ for which $v \in R_{G'}(u, \kd)$ is at most $p_r \cdot 2nk^{-r} \le 20k\log n.$ By a Chernoff bound (\cref{lemma:chernoff}), we have that the number of vertices $u \in S$ such that $v \in R_{G'}(u, \kd)$ is at most $50k \log n$ with probability at least
\[ 1-\exp\left(-11k\log n\right) \ge 1-n^{-11}. \]
The claim follows by union bound.
\end{proof}
\begin{lemma}[Work and depth bound]
\label{par:finalparabound}
An execution of $\ParallelDiam(G, k)$ on an $n$-node, $m$-edge digraph $G$ can be implemented to do $\O(mk+nk^2)$ total work in expectation, add $\O(nk)$ shortcut edges in expectation, and have parallel depth $\O(\poly(k) \sqrt{2}^{\log_k n} n^\frac12)$ with high probability.
\end{lemma}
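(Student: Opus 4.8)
The plan is to bound the cost of a single invocation $\ParallelSC(G',k,r,r^\fringe)$, aggregate across the recursion tree of one top-level call $\ParallelSC(G,k,0,0)$, and then sum over the $O(\log^2 n)$ outer iterations of $\ParallelDiam$. Throughout I would condition on the intersection of the high-probability events from \cref{par:pararvbound}, \cref{cor:fringedepth}, and \cref{par:paralabelbound}, union-bounded over all ($\poly(n)$-many) recursive executions and all integer choices of $\kappa$; this conditioning fails with probability at most $n^{-3}$ and, since the total work is trivially polynomially bounded, perturbs any expectation by a negligible amount.

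For a single call on an $\hn$-vertex, $\hm$-edge digraph $G'$, there are four sources of cost: (i) the truncated breadth-first searches from each $v\in S$ to distance $(\kappa+1)D$ in lines \ref{line:addsc1}--\ref{line:bfsend}; (ii) adding the shortcut edges in lines \ref{line:addsc1}--\ref{line:addsc2}; (iii) applying the $\vD,\vA,\xmark$ labels in lines \ref{line:addanc}--\ref{line:addxmark}; and (iv) forming the partition $V_1,\dots,V_\ell$ in line \ref{line:vi}. For (ii)--(iii): a vertex $w$ receives a label from $v\in S$ only when $w\in R_{G'}(v,\kd)$, so by \cref{par:paralabelbound} each $w$ gets $\O(k)$ labels and both the labeling work and the label-induced shortcut count for this call are $\O(\hn k)$; the edges added to the ``extra'' copies $w\in R_{G'}(v,(\kappa+1)D)\setminus R_{G'}(v,\kd)$ are instead charged to the fringe, whose expected total size over the whole algorithm is controlled by \cref{par:singlefringebound}. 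For (i), a truncated BFS from $v$ does $O\big(\sum_{w\in R_{G'}(v,(\kappa+1)D)}(1+\deg_{G'}(w))\big)$ work; summing over $v\in S$ and splitting each explored set into its part inside $R_{G'}(v,\kd)$ (charged via \cref{par:paralabelbound}, giving $\O(k(\hn+\hm))$) and its part beyond $\kd$ (which lies in a recursive fringe set $V^\ring_v$, charged via \cref{par:singlefringebound} and negligible in expectation) gives $\O(k(\hn+\hm))$ work. For (iv), exactly as in the proof of \cref{seq:workbound} we parallel-sort the vertices by their $\O(k)$-element label multisets, at $\O(\hn k)$ work and $\O(1)$ depth. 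Hence one call does $\O(k(\hn+\hm))$ work, adds $\O(\hn k)$ edges, and (since $\kappa\le\kappa_0=\O(k^2)$ deterministically and a distance-$L$ BFS has depth $O(L\log n)$) has depth $\O(k^2)\cdot\sqrt{2}^{\log_k n}\,n^{1/2}$ from line \ref{line:setd}, with all other per-call work in $\O(1)$ depth.

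Aggregating over one top-level call, \cref{cor:totsize} gives that the expected sums of $\hn$ and of $\hm$ over all recursive executions are $\O(n)$ and $\O(m)$, so a single $\ParallelSC(G,k,0,0)$ does $\O(k(m+n))$ work and adds $\O(nk)$ edges in expectation; and since \cref{cor:fringedepth} bounds $r^\fringe\le\log n$ w.h.p. while $r\le\log_k n$, any root-to-leaf path in the recursion tree has $\O(1)$ calls, so the depth of one top-level call is $\O(k^2)\cdot\sqrt{2}^{\log_k n}\,n^{1/2}$ w.h.p. Now $\ParallelDiam$ runs $O(\log^2 n)$ calls to $\ParallelSC$, each adding $\O(nk)$ edges, so every later call runs on a graph with $m+\O(nk)$ edges and therefore does $\O(k(m+nk)+nk)=\O(mk+nk^2)$ work; summing over the $O(\log^2 n)$ iterations yields expected total work $\O(mk+nk^2)$, expected shortcut count $\O(nk)$, and depth $\O(\poly(k))\cdot\sqrt{2}^{\log_k n}\,n^{1/2}$ w.h.p. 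The ``abort if the work or shortcut count exceeds $10$ times the stated bound'' only decreases work and edge count (so the expectation bounds are preserved) while ensuring that, even on the negligible-probability bad event, no call runs long enough to violate the depth bound.

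The main obstacle is the fringe recursion. Because $\ParallelSC(G'[V^\ring_v],k,r,r^\fringe+1)$ keeps $r$ fixed, \cref{par:pararvbound} supplies no size decrease for it, so without care fringe subproblems could proliferate and blow up both work and depth; controlling this requires exactly the machinery already built — \cref{par:singlefringebound} and \cref{par:expsize}/\cref{cor:totsize} to show the expected total fringe size is geometrically small across levels, and \cref{cor:fringedepth} to show $r^\fringe\le\log n$ w.h.p. so the recursion depth stays poly-logarithmic — together with the bookkeeping point that the BFS work explored out to distance $(\kappa+1)D$ but accounted for by the distance-$\kd$ label bound of \cref{par:paralabelbound} differs only by vertices in the (small) fringe. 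The one remaining subtlety is combining the per-call high-probability cost bounds with the expected total-size bounds and with the abort so that the final work and shortcut bounds hold genuinely in expectation while the depth bound holds genuinely with high probability.
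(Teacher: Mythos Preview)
Your proposal is correct and takes essentially the same approach as the paper: bound per-call cost via \cref{par:paralabelbound}, aggregate over the recursion via \cref{cor:totsize}, handle line~\ref{line:vi} by parallel sorting (the paper explicitly invokes Cole's mergesort), bound depth by the BFS radius $\kappa_0 D$ times the $\O(1)$ recursion-tree depth, and then sum over the $O(\log^2 n)$ iterations of $\ParallelDiam$ accounting for the $\O(nk)$ accumulated shortcut edges. You are in fact slightly more careful than the paper in one place: you explicitly separate the BFS and shortcut-edge work out to radius $(\kappa+1)D$ from the radius-$\kappa D$ bound supplied by \cref{par:paralabelbound} and charge the difference to the fringe via \cref{par:singlefringebound}, whereas the paper's proof elides this distinction.
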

\begin{proof}
It suffices to show that an execution of $\ParallelSC(G, k, 0, 0)$ on an $n$-node, $m$-edge digraph $G$ can be implemented to do $\O(mk)$ total work in expectation, add $\O(nk)$ shortcut edges in expectation, and have parallel depth $\O(\poly(k) \sqrt{2}^{\log_k n} n^\frac12)$ with high probability. We show this in the two paragraphs below. Then the lemma follows as all digraphs $G$ that we call $\ParallelSC(G, k, 0, 0)$ on during an execution of $\ParallelDiam(G, k)$ will have $\O(m+nk)$ edges with high probability.

\paragraph{Work bound.} By \cref{par:paralabelbound}, it is clear that the number of shortcut edges added is within a multiplicative $\O(k)$ of the total number of vertices in all digraphs in all recursive subproblems with high probability, hence is $\O(nk)$ in expectation by \cref{par:expsize}.

Similarly, the total work from the breadth first searches is within a multiplicative $\O(k)$ of the total number of edges in all digraphs in all recursive subproblems, hence is $\O(mk)$ in expectation by \cref{par:expsize}. The remaining nontrivial work comes from line \ref{line:vi}. We implement line \ref{line:vi} in the following manner. Consider a recursive execution of the algorithm on a digraph $\ParallelSC(G', k, r, r^\fringe).$ We put an arbitrarily total ordering on the labels $v^\Anc$ and $v^\Des$, and for each vertex $u \in V(G')$ we assume that its set of labels, none of which are $\xmark$, are sorted according to this total ordering. Now, we sort all vertices $u \in V(G')$ that have no label which is $\xmark$ lexicographically by their set of labels using Cole's mergesort algorithm \cite{Cole93}, and then group them into the sets $V_i$ based on contiguous groups that have the same label. Each comparison takes work $\O(k)$ with high probability as every vertex $u$ has $\O(k)$ labels with high probability by \cref{par:paralabelbound}. The total work of the merge sort is thus $\O(k |V(G')|)$ as desired.

\paragraph{Depth bound.} Clearly, the breadth first searches (lines \ref{line:bfsstart} to \ref{line:bfsend}) can be implemented in $\O(\kappa_0 D) = \O(\poly(k) \sqrt{2}^{\log_k n} n^\frac12)$ parallel depth by our choice of $\kappa_0.$ Line \ref{line:vi} can be implemented in parallel depth $\O(k)$ via Cole's merge sort \cite{Cole93} as described in the above paragraph.

\end{proof}
\subsection{Diameter Bound after Shortcutting}
\label{sec:paralleldiam}
In this section we show the analogue of \cref{seq:diambound}.
\begin{lemma}[Inductive parallel diameter bound]
\label{par:paradiambound}
Perform an execution of $\ParallelSC(G, k, 0, 0)$ on an $n$-node, $m$-edge digraph $G$. Consider a recursive execution of $\ParallelSC(G',k,r,r^\fringe)$, with all shortcut edges added. For any path $P \subseteq G'$ with length at most $D$, the expected distance from $\head(P)$ to $\tail(P)$ using edges in $G'$ and shortcut edges is at most \[ 5(\br+1) \left( \sqrt{2} + \frac{1}{2\log n}\right)^{\br} s(P, G')^\frac12, \] where $\br = \log_k n - r$, and $s(P, G')$ is the number of partially path related vertices to $P$ in $G'$, i.e. $s(P, G') = |R_{G'}(P, \kappa_{2r}D)|$.
\end{lemma}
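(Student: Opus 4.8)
The plan is to mirror the proof of \cref{seq:diambound} exactly, but with three modifications that account for the distance-limited setting: first, every search is truncated at radius $\approx \kappa D$, so we must only reason about paths of length at most $D$ and we pick up ``fringe'' subproblems; second, the parameter $\br$ now ranges over a geometrically shrinking sequence $\kappa_{2r}$, which is why the base of the exponential is $\sqrt{2} + \tfrac{1}{2\log n}$ rather than $4\sqrt2$ (the $(1+\tfrac{1}{4\log n})^{-1}$ factors in line \ref{line:k2r} compound to a bounded overhead); third, the extra linear-in-$\br$ prefactor $5(\br+1)$ absorbs the per-level additive cost of stitching subpaths together. I would proceed by downward induction on $r$ (equivalently, upward induction on $\br$), with base case $\br = 0$, i.e. $r = \log_k n$: at the deepest recursion level the induced subgraph has a single vertex, so $P$ is trivial and the distance is $0$, establishing the claim.

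For the inductive step, fix a recursive execution $\ParallelSC(G',k,r,r^\fringe)$ and a path $P \subseteq G'$ of length at most $D$. First I would condition on the random choice of $\kappa$ (following the ``how to reason about the randomness'' paragraph), so that the sets of $\kd$-ancestors, $\kd$-descendants, and $\kd$-bridges of $P$ are determined. I would then invoke the distance-limited analogue of \cref{seq:2k+1lemma} (which the paper presumably proves as a lemma like \cref{par:parapathsplit}): if the shortcutter set $S$ contains a $\kd$-bridge of $P$, then since $P$ has length $\le D$ we can route from $\head(P)$ to $\tail(P)$ in $O(1)$ hops through the bridge using the shortcut edges added in lines \ref{line:addsc1}--\ref{line:addsc2} (the bridge reaches and is reached within $(\kappa+1)D$); otherwise $P$ splits into $t+1$ contiguous subpaths $P_1,\dots,P_{t+1}$ where $t=|S \cap R_{G'}(P,\kd)|$, each $P_i$ living inside some $G[V_{f(i)}]$ (or a fringe subproblem), with $\mathbb{E}\big[\sum_i s(P_i, G[V_{f(i)}])\big] \le \frac{2}{t+1} s(P,G')$ — but now I also need that each $P_i$ still has length at most $D$, which holds because subpaths of $P$ are no longer than $P$, and that the recursive call on each piece is at level $r+1$ (for the normal $V_i$ pieces) or has been handled via the fringe recursion. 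Crucially, because the fringe subproblems keep $r$ fixed rather than incrementing it, I'd need to argue (using \cref{par:singlefringebound} and \cref{cor:fringedepth}) that the contribution of fringe vertices to the total is negligible, or that the path pieces landing in fringe subproblems are eventually resolved within the $O(\log n)$ fringe levels — this is the place where the argument genuinely differs from \cref{sec:seq}.

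Granting that, the computation is the same as in \cref{seq:diambound}: conditioning on $|S|=t$, the expected shortcut length of $P$ is at most $(t+2) + \sum_{i=1}^{t+1} C_{\br-1}\, \mathbb{E}[s(P_i,\cdot)^{1/2} \mid |S|=t]$ where $C_{\br} = 5(\br+1)(\sqrt2 + \tfrac{1}{2\log n})^{\br} \cdot (\text{the }\kappa\text{-correction factor})$; applying Jensen to pull the square root inside, Cauchy--Schwarz to pull out a $\sqrt{t+1}$, and then the $\frac{2}{t+1}$ bound from the path-splitting lemma collapses $\sqrt{t+1}\cdot\sqrt{\tfrac{2}{t+1} s(P,G')} = \sqrt{2}\, s(P,G')^{1/2}$, leaving $(t+2) + \sqrt2\, C_{\br-1}\, s(P,G')^{1/2}$. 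One then checks that $\sqrt2\, C_{\br-1} + (t+2) \le C_{\br}$: the $\sqrt 2$ multiplies into the exponential base, the $(1+\tfrac{1}{4\log n})$-style factor accounts for the gap between $\kappa_{2r}$ and $\kappa_{2(r+1)}$, and since $s(P,G')^{1/2} \ge 1$ and $C_{\br-1} s(P,G')^{1/2} \ge 1$, the additive $(t+2)$ is absorbed by bumping $5\br$ up to $5(\br+1)$ — exactly as the ``$2 + \ldots \le \ldots$'' step in \cref{seq:diambound}. The main obstacle, and the step I'd spend the most care on, is the fringe bookkeeping: ensuring that when a subpath $P_i$ is duplicated into a fringe subproblem at the same level $r$ (so $\br$ does not decrease), the $O(\log n)$-bounded fringe depth together with the $\frac{1}{1000 k\log^3 n}$ expected-fringe-incidence bound of \cref{par:singlefringebound} prevents the recursion from failing to make progress, and that the extra additive hops incurred crossing between fringe copies are still swallowed by the $5(\br+1)$ slack. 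Everything else is a faithful distance-limited transcription of \cref{sec:seq}.
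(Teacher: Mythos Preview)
Your overall shape is right—induct on $\br$, invoke the parallel path-splitting lemma (\cref{par:parapathsplit}), and let Cauchy--Schwarz collapse the $t+1$ non-fringe pieces via the $\tfrac{2}{t+1}$ bound into a single factor of $\sqrt2$—but there is a genuine gap in how you absorb the additive term, and a misconception about where the $\tfrac{1}{2\log n}$ in the base comes from. The sequential trick you cite from \cref{seq:diambound} absorbs $t$ of the $t{+}2$ by \emph{doubling} each subpath contribution, which works only because the base there is $4\sqrt2$; here the base is $\sqrt2+o(1)$ and that $\sqrt2$ is already fully consumed by the Cauchy--Schwarz step, so there is no spare factor of $2$. ``Bumping $5\br$ to $5(\br{+}1)$'' buys slack of order $s(P,G')^{1/2}$, but $|S|$ can be $\Theta(k\log n)$ regardless of $s(P,G')$, so your inequality ``$\sqrt2\,C_{\br-1}+(t{+}2)\le C_{\br}$'' simply fails. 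The paper resolves this with a case split you omit: if $p_r \ge s(P,G')^{-1/2}$ then (since $P$ must be long for the claim to be nontrivial) w.h.p.\ some vertex of $P$ itself lands in $S$ and the path is bridged to length $2$ immediately; otherwise $\E[|S|] \le p_r\, s(P,G') \le s(P,G')^{1/2}$, and now the additive $2+2\E[|S|]$ really is absorbed by the $5(\br{+}1)$ prefactor.

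Separately, the $\tfrac{1}{2\log n}$ does not come from the $\kappa_{2r}$ vs.\ $\kappa_{2(r+1)}$ gap. The fringe subpaths $P^{S,\kappa,\fringe}_i$ live in subproblems at the \emph{same} level $r$, so the inductive hypothesis applies to them with coefficient $C_{\br}$, not $C_{\br-1}$ (the induction is implicitly lexicographic on $(r,r^\fringe)$, terminating via \cref{cor:fringedepth}). The induction closes because the fringe estimate \cref{eq:fringebound} in \cref{par:parapathsplit}, namely $\E\big[\sum_i s(P^{\fringe}_i,\cdot)\big] \le \tfrac{1}{1000k\log^3 n}\,s(P,G')$, combined with $|S|=O(k\log n)$ w.h.p., makes this term after Cauchy--Schwarz at most $C_{\br-1}\cdot\tfrac{1}{2\log n}\cdot s(P,G')^{1/2}$. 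Adding this to the non-fringe contribution $\sqrt2\, C_{\br-1}\, s(P,G')^{1/2}$ is exactly what produces the base $\sqrt2+\tfrac{1}{2\log n}$. (The decaying $\kappa_i$ are there only so that $s(\cdot,\cdot)$ is monotone across levels, not to pay for fringe cost.) One minor point: your base case ``the induced subgraph has a single vertex'' is the sequential argument; here the paper instead uses that at $\br=0$ one has $p_r=1$, so every vertex is a shortcutter and $P$ is bridged directly.
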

Note that we have that \[ 5(\log_k n+1) \left( \sqrt{2} + \frac{1}{2\log n}\right)^{\log_k n} n^\frac12 \le 10\log n \sqrt{2}^{\log_k n} n^\frac12 \le \frac{D}{10}, \] so that any path of length $D$ will be shortcut to expected length $\frac{D}{10}$ in one run of $\ParallelSC(G, k, 0, 0)$.

\paragraph{Setup for proof of \cref{par:paradiambound}.} An execution of $\ParallelSC(G', k, r, r^\fringe)$ will generate many recursive subproblems, each of the form $\ParallelSC(G'[V^\ring_v], k, r, r^\fringe+1)$ or \\ $\ParallelSC(G'[V_i], k, r+1, 0).$ These recursive subproblems (other than the fringe vertices) all involve disjoint sets of vertices, hence the path $P$ that we are trying to analyze gets ``split" into subpaths when we look at the recursive subproblems. Our next claim gives structure on how we can split up the path $P$ into subpaths, which are contained in our various subproblems. Here, we define $s(P, G')$ to be the number of partially path-related vertices to $P$ in $G'$.

\begin{lemma}[Splitting up a path]
\label{par:parapathsplit}
Consider the setup described in the above paragraph and \cref{par:paradiambound}. Let $\kappa$ be such that $\kappa_{2r+1} \le \kappa \le \kappa_{2r}$ and $S \subseteq R_{G'}(P, \kd)$ be the set of shortcutters. There exists paths $P^{S, \kappa}_i$ (possibly empty) for $1 \le i \le |S|+1$ and $P^{S, \kappa, \fringe}_i$ (possibly empty) for $1 \le i \le |S|$ such that
\begin{enumerate}
\item If a vertex $u \in S$ is a $\kd$-bridge, then all $P^{S, \kappa}_i$ and $P^{S, \kappa, \fringe}_i$ are empty.
\item If no vertex $u \in S$ is a $\kd$-bridge, then the vertex disjoint union of all $P^{S, \kappa}_i$ for $1 \le i \le |S|+1$ and $P^{S, \kappa, \fringe}_i$ for $1 \le i \le |S|$ is exactly $P$.
\item Each $P^{S, \kappa}_i$ is inside $V_{a_i}$ for some index $a_i$.
\item Each $P^{S, \kappa, \fringe}_i$ is a subset of $V^\ring_{u_i}$ for some vertex $u_i \in S$ (line \ref{line:fringerecurse}).
\end{enumerate}
Additionally, we have that \begin{equation}\label{eq:normalbound} \E_{\kappa, |S| = t} \left[\sum_{i=1}^{t+1} s(P^{S, \kappa}_i, G'[V_{a_i}])\right] \le \frac{2}{t+1} s(P, G') \end{equation} and  \begin{equation}\label{eq:fringebound} \E_{\kappa,S} \left[\sum_{i=1}^t s(P^{S, \kappa, \fringe}_i, G'[V^\ring_{u_i}]) \right] \le \frac{1}{1000 k \log^3 n} s(P, G'), \end{equation} where $\E_{\kappa, |S| = t}$ refers first selecting $\kappa \in [\kappa_{2r+1}, \kappa_{2r}]$ uniformly at random and then picking the shortcutter set $S$ as a random subset of $R_{G'}(P, \kd)$ of size exactly $t$ (as opposed to the way $S$ is selected in \cref{algo:parallelsc}). $\E_{\kappa, S}$ refers to first selecting $\kappa \in [\kappa_{2r+1}, \kappa_{2r}]$ uniformly at random and then selecting $S$ as in \cref{algo:parallelsc}, where every vertex $u \in V(G')$ is in $S$ with probability $p_r.$
\end{lemma}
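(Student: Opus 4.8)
The plan is to mirror the structure of the proof of \cref{seq:2k+1lemma} from the sequential setting, adapting the construction of the path partition and the counting argument to the distance-limited regime. First I would fix the random choice of $\kappa$ at the outset (as recommended in the ``How to reason about the randomness'' paragraph) so that the sets of $\kd$-ancestors, $\kd$-descendants, and $\kd$-bridges relative to $P$ are determined. Given $S \subseteq R_{G'}(P, \kd)$: if $S$ contains a $\kd$-bridge $u$, then by the distance-limited definition of $\kd$-bridge we have $v_0 \pe^{(\kappa+1)D} u \pe^{(\kappa+1)D} v_\ell$, so the two shortcut edges $(v_0,u)$ and $(u,v_\ell)$ added in lines \ref{line:addsc1}--\ref{line:addsc2} connect the endpoints of $P$, and we simply set all $P^{S,\kappa}_i$ and $P^{S,\kappa,\fringe}_i$ to be empty; condition 1 holds. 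Otherwise every $u \in S$ is a $\kd$-ancestor or $\kd$-descendant of $P$, and I would partition $P$ according to the labels assigned by the vertices of $S$ together with membership in the fringe sets $V^\ring_{u}$. The fringe subpaths $P^{S,\kappa,\fringe}_i$ are the maximal subpaths of $P$ lying in some $V^\ring_{u_i}$, and the remaining subpaths $P^{S,\kappa}_i$ are the maximal runs of vertices in $W$ with a common label vector; since each $u \in S$ (being a pure ancestor or descendant within distance $\kd$) induces a single ``cut'' of the contiguous path $P$ into at most two labeled pieces, the total number of non-fringe pieces is at most $|S|+1$ and the number of fringe pieces is at most $|S|$, giving conditions 2, 3, 4. (One must check that fringe vertices are correctly excised before the label-partition is taken, so that the $P^{S,\kappa}_i$ genuinely lie in the $V_i$'s rather than straddling a fringe region — this follows because $V^\ring_v \subseteq R(v,(\kappa+1)D)$ while the $V_i$ are built from $W$, and a vertex at distance $\le \kd$ from $v$ gets a non-fringe label.)

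For \eqref{eq:normalbound}, I would run exactly the argument of \cref{seq:2k+1lemma}: pick a uniformly random $v \in R_{G'}(P, \kd)$ so that $v, s_1, \dots, s_t$ is a uniform size-$(t{+}1)$ sample from the $s(P,G')$ partially-path-related vertices (here ``partially'' because the bound is stated with $\kappa_{2r}D$, but after $\kappa$ is fixed the relevant set is $R_{G'}(P,\kd) \subseteq R_{G'}(P,\kappa_{2r}D)$, which only helps). Split into the case where $S$ contains a $\kd$-bridge (all pieces empty, so $v$ contributes nothing) and the case where it does not. In the latter case, conditioning on no bridge being sampled, the ancestors and descendants among the sample are distributed as in the sequential proof, and \cref{seq:1llemma} applies verbatim with $\pe$ replaced by $\pe^{\kd}$ and ``$\alpha(x)$ = lowest index reachable within $\kd$'' — the two structural conditions on which ancestor survives (minimality under $\pe^{\kd}$, and minimal $\alpha$-value among $\pe^{\kd}$-unrelated elements) go through because the labeling rule in lines \ref{line:addanc}--\ref{line:addxmark} is itself distance-$\kd$-based. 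The same $(1-g)^t(\tfrac{2}{t+1}+g) \le \tfrac{2}{t+1}$ computation then bounds the probability that $v$ remains path-related to some non-fringe piece, which is \eqref{eq:normalbound}.

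For \eqref{eq:fringebound}, the key observation is that a vertex $u$ can contribute to $\sum_i s(P^{S,\kappa,\fringe}_i, G'[V^\ring_{u_i}])$ only if $u$ lands in some fringe set $V^\ring_{u_i} = R_{G'}(u_i,(\kappa+1)D)\setminus R_{G'}(u_i,(\kappa-1)D)$ for a sampled shortcutter $u_i$ that is path-related to $P$ within distance $\kd$ — and for \emph{that} shortcutter $u_i$ to also be near $P$. I would bound the expected number of (partially-path-related $v$, sampled path-related $u_i$) pairs with $v \in V^\ring_{u_i}$ by combining: (i) $|R_{G'}(P,\kd)| = s(P,G')$ candidate centers, each in $S$ with probability $p_r$; (ii) \cref{par:singlefringebound}, which says that for any fixed vertex the expected number of fringe subproblems it falls into is at most $\frac{1}{1000k\log^3 n}$ — this is exactly where the wasteful-looking constant $\frac{1}{1000k\log^3 n}$ from the choice of $\kappa_{2r},\kappa_{2r+1},p_r$ is spent. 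Summing the per-vertex bound of \cref{par:singlefringebound} over the $s(P,G')$ partially-path-related vertices $v$, and noting each such $v$ contributes to the fringe sum at most once per fringe subproblem it enters, yields $\E[\sum_i s(P^{S,\kappa,\fringe}_i,\cdot)] \le \frac{1}{1000k\log^3 n}\, s(P,G')$.

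\textbf{Main obstacle.} The routine parts — the bridge case, the piece-counting, and transcribing \cref{seq:1llemma} to $\pe^{\kd}$ — are mechanical. The delicate point is the \emph{interaction} between the fringe excision and the label-based partition: I need that removing fringe vertices and then partitioning $W$ by labels produces pieces that are genuinely contained in the $V_i$ \emph{and} that the ``cut'' structure (each shortcutter cuts $P$ in at most one place) survives the removal, so that the $|S|+1$ bound on non-fringe pieces and the $|S|$ bound on fringe pieces both hold simultaneously. Establishing this cleanly — i.e., that along $P$ the fringe regions of distinct shortcutters and the label-change points interleave in a controlled way, and that $P$ restricted to $W$ really does decompose as claimed — is where the argument needs the most care, and it is the reason the lemma carries two separate bounds rather than one.
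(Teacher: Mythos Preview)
Your proposal is essentially correct and follows the same overall strategy as the paper: construct the partition, verify items 1--4, prove \eqref{eq:fringebound} by summing \cref{par:singlefringebound} over partially path-related vertices, and prove \eqref{eq:normalbound} by a distance-limited version of \cref{seq:1llemma}. Two differences are worth noting.

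\textbf{Construction of the partition.} You describe the construction as ``excise the fringe subpaths, then partition what remains by label vector.'' The paper does it differently: for each $u\in S$ it explicitly defines three contiguous pieces $W^{\top}(u), W^{\mathrm{fringe}}(u), W^{\bot}(u)$ of $P$ (using the \emph{two} radii $\kd$ and $(\kappa-1)D$), shows these cover $P$, and then for $T\subseteq S$ sets $W_T = \bigcap_{u\in T}W^{\top}(u)\cap\bigcap_{u\notin T}W^{\bot}(u)$. The $W_T$'s together with the $W^{\mathrm{fringe}}(u)$'s form an \emph{overlapping} cover of $P$ by $2|S|+1$ subpaths, from which one then extracts pairwise-disjoint subpaths $Z_i$. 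This cover-then-disjointify approach sidesteps exactly the obstacle you flagged: it never needs to argue that removing fringe leaves contiguous constant-label runs, because membership in a $V_i$ is guaranteed piece-by-piece via the explicit $W_T$ definition. Your excise-then-partition route can be made to work, but the cleanest fix is the paper's.

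\textbf{Proof of \eqref{eq:normalbound}.} You propose to rerun the full computation from \cref{seq:2k+1lemma}, including the binomial step and the $(1-g)^t(\tfrac{2}{t+1}+g)\le\tfrac{2}{t+1}$ inequality. The paper instead states a purely combinatorial claim (for any $t{+}1$ vertices in $R_{G'}(P,\kd)$, at most two choices of which one to remove leave the removed vertex path-related in a non-fringe subproblem), proved via the distance-limited analogue of \cref{seq:1llemma}, and then gets $\tfrac{2}{t+1}$ by symmetry and linearity without any distributional computation. Both arrive at the same bound; the paper's route avoids the hypergeometric-vs-binomial bookkeeping that a literal transcription of \cref{seq:2k+1lemma} would incur when $S$ is a uniform $t$-subset rather than i.i.d.
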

Here, we assert that $S \subseteq R_{G'}(P, \kd)$ as searching to distance $\kd$ from vertices $u \not\in R_{G'}(P, \kd)$ does not affect the path $P$.
\begin{proof}
We describe a process for constructing the $P^{S, \kappa}_i$ and $P^{S, \kappa, \fringe}_i$, and then verify that this construction satisfies all the necessary constraints.

If a vertex $u \in S$ is a $\kd$-bridge, then we set all the $P^{S, \kappa}_i$ and $P^{S, \kappa, \fringe}_i$ to be empty. Otherwise, first consider the case where $u \in S$ is a $\kd$-descendant. Let $v_0 \pe v_1 \pe \cdots \pe v_\ell$ be the vertices on the path $P$. Define \[ \bot(u) = \min \{i : 0 \le i \le \ell, v_i \in \RA_{G'}(u, \kd) \} \text{ and }\] \[ \top(u) = \min\left(\bot(u)+1, \max \{i : 0 \le i \le \ell, v_i \in \RA_{G'}(u, (\kappa-1)D) \} \right) \] and $\top(u) = \bot(u)+1$ in the case that the set $\{i : 0 \le i \le \ell, v_i \in \RA_{G'}(u, (\kappa-1)D) \}$ is empty. Define \[ W^\top(u) = \{ v_i : 0 \le i < \top(u) \}, W^\bot(u) = \{ v_i : \bot(u) < i \le \ell \}, W^\fringe(u) = \{ \top(u) \le i \le \bot(u) \}. \] In particular, it is easy to verify that $W^\fringe(u) \subseteq V^\fringe_u.$ For the case where $u \in S$ is a $\kd$-ancestor, define \[ \top(u) = \max \{i : 0 \le i \le \ell, v_i \in \RD_{G'}(u, \kd) \} \text{ and } \] \[ \bot(u) = \max\left(\top(u)-1, \min \{i : 0 \le i \le \ell, v_i \in \RD_{G'}(u, (\kappa-1)D) \}\right), \] where $\bot(u) = \top(u)-1$ in the case that the set $\{i : 0 \le i \le \ell, v_i \in \RD_{G'}(u, (\kappa-1)D) \}$ is empty. We can now define $W^\top(u), W^\bot(u)$, and $W^\fringe(u)$ analogously.

Now, our goal is to define a set of natural subpaths of $P$ which are inside various $V_i$ (line \ref{line:vi}). Specifically, for a subset $T \subseteq S$, we define \[ W_T = \left(\bigcap_{u \in T} W^\top(u) \right) \cap \left(\bigcap_{u \in S\bs T} W^\bot(u) \right). \] $W_T$ was defined so that it is inside some subset $V_i$ (line \ref{line:vi}) and is a subpath of $P$. To show that $W_T$ is contained inside some $V_i$ it suffices to show that all vertices in $W^\top(u)$ receive the same label from $u$ (and by symmetry the same holds for $W^\bot(u)$). In the case where $u$ is a $\kd$-descendant of $P$, it is easy to verify that $W^\top(u) \subseteq \RA_{G'}(u, \kd)$, and that all vertices in $W^\bot(u)$ are unrelated to $u$. In the case where $u$ is a $\kd$-ancestor, it is easy to verify that $W^\bot(u) \subseteq \RD_{G'}(u, \kd)$, and that all vertices in $W^\top(u)$ are unrelated to $u$. This shows that $W_T$ always lies inside $V_i$ for some $i$.

It is easy to check that the number of nonempty $W_T$ over subsets $T \subseteq S$ is at most $|S|+1$. We denote these sets  $W_1, W_2, \dots, W_{|S|+1}.$ For simplicity, label the paths $W^\fringe(u)$ for $u \in S$ as $W_{|S|+2}, \dots, W_{2|S|+1}.$ By definition, we see that $P$ itself is the (not necessarily disjoint) union of $W_1, W_2, \dots, W_{2|S|+1}$ . Now, one can easily check that for any paths $W_1, W_2, \dots, W_{2|S|+1}$ whose union is $P$, then there are subpaths $Z_1, Z_2, \dots, Z_{2|S|+1}$ (possibly empty) such that $Z_i \subseteq W_i$ and that the vertex disjoint union of the $Z_i$ is $P$.

Now, we define the $P^{S, \kappa}_i$ as the subpaths $Z_j$ for $1 \le j \le |S|+1$, and define the $P^{S, \kappa, \fringe}_i$ as the subpaths $Z_j$ for $|S|+2 \le j \le 2|S|+1$, which then by definition were subpaths of $W^\fringe(u)$ for some $u$. Now we verify that these choices satisfy the conditions of \cref{par:parapathsplit}.

\paragraph{Construction satisfies item 1.} This is by definition (see the first sentence of the proof).

\paragraph{Construction satisfies item 2.} We have defined the $Z_j$ so that their vertex disjoint union is $P$, and each of the $P^{S, \kappa}_i$ and $P^{S, \kappa, \fringe}_i$ is one of the $Z_j$ or empty.

\paragraph{Construction satisfies item 3.} The definition \[ W_T = \left(\bigcap_{u \in T} W^\top(u) \right) \cap \left(\bigcap_{u \in S\bs T} W^\bot(u) \right) \] corresponds to assigning the labels to vertices $V(G')$ and partitioning into sets $V_i.$ As each $P^{S, \kappa}_i$ was a subpath of some $W_T$, the claim follows.

\paragraph{Construction satisfies item 4.} Each $P^{S, \kappa, \fringe}_i$ is a subpath of one of $W_{|S|+2}, \dots, W_{2|S|+1}$, which were each $W^\fringe(u)$ for some $u$. We have noted that $W^\fringe(u) \subseteq V^\ring_u.$

\paragraph{Construction satisfies \cref{eq:fringebound}.} Note that \begin{equation}\label{eq:kanyewest} \E_{\kappa,S} \left[\sum_{i=1}^t s(P^{S, \kappa, \fringe}_i, G'[V^\ring_{u_i}]) \right] \le \E_{\kappa, S} \left[\sum_{v \in S} \left|V_v^\ring \cap R_{G'}(P, \kappa_{2r}D)\right| \right] \end{equation} as no vertices outside of $R_{G'}(P, \kappa_{2r} D)$ can become path related. The right hand side of \cref{eq:kanyewest} is counting for every vertex $u$ in $R_{G'}(P, \kappa_{2r}D)$ the expectation over the choices of $\kappa, S$ of the number of times that $u$ is in a fringe subproblem, i.e. $u \in V^\ring_v$ for some $v \in S.$ By applying \cref{par:singlefringebound}, we can see that this expression in \cref{eq:kanyewest} is with high probability at most \[ \E_{\kappa, S} \left[\sum_{v \in S} \left|V_v^\ring \cap R_{G'}(P, \kappa_{2r}D)\right| \right] \le \frac{1}{1000 k \log^3 n} |R_{G'}(P, \kappa_{2r}D)| = \frac{1}{1000 k \log^3 n} s(P, G'). \]

\paragraph{Construction satisfies \cref{eq:normalbound}.} The proof essentially follows the same shape as \cref{seq:2k+1lemma} and \cref{seq:1llemma}. Our main claim is the following.
\begin{claim}
\label{par:claim1}
\label{par:claim1}
Consider an arbitrary subset of vertices $\{u_1, u_2, \cdots, u_{t+1} \} \subseteq R_{G'}(P, \kd)$. Then there are at most $2$ indices $1 \le h \le t+1$ such that setting $S = \{u_j : j \neq h \}$ and running the procedure described so far in the proof of \cref{par:parapathsplit} to produce the sets $P^{S, \kappa}_i$ and $P^{S, \kappa, \fringe}_i$ results in $u_h$ still being path related in a non-fringe subproblem, i.e. $u_h \in R_{G'[V_{a_i}]}(P^{S, \kappa}_i, \kd)$ for some $1 \le i \le t+1.$
\end{claim}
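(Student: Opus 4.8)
The plan is to follow the template of \cref{seq:1llemma} and \cref{seq:2k+1lemma}, paying extra attention to the distance thresholds and to the fringe. First I would dispose of bridges. If two or more of $u_1,\dots,u_{t+1}$ are $\kd$-bridges of $P$, then for every $h$ the set $S=\{u_j:j\neq h\}$ still contains a $\kd$-bridge, so by item~1 of \cref{par:parapathsplit} all the $P^{S,\kappa}_i$ are empty and no index $h$ survives; if exactly one $u_{j_0}$ is a $\kd$-bridge, then for $h\neq j_0$ the set $S$ still contains that bridge, so only $h=j_0$ can survive, which is at most one index. Hence I may assume that none of $u_1,\dots,u_{t+1}$ is a $\kd$-bridge, and write $X$ (resp.\ $Y$) for those $u_j$ that are $\kd$-ancestors (resp.\ $\kd$-descendants) of $P$. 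It then suffices to show that at most one element of $X$ survives and, by the symmetric argument, at most one element of $Y$ survives.

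Next I would discard the opposite-type shortcutters. Fix $u_h\in X$ and let $S=\{u_j:j\neq h\}$; I claim that if $u_h$ survives in some non-fringe subproblem with this $S$, then it already survives when the shortcutter set is taken to be $X\setminus\{u_h\}$ alone. The point is that $W^{\top}(u)$, $W^{\bot}(u)$, $W^{\fringe}(u)$ for $u\in X\setminus\{u_h\}$ depend only on $u$ and $P$, so dropping the $\kd$-descendants in $Y$ only coarsens the partition of $P$: the non-fringe piece $u_h$ survives on is contained in a single non-fringe piece $Q$ of the coarser partition; $u_h$ agrees with all of $Q$ on the labels coming from $X\setminus\{u_h\}$ (it agreed with a subpath of $Q$ on all of $S$'s labels, and $Q$ carries a single consistent $(X\setminus\{u_h\})$-labeling); and the subgraph in which $u_h$ reaches a vertex of its piece only enlarges, so the witnessing distance stays $\le\kd$. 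Thus I reduce to the case in which every shortcutter in $S$ is a $\kd$-ancestor of $P$.

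With all of $X$ being $\kd$-ancestors of $P$ I would establish the distance-limited analogue of \cref{seq:1llemma}: for $u_h\in X$ with $S=X\setminus\{u_h\}$, survival of $u_h$ in a non-fringe subproblem forces two structural conditions. First, $u_h\not\pe^{\kd}u$ for every $u\in S$: otherwise $u$ either assigns $u_h$ an $\xmark$ label, so $u_h$ is deleted in line~\ref{line:remxmark}, or assigns $u_h$ a label that the non-bridge $\kd$-ancestor $u$ assigns to no non-fringe vertex of $P$ (the non-fringe vertices of $P$ only ever receive $u$'s ``reached-by'' label or no label, since no path vertex reaches $u$ within $\kd$), so $u_h$ lies in the group of no non-fringe subpath. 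Second, a strict minimality of $u_h$ in ``how early it taps $P$'': for each $u\in S$ that is unrelated to $u_h$ within distance $\kd$, the piece $Q$ that $u_h$ survives on cannot lie in $W^{\bot}(u)$ (all of whose vertices carry a label $u_h$ does not), so $Q\subseteq W^{\top}(u)$, whence $u_h$ reaches a path vertex strictly on the head side of the cut that $u$ induces on $P$; unwinding the definitions of $\top(\cdot)$, $\bot(\cdot)$ and using that $P$ has length at most $D\ll\kd$, this yields a strict inequality between an appropriate ``leftmost $\kd$-reach'' statistic of $u_h$ and the corresponding statistic of $u$. Given both conditions, two surviving elements of $X$ would be unrelated within $\kd$ (by the first condition applied to each) and then comparable under this statistic (by the second), so the one that taps $P$ later cannot survive; hence at most one element of $X$ survives, and symmetrically at most one of $Y$, giving at most $2$ indices $h$ altogether.

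I expect the two reductions to be routine, and the real work to lie in the second condition and the concluding order argument. Unlike the purely reachability-based \cref{seq:1llemma}, the set of path indices reachable from a given vertex within distance $\kd$ need not form an interval, so one has to exploit the gaps between $(\kappa-1)D$, $\kd$, and $(\kappa+1)D$ that are built into $\top(\cdot)$ and $\bot(\cdot)$, together with $|P|\le D$, both to certify that a surviving piece lies entirely on one side of every shortcutter's cut and to pin down reach statistics for which the implied relation on $X$ really is a strict order forbidding two simultaneous survivors. One must also check that vertices falling into some $W^{\fringe}(u)$ are genuinely excluded from the non-fringe subproblems, so the fringe recursion cannot rescue a vertex the argument has declared dead.
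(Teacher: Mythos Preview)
Your proposal is correct and follows essentially the same route as the paper. The paper likewise disposes of bridges first, then isolates a restricted claim (\cref{par:claim2}) for the case where all $u_i$ are $\kd$-ancestors (resp.\ $\kd$-descendants), and applies it separately to the ancestors $X$ and descendants $Y$ of $S'$ to get the bound of two. Your two ``survival'' conditions in the all-ancestors case are exactly the paper's: (i) $u_h\not\pe^{\kd}u$ for every other ancestor $u$, and (ii) strict minimality of a leftmost-reach statistic among those satisfying (i). The specific statistic the paper uses is $\alpha(u)=\min\{i: v_j\in\RD_{G'}(u,\kd)\text{ for all }j\ge i\}$; you correctly anticipate that the raw set of indices $u$ can reach within $\kd$ need not be an interval and that the $D$-gap between $(\kappa-1)D$, $\kd$, $(\kappa+1)D$ together with $|P|\le D$ is what makes such a statistic well-behaved. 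One minor difference: where you justify the reduction from full $S$ to $X\setminus\{u_h\}$ by an explicit monotonicity argument on the partition, the paper simply asserts the reduction ``immediately implies'' \cref{par:claim1}; in effect both are relying on the fact that the two necessary conditions above reference only the other ancestors and are enforced whenever those ancestors are in $S$, regardless of what descendants accompany them.
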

Now we explain why \cref{par:claim1} implies \cref{eq:normalbound}. Intuitively, in \cref{eq:normalbound}, we can think of choosing the set $S$ with $|S| = t$ instead as first uniformly randomly choosing a subset $S' \subseteq R_{G'}(P, \kd)$ with $|S'| = t+1$ and then choosing uniformly randomly choosing $S \subseteq S'$ with $|S| = t.$ We then apply \cref{par:claim1}. Formally, we have the following, where $\one(S, u)$ is the function evaluating to $1$ when $u \in R_{G'[V_{a_i}]}(P^{S, \kappa}_i, \kd)$ for some $1 \le i \le t+1$ and $0$ otherwise.
\begin{align*}
&\E_{\kappa, |S| = t} \left[\sum_{i=1}^{t+1} s(P^{S, \kappa}_i, G'[V_{a_i}])\right] \\
&= \E_{\kappa, |S| = t} \left[\sum_{u \in R_{G'}(P, \kd)} \one(S, u)\right] \\
&\le s(G', P) \cdot \E_{\kappa, |S'| = t+1} \E_{\substack{S \subset S' \\ |S| = t}} \left[\one(S'\bs S, S) \right] \\
&\le \frac{2}{t+1} s(G', P)
\end{align*}
by \cref{par:claim1} as desired. Between the second and third lines we have used linearity of expectation on the inner sum and symmetry. For the remainder of this proof, we focus on proving \cref{par:claim1}. We instead prove a slightly stronger and more restrictive version of \cref{par:claim1} in the case that all the $u_i$ are path ancestors (or all path descendants).
\begin{claim}
\label{par:claim2}
Consider an arbitrary subset of vertices $\{u_1, u_2, \cdots, u_{t+1} \} \subseteq R_{G'}(P, \kd)$ such that the $u_i$ are all path ancestors or all path descendants. Then there is at most one index $1 \le h \le t+1$ such that setting $S = \{u_j : j \neq h \}$ and running the procedure described so far in the proof of \cref{par:parapathsplit} to produce the sets $P^{S, \kappa}_i$ and $P^{S, \kappa, \fringe}_i$ results in $u_h$ still being path related in a non-fringe subproblem, i.e. $u_h \in R_{G'[V_{a_i}]}(P^{S, \kappa}_i, \kd)$ for some $1 \le i \le t+1.$
\end{claim}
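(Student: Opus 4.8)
The plan is to mimic the proof of \cref{seq:1llemma}, with the distance-limited searches and the fringe construction of \cref{par:parapathsplit} as the only new complications. Fix the random parameter $\kappa$, and by the symmetry between ancestors and descendants (reversing the orientation of $P$) assume that $u_1,\dots,u_{t+1}$ are all $\kd$-ancestors of $P = \langle v_0,\dots,v_\ell\rangle$, which we index from head to tail. For a $\kd$-ancestor $u$, let $\alpha(u)$ be the smallest index $i$ with $v_i \in \RD_{G'}(u,\kd)$ (this exists since $u$ reaches some vertex of $P$ within $\kd$); this is the distance-limited analogue of the quantity $\alpha(\cdot)$ used in \cref{seq:1llemma}.

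Two preliminary observations reduce the situation to the one handled there. First, since each $u_j$ is a $\kd$-ancestor of $P$, no vertex of $P$ reaches $u_j$ within distance $\kd$; hence in lines \ref{line:addanc} to \ref{line:addxmark} of \cref{algo:parallelsc} every $v_i$ receives from $u_j$ either the label $u_j^{\Anc}$ (when $v_i \in \RD_{G'}(u_j,\kd)$) or no label at all (when $v_i$ is unrelated to $u_j$ within $\kd$) --- never $u_j^{\Des}$ nor $\xmark$. Second, the hypothesis $\ell \le D$ forces the relation of $u_j$ to $P$ to be ``interval-like'' once the fringe is removed: if $v_i \in \RD_{G'}(u_j,\kd)$ and $i' \ge i$, then tracing along $P$ gives $v_{i'} \in \RD_{G'}(u_j,(\kappa+1)D)$, so a vertex $v_{i'}$ that lies on a \emph{non}-fringe subpath $P^{S,\kappa}_i$ (hence is not in $V^{\ring}_{u_j}$) must in fact satisfy $v_{i'} \in \RD_{G'}(u_j,(\kappa-1)D) \subseteq \RD_{G'}(u_j,\kd)$ and therefore carry the label $u_j^{\Anc}$. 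In other words, on the non-fringe subpaths the ``$u_j^{\Anc}$ versus no label'' dichotomy is monotone in the index, exactly as the ``ancestor versus unrelated'' dichotomy is for full reachability in \cref{seq:1llemma}.

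With these in hand I would run the argument of \cref{seq:1llemma} essentially verbatim. Suppose $S = \{u_j : j \ne h\}$ and $u_h \in R_{G'[V_{a_i}]}(P^{S,\kappa}_i,\kd)$ for some $i$. Since no vertex of $P$ reaches $u_h$ within $\kd$, there is a vertex $v_q \in P^{S,\kappa}_i$ with $u_h \pe^{\kd} v_q$, whence $q \ge \alpha(u_h)$. Then: (i) $u_h$ must be minimal, $u_h \not\pe^{\kd} u_j$ for every $j \ne h$, for otherwise $u_h$ receives a $u_j^{\Des}$ or $\xmark$ label while every vertex of $P^{S,\kappa}_i$ receives $u_j^{\Anc}$ or no label from $u_j$, contradicting $u_h \in V_{a_i}$; and (ii) $\alpha(u_h)$ must be strictly smallest among the $u_j$ unrelated to $u_h$ within $\kd$, for otherwise, taking such a $u_j$ with $\alpha(u_j) \le \alpha(u_h) \le q$, the second observation forces $v_q$ --- lying on the non-fringe subpath $P^{S,\kappa}_i$ --- to carry the label $u_j^{\Anc}$, while $u_h$ carries no label from $u_j$, again contradicting $u_h \in V_{a_i}$. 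Finally, as in \cref{seq:1llemma}, at most one index satisfies both (i) and (ii): if two do and are $\kd$-related in $G'$, the one reaching the other violates (i); if they are unrelated, they are strictly ordered by $\alpha$ and the one with larger $\alpha$ violates (ii). This proves \cref{par:claim2}, and \cref{par:claim1} follows by partitioning $\{u_1,\dots,u_{t+1}\}$ into its path-ancestors and its path-descendants and applying \cref{par:claim2} to each (the $\kd$-bridges being killed entirely by item~1 of \cref{par:parapathsplit}), for a total of at most $1 + 1 = 2$ surviving indices.

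The part I expect to be genuinely fiddly is the second observation, and in particular the precise claim that the definitions of $\top(\cdot),\bot(\cdot),W^{\top},W^{\bot},W^{\fringe}$ route every path vertex that is related to a shortcutter $u_j$ only at the intermediate scale $((\kappa-1)D,(\kappa+1)D]$ into the fringe $W^{\fringe}(u_j)$, notwithstanding that $d_{G'}(u_j,v_i)$ need not be monotone along $P$; this is exactly where $\ell \le D$ is used to absorb the slack, and once it is nailed down the combinatorial heart of the claim is identical to \cref{seq:1llemma}.
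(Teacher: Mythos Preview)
Your approach is essentially the paper's: both isolate the same two necessary conditions for $u_h$ to survive (no $u_h \pe^{\kappa D} u_{j'}$, and $\alpha(u_h)$ strictly minimal among the $u_{j'}$ that are $\kappa D$-unrelated to $u_h$) and then deduce uniqueness exactly as in \cref{seq:1llemma}. The one substantive difference is the definition of $\alpha$. The paper sets
\[
\alpha(u) \;=\; \min\{\,i : v_j \in R^{\Des}_{G'}(u,\kappa D)\text{ for all } j \ge i\,\},
\]
the start of a \emph{consistently} labeled suffix, whereas you take the first index reached within $\kappa D$. With the paper's choice the implication ``$q \ge \alpha(u_{j'}) \Rightarrow v_q$ carries the label from $u_{j'}$'' is immediate from the definition; with yours it is exactly what your second observation has to supply.

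Your justification of that observation via $V^{\ring}_{u_j}$ is not quite right: a vertex of $W^{\bot}(u_j)$ can lie in $V^{\ring}_{u_j}$ (take $v_q$ at distance in $((\kappa-1)D,\kappa D]$ from $u_j$), so ``on a non-fringe subpath $\Rightarrow$ not in $V^{\ring}_{u_j}$'' fails. The conclusion you want does hold, though, by a slightly different route: non-fringe subpaths avoid $W^{\fringe}(u_j)$, and $W^{\top}(u_j)$ is a \emph{prefix} of $P$ consisting only of vertices unrelated to $u_j$ within $\kappa D$; since $v_{\alpha(u_j)}$ (your $\alpha$) is related, that prefix ends strictly before $\alpha(u_j)$. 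Hence any non-fringe $v_q$ with $q \ge \alpha(u_j)$ must lie in $W^{\bot}(u_j) \subseteq R^{\Des}_{G'}(u_j,\kappa D)$ and so carries the label. You already flagged this spot as the fiddly one, so this is a cosmetic repair rather than a real gap.
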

We now explain why \cref{par:claim1} follows from \cref{par:claim2} and then show \cref{par:claim2}. To show \cref{par:claim1}, note that the claim is trivial if the set $S' = \{u_1, u_2, \cdots, u_{t+1}\}$ contains a bridge. Indeed, as picking the set $S$ to contain a bridge leaves no path related subproblems, the only index $j$ satisfying the constraints of \cref{par:claim1} must be the $j$ where $u_j$ is the bridge. Now, assume that all of the vertices in $S'$ are either path ancestors or descendants (not bridges). Applying \cref{par:claim2} on the subset of path ancestors among $S'$ and the subset of path descendants among $S'$ immediately implies \cref{par:claim1}: we get one valid index from the path ancestors and one from the descendants, for a total of two.

We now show \cref{par:claim2}. Consider the set $S' = \{u_1, u_2, \cdots, u_{t+1}\}$, where all the $u_i$ are path ancestors. Therefore, for any subset $S \subseteq S'$, if we shortcut using vertices in $S$, then all vertices in $P$ will only get labels of the form $u_i^\Des$, and no $u_i^\Anc$ labels. For every vertex $u \in S'$, define \[ \alpha(u) = \min \{i : 0 \le i \le \ell \text{ and } v_j \in R_{G'}^\Des(u, \kd) \forall j \ge i \}, \] where we recall that the path $P$ consists of vertices $v_0 \pe v_1 \pe \cdots \pe v_\ell.$ We claim that the only index $j$ that could satisfy the condition in \cref{par:claim2} is such that
\begin{itemize}
\item We have that $u_j \not\pe^{\kd} u_{j'}$ for all $j' \ne j$.
\item $\alpha(u_j)$ is unique and minimal out of all the values of $\alpha(u_{j'})$ for the vertices $u_{j'}$ satisfying the first point.
\end{itemize}
Indeed, for an index $j$, if there is an index $j'$ such that $u_j \pe^{\kd} u_{j'}$ then vertex $u_j$ receives the label $u_{j'}^\Anc$ when we shortcut from $u_{j'}$. On the other hand, no vertices on the path $P$ receive the label $u_{j'}^\Anc$ as $u_{j'}$ is a path ancestor. Now, let \[ I = \{j : u_j \not\pe^{\kd} u_{j'} \forall j' \neq j \}, \] the set of indices which satisfied the first condition. Note that by definition, for any distinct indices $j, j' \in I$, the vertices $u_j$ and $u_{j'}$ are unrelated within distance $\kd$. Now, assume that $\alpha(u_j) \ge \alpha(u_{j'})$. We aim to show that in this case that $u_j \not\in R_{G'[V_{a_i}]}(P^{S, \kappa}_i, \kd)$ for any $1 \le i \le t+1$, which would complete the proof. Indeed, if $u_j \in R_{G'[V_{a_i}]}(P^{S, \kappa}_i, \kd)$, then by the definition of $\alpha(u_j)$ we would need for $P^{S, \kappa}_i$ to be a subset of the path $P$ from $v_{\alpha(u_j)}$ to $v_\ell.$ Otherwise, the path $P^{S, \kappa}_i$ would be forced to be part of a fringe subproblem induced by $u_j$. Now, again by the definition of $\alpha(u_{j'})$, we know that all vertices in the path $P$ from $v_{\alpha(u_j)}$ to $v_\ell$ will receive a label of $u_{j'}^\Des$ because $\alpha(u_j) \ge \alpha(u_{j'})$, while vertex $u_j$ does not receive a label from $u_{j'}$, as $u_{j'} \not\pe^{\kd} u_j.$ Therefore, $u_j \not\in R_{G'[V_{a_i}]}(P^{S, \kappa}_i, \kd)$ for any $1 \le i \le t+1$, which completes the proof.

\end{proof}
Now, we prove \cref{par:paradiambound}.
\begin{proof}[Proof of \cref{par:paradiambound}]
The proof is via induction and \cref{par:parapathsplit}. Indeed, the claim is trivial for $\br = 0$, as \cref{algo:parallelsc} chooses all vertices as shortcutters as $p_r = 1.$

We first claim that (using the same notation as \cref{par:parapathsplit}) that with probability $1-n^{-10}$ we have that $|R_{G'}(P, \kd)| \le 100k\log n.$ Indeed, note that \[ |R_{G'}(P, \kd)| \le |R_{G'}(\head(P), (\kappa+1)D)| + |R_{G'}(\tail(P), (\kappa+1)D)| \le 100k\log n\] with probability at least $1 - n^{-10}$ by \cref{par:paralabelbound}.

For a path $P$, parameter $\kappa$, and set $S$ we have by \cref{par:parapathsplit} that we can partition $P$ into disjoint paths $P^{S, \kappa}_i$ and $P^{S, \kappa, \fringe}_i$. Then the final length of $P$ after shortcutting through $S$ and recursive shortcutting is clearly at most the total length of all the $P^{S, \kappa}_i$ and $P^{S, \kappa, \fringe}_i$ after shortcutting, plus at most $2|S|+2$ (for edges between adjacent paths). Note that this even holds in the case where $S$ contains a bridge: in this case all the $P^{S, \kappa}_i$ and $P^{S, \kappa, \fringe}_i$ are empty by definition, and the shortcutted length of $P$ is $2$, which is at most $2|S|+2.$

By induction, we have that the expected length of $P$ after shortcutting through $S$ and recursive shortcutting is at most
\begin{align} \E_{\kappa, S}\Big[2|S| + 2 &+ \sum_{i=1}^{|S|+1} 5\br \left( \sqrt{2} + \frac{1}{2\log n}\right)^{\br-1} s(P^{S, \kappa}_i, G'[V_{a_i}])^\frac12  \\ &+ \sum_{i=1}^{|S|} 5(\br+1) \left( \sqrt{2} + \frac{1}{2\log n}\right)^{\br} s(P^{S, \kappa, \fringe}_i, G'[V_{u_i}^\ring])^\frac12 \Big]
\\ \label{eq:finalish1} &= 2 + 2\E_{\kappa, S}[|S|] + 5\br \left( \sqrt{2} + \frac{1}{2\log n}\right)^{\br-1} \E_{\kappa, S}\Big[ \sum_{i=1}^{|S|+1} s(P^{S, \kappa}_i, G'[V_{a_i}])^\frac12\Big]
\\ \label{eq:finalish2} &+ 5(\br+1) \left( \sqrt{2} + \frac{1}{2\log n}\right)^{\br} \E_{\kappa, S} \Big[ \sum_{i=1}^{|S|} s(P^{S, \kappa, \fringe}_i, G'[V_{u_i}^\ring])^\frac12 \Big]
\end{align} by induction, where $\E_{\kappa, S}$ refers to first selecting $\kappa \in [\kappa_{2r+1}, \kappa_{2r}]$ uniformly at random and then selecting $S$ analogous to \cref{algo:parallelsc}, where every vertex $u \in R_{G'}(P, \kd)$ is in $S$ with probability $p_r.$ We only care about those $u \in R_{G'}(P, \kd)$ as those are the only path related vertices, and picking other vertices can only make less vertices path related in the future.

We bound the pieces of \cref{eq:finalish1} and \cref{eq:finalish2} now. We split the analysis into cases depending on the size of $\E_{\kappa, S}[|S|].$

\paragraph{Case 1: $p_r \ge s(P, G')^{-\frac12}$.} We assume that the length $\ell$ of the path $P$ must satisfy \[ \ell \ge 5(\br+1) \left( \sqrt{2} + \frac{1}{2\log n}\right)^{\br} s(P, G')^\frac12, \] or else the claim is trivially true. Now, we have that the probability that some vertex of $P$ is in $S$ is at least \[ 1 - (1-p_r)^\ell \ge 1 - \exp(-\ell p_r) \ge 1-n^{-5} \] by our assumed bounds on $\ell$ and $p_r.$ Therefore, our path $P$ is shortcut to length $2$ with high probability. This completes the analysis for this case.

\paragraph{Case 2: $p_r \le s(P, G')^{-\frac12}$.} 

We now bound each of the pieces of \cref{eq:finalish1} and \cref{eq:finalish2}.
\paragraph{Bound on $\E_{\kappa, S}[|S|]$.} Note that \[ \E_{\kappa, S}[|S|] \le p_r s(P, G') \le s(P, G')^\frac12 \] by our assumed bound on $p_r.$
\paragraph{Bound on rightmost term in \cref{eq:finalish1}.} Using the Cauchy-Schwarz inequality and \cref{par:parapathsplit} \cref{eq:normalbound} we can see that
\begin{align*}&5\br \left( \sqrt{2} + \frac{1}{2\log n}\right)^{\br-1} \E_{\kappa, S}\Big[ \sum_{i=1}^{|S|+1} s(P^{S, \kappa}_i, G'[V_{a_i}])^\frac12\Big] \\
&\le 5\br \left( \sqrt{2} + \frac{1}{2\log n}\right)^{\br-1} \sum_t \Pr[|S|=t] \cdot \E_{\kappa, |S|=t} \Big[ \sum_{i=1}^{t+1} s(P^{S, \kappa}_i, G'[V_{a_i}])^\frac12\Big] \\
&\le 5\br \left( \sqrt{2} + \frac{1}{2\log n}\right)^{\br-1} \sum_t \Pr[|S|=t] \cdot \E_{\kappa, |S|=t} \left[ \left((t+1)\sum_{i=1}^{t+1} s(P^{S, \kappa}_i, G'[V_{a_i}])\right)^\frac12\right] \\
&\le 5\br \left( \sqrt{2} + \frac{1}{2\log n}\right)^{\br-1} \sum_t \Pr[|S|=t] \cdot \E_{\kappa, |S|=t} \left[ \left((t+1)\sum_{i=1}^{t+1} s(P^{S, \kappa}_i, G'[V_{a_i}])\right)\right]^\frac12 \\
&\le 5\br \left( \sqrt{2} + \frac{1}{2\log n}\right)^{\br-1} \sum_t \Pr[|S|=t] \cdot \sqrt{2} \cdot s(P, G')^\frac12 \\
&= 5\br \left( \sqrt{2} + \frac{1}{2\log n}\right)^{\br-1} \cdot \sqrt{2} \cdot s(P, G')^\frac12.
\end{align*}
\paragraph{Bound on \cref{eq:finalish2}.} We follow the same plan as in the above paragraph. Using the Cauchy-Schwarz inequality, \cref{par:parapathsplit} \cref{eq:fringebound}, and our observation above that $|R_{G'}(P, \kappa_{2r}D)| \le 100k\log n$ with high probability we can see that
\begin{align*}&5(\br+1) \left( \sqrt{2} + \frac{1}{2\log n}\right)^{\br-1} \E_{\kappa, S} \Big[ \sum_{i=1}^{|S|} s(P^{S, \kappa, \fringe}_i, G'[V_{u_i}^\ring])^\frac12 \Big] \\
&\le 5(\br+1) \left( \sqrt{2} + \frac{1}{2\log n}\right)^{\br-1} \E_{\kappa, S} \left[ \left(|S| \sum_{i=1}^{|S|} s(P^{S, \kappa, \fringe}_i, G'[V_{u_i}^\ring])\right)^\frac12 \right] \\
&\le 5(\br+1) \left( \sqrt{2} + \frac{1}{2\log n}\right)^{\br-1} \E_{\kappa, S} \left[ \left(|S| \sum_{i=1}^{|S|} s(P^{S, \kappa, \fringe}_i, G'[V_{u_i}^\ring])\right)\right]^\frac12 \\
&\le 5(\br+1) \left( \sqrt{2} + \frac{1}{2\log n}\right)^{\br} \sum_t \Pr[|S|=t] \cdot \sqrt{\frac{|R_{G'}(P, \kappa_{2r}D)|}{1000 k\log^3 n}} \cdot s(P, G')^\frac12\\
&\le 5(\br+1) \left( \sqrt{2} + \frac{1}{2\log n}\right)^{\br} \sum_t \Pr[|S|=t] \cdot \sqrt{\frac{100k\log n}{1000 k\log^3 n}} \cdot s(P, G')^\frac12 \\
&\le 5\br \left( \sqrt{2} + \frac{1}{2\log n}\right)^{\br-1} \cdot \frac{1}{2\log n} \cdot s(P, G')^\frac12.
\end{align*}
Summing all our contributions gives that the expression in \cref{eq:finalish1} and \cref{eq:finalish2} is at most
\begin{align*}
&2 + 2\E_{\kappa, S}[|S|] + 5\br \left( \sqrt{2} + \frac{1}{2\log n}\right)^{\br-1} \E_{\kappa, S}\Big[ \sum_{i=1}^{|S|+1} s(P^{S, \kappa}_i, G'[V_{a_i}])^\frac12\Big]
\\ &+ 5(\br+1) \left( \sqrt{2} + \frac{1}{2\log n}\right)^{\br} \E_{\kappa, S} \Big[ \sum_{i=1}^{|S|} s(P^{S, \kappa, \fringe}_i, G'[V_{u_i}^\ring])^\frac12 \Big]
\\ &\le 2 + 2 s(P, G')^\frac12 + 5\br \left( \sqrt{2} + \frac{1}{2\log n}\right)^{\br-1} \cdot \sqrt{2} \cdot s(P, G')^\frac12 + 5\br \left( \sqrt{2} + \frac{1}{2\log n}\right)^{\br-1} \cdot \frac{1}{2\log n} \cdot s(P, G')^\frac12
\\ &\le 5(\br+1) \left( \sqrt{2} + \frac{1}{2\log n}\right)^{\br} s(P, G')^\frac12
\end{align*} as desired.
\end{proof}
To conclude the proof of the correctness of \cref{algo:paralleldiam}, it suffices to argue after a run \cref{algo:paralleldiam}, for any vertices $s, t$ such that $s \pe t$, after shortcutting there is now a path of length at most $D$ from $s$ to $t$.
\begin{lemma}
\label{par:masterlemma}
Perform an execution of $\ParallelDiam(G, k)$ on an $n$-node, $m$-edge digraph $G$. With high probability for any vertices $s, t$ such that $s \pe t$ there is a path of length at most $D$ from $s$ to $t$ in $G$, where $D$ is as in \cref{algo:parallelsc}.
\end{lemma}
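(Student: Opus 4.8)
The plan is to combine \cref{par:paradiambound} --- one call of \ParallelSC shrinks any path of length at most $D$ to expected length at most $D/10$ --- with \cref{par:finalparabound}, which implies that a single call of \ParallelSC rarely aborts, to show that each pass of the inner loop of \ParallelDiam shrinks a fixed short path with high probability, and then to show that each pass of the outer loop roughly halves every pairwise distance that still exceeds $D$. Throughout I will use that every edge added by \ParallelSC is a genuine shortcut (each added edge $(v,w)$ already has $v \pe w$), so adding it never increases a distance or destroys a reachability relation; in particular, if $D \ge n$ the statement is vacuous, so I may assume $D < n$.

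\textbf{Step 1 (one inner pass helps a fixed short path).} Fix an $n$-vertex digraph $H$ and a path $P$ in it with $\head(P) = s$, $\tail(P) = t$, and length at most $D$, and analyze one execution of the inner \texttt{for}-loop of \ParallelDiam on $H$, i.e.\ $10\log n$ independent calls $S_j \gets \ParallelSC(H, k, 0, 0)$ followed by augmenting $H$ with $\bigcup_j S_j$. For a single such call: were it run to completion, \cref{par:paradiambound} and the remark following it give $\E[\text{shortcut length of }P] \le D/10$, so Markov's inequality gives shortcut length at most $D/5$ with probability at least $1/2$; separately, by \cref{par:finalparabound} and Markov's inequality the call aborts with probability at most $1/10$, since the abort threshold is a fixed constant times an upper bound on the expected work and shortcut count of a single call. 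Hence with probability at least $2/5$ a single call runs to completion and yields an $S_j$ with $d_{H\cup S_j}(s,t) \le D/5$; as all the $S_j$ get added, once this happens for one $j$ the inner pass achieves $d_{H'}(s,t) \le D/5$, where $H'$ is $H$ with all inner-pass shortcuts added. Since the $10\log n$ calls use independent randomness on the fixed graph $H$, the inner pass fails to bring $s,t$ within distance $D/5$ with probability at most $(3/5)^{10\log n} \le n^{-5}$.

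\textbf{Step 2 (the outer loop halves distances) and Step 3 (union bound).} Let $G_j$ be the graph after $j$ outer iterations, with $G_0 = G$; every $G_j$ has the vertex set and reachability relation of $G$, and $d_{G_{j+1}} \le d_{G_j}$ pointwise. Fix a pair $s \pe t$. By induction on $j$ I would show that with probability at least $1 - j n^{-4}$ one has $d_{G_j}(s,t) \le \max\{D,\, n\,2^{-j}\}$: the inductive step is trivial by monotonicity if $d_{G_j}(s,t) \le D$, and otherwise I take a shortest $s$--$t$ path in $G_j$, chop it into $\lceil d_{G_j}(s,t)/D\rceil \le n$ consecutive subpaths of length at most $D$, apply Step 1 with $H = G_j$ (conditioning on $G_j$) to each, and union-bound over the $\le n$ subpaths; with probability at least $1 - n^{-4}$ all subpath endpoints lie within distance $D/5$ in $G_{j+1}$, whence $d_{G_{j+1}}(s,t) \le \lceil d_{G_j}(s,t)/D\rceil \cdot D/5 \le (d_{G_j}(s,t)/D + 1)D/5 < 2 d_{G_j}(s,t)/5 \le d_{G_j}(s,t)/2$, using $d_{G_j}(s,t) > D$. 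Since $d_{G_0}(s,t) \le n - 1$ and $10\log n \ge \log_2 n$, after all $10\log n$ outer iterations we get $d(s,t) \le D$ with probability at least $1 - 10\log n \cdot n^{-4}$. Union-bounding over the at most $n^2$ ordered pairs $s \pe t$ gives total failure probability at most $n^2 \cdot 10\log n \cdot n^{-4} = o(1)$, which is the claim.

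I expect the main obstacle to be Step 1, specifically decoupling the abort event from the ``$P$ shrinks'' event: \cref{par:paradiambound} bounds the expected shortcut length only for a call that is carried out in full, whereas \ParallelDiam may terminate a call early once its work budget is exceeded, so one cannot directly assert that ``with probability at least $1/2$, $P$ is short after this call.'' The fix above is the two-complementary-events bound $\Pr[\text{complete and }P\text{ short}] \ge \Pr[P\text{ short if completed}] - \Pr[\text{abort}] \ge 1/2 - 1/10$, together with the bookkeeping that the abort threshold really does dominate the expected cost of a single call (via the proof of \cref{par:finalparabound}); making this precise is the one genuinely delicate point, while the geometric chopping of long paths and the remaining union bounds are routine.
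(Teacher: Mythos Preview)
Your proposal is correct and follows essentially the same approach as the paper's proof: combine \cref{par:paradiambound} with Markov's inequality (accounting for the abort event) to get a constant probability of shrinking a fixed length-$D$ path in one inner call, amplify via the $10\log n$ inner iterations, then use the outer loop with path-chopping to halve arbitrary distances, and finish with a union bound over pairs. The only notable difference is in the constants: the paper uses the threshold $D/2$ (so $\Pr[\text{not short}]\le 1/5$) and bounds the abort probability by $2/10$ (two Markov bounds, one each for the work and shortcut budgets), obtaining $1-1/5-2/10>1/2$; your $1/10$ abort bound under-counts this second budget, but the argument survives with $1/2-2/10=3/10$ in place of your $2/5$.
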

\begin{proof}
By \cref{par:paradiambound} and the observation that \[ 5(\log_k n+1) \left( \sqrt{2} + \frac{1}{2\log n}\right)^{\log_k n} n^\frac12 \le 10\log n \sqrt{2}^{\log_k n} n^\frac12 \le \frac{D}{10}, \] we can see that for a path $P$ of length at most $P$ the probability that it is shortcut to length at most $\frac{D}{2}$ in a run of $\ParallelSC(G, k, 0, 0)$ which does not exceed the work or shortcut edge bound by ten times is at least $1 - \frac{1}{5} - \frac{2}{10} > \frac12$ by Markov's inequality. Therefore, running $\ParallelSC(G, k, 0, 0)$ $10\log n$ times as in \cref{algo:paralleldiam} shortcuts a path of length $D$ to length at most $\frac{D}{2}$ with high probability. Doing this $10\log n$ times then guarantees that an arbitrary path is shortcut to length at most $D$ with high probability. As we only need to shortcut $O(n^2)$ paths (a single path for every pair of vertices $s, t$) this is sufficient.
\end{proof}
Combining \cref{par:finalparabound} and \cref{par:masterlemma} readily gives a proof of \cref{thm:parallel}. Combining these along with a breadth first search to depth $D$ now gives a proof of \cref{thm:main_parallel}. 
\begin{proof}[Proof of \cref{thm:main_parallel}]
Take $k = O(\log n)$.
For input graph $G$, consider first performing an execution of $\ParallelDiam(G, k)$ and then running a breadth first search from vertex $s$. Clearly this algorithm solves the single source reachability problem from $s$. \cref{par:finalparabound} gives with high probability the desired $\O(mk + nk^2)$ total work bound and shows that the algorithm adds at most $\O(nk)$ shortcuts with high probability. \cref{par:masterlemma} shows that the breadth first search runs in depth $D \le \O\left(n^{\frac12 + O\left(\frac{1}{\log k}\right)}\right)$, where $D$ is as in \cref{algo:parallelsc}. By \cref{par:finalparabound} we have that an execution of $\ParallelDiam(G, k)$ can be implemented in parallel depth $\O\left(\poly(k) \cdot n^{\frac12 + O\left(\frac{1}{\log k}\right)}\right).$ This finishes the proof.
\end{proof}

\section{Distributed Single Source Reachability in the \congest~Model}
\label{sec:distributed}

In this section we explain how to combine our parallel algorithm in \cref{sec:parallel} with the techniques of \cite{Nan14} and \cite{GU15} to obtain new algorithms for the single source reachability problem in the \congest~model. In \cref{sec:congestprelim} we formally define the \congest~model and the single source reachability problem, as well as covering some standard results in the \congest~model. In \cref{sec:congestalgo} we state our main algorithm \DistrReach~for solving the single source reachability problem. Finally, in \cref{sec:congestanalysis} we sketch a proof of how to apply the results of \cref{sec:parallel} to bound the number of rounds our algorithm takes.

\subsection{Preliminaries}
\label{sec:congestprelim}
\paragraph{The \congest~model.} In the \congest~model, we model a communication network as an undirected unweighted graph $G$ with $n$ vertices and $m$ edges, where the processors are modeled as vertices and edges as bounded bandwidth links between processors. We let $V(G)$ and $E(G)$ denote the vertex set of $G$ and the (directed) edge set of $G$ respectively. The processors/vertices are assumed to have distinct polynomially bounded IDs. Each vertex has infinite computational power, but only knows little about the graph $G$; it only knows the IDs of its neighbors and no other information about the graph $G$.

Now we define the \emph{single source reachability problem} that we consider. Here, each vertex also knows which of its neighbors are in-neighbors and which are out-neighbors in the directed graph $G$. Here, communication on edges are bidirectional even though the edges themselves are directed edges. Additionally, there is a single source vertex $s$. As vertex $s$ can communicate that it is the source to all other vertices in $O(D)$ rounds (\cref{lemma:broadcast}), we assume that all vertices know $s$ at the start. The goal of the single source reachability problem is for every vertex $t \in V(G)$ to know whether is an $s\to t$ path in $G$ at the end of the algorithm. In particular, $s$ doesn't have to know whether there is an $s\to t$ path in $G$, only $t$ must know.

The performance of an algorithm or protocol is judged by the number of \emph{rounds} of distributed communication in the worst case. At the start of each round, every vertex can send a message of length $O(\log n)$ to each of its neighbors, where the message need not be the same across different neighbors. The messages arrive at the end of the round. Running times are analyzed in terms of the number of vertices $n$ and the undirected/hop diameter of $G$, which we denote as $D$. As every vertex can learn $n$ within $O(D)$ rounds, we assume that all vertices know $n$.

We now note a few standard results about distributed computation in the \congest~model.
\begin{lemma}[Global message broadcast protocol, \cite{Pel00}]
\label{lemma:broadcast}
Suppose each $v \in V$ holds $k_v \ge 0$ messages of $O(\log n)$ bits each, for a total of $K = \sum_{v \in V} k_v$.
Then all nodes in the network can receive these $K$ messages within $O(K + D)$ rounds.
\end{lemma}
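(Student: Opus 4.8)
The plan is to route all $K$ messages through a single breadth-first-search tree and invoke a standard pipelining argument. First I would have the nodes build such a tree: elect a leader (say the node of minimum ID, found by flooding in $O(D)$ rounds) and construct a BFS tree $T$ rooted at it, again in $O(D)$ rounds, so that every node learns its parent and children in $T$; the depth of $T$ is at most $D$. Note that a single message together with the ID of its origin still fits in $O(\log n)$ bits, so an edge can carry one such (message, origin) pair per round.

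Next comes the \emph{upcast} (convergecast) phase, which collects all $K$ messages at the root. Fix a global total order on messages, e.g.\ order the $j$-th message held by $v$ by the pair $(\mathrm{ID}(v), j)$; this order is computable locally once a message and its origin are known. In each round, every non-root node that still holds an un-forwarded message sends to its parent the smallest such message in the global order. The key claim is that after $K + O(D)$ rounds the root has received all $K$ messages. I would prove this by the usual exchange/pipelining argument: for a node $u$ at depth $\mathrm{dep}(u)$ in $T$, show by induction on the round number that by the end of round $t$ the parent of $u$ has received every message originating in $u$'s subtree whose rank, among the messages routed through $u$, is at most $t - \mathrm{dep}(u)$; summing over the tree gives that all messages reach the root by round $K + D$. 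Congestion is never an issue because each node sends at most one message on each tree edge per round by construction.

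Then comes the symmetric \emph{downcast} (broadcast) phase: the root now holds all $K$ messages, arranges them in the global order, and in round $i$ forwards the $i$-th message to all of its children; every other node forwards each message it receives to all of its children in the next round. By the same pipelining argument, after $K + O(D)$ further rounds every node of $T$ — hence every node of $G$, since $T$ spans $G$ — holds all $K$ messages. Adding the costs of the three phases yields a total of $O(D) + O(K + D) + O(K + D) = O(K + D)$ rounds, which proves the lemma.

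I expect the only part requiring genuine care to be the pipelining invariant in the upcast phase: one must choose the rank/potential function so the induction goes through uniformly across all subtrees and all depths, and must verify that the greedy rule ``send the globally smallest un-sent message'' never delays a message by more than the combined effect of the tree depth and the number of messages ordered ahead of it on its route. Everything else — leader election, BFS-tree construction, and the bookkeeping that a message plus its origin ID is $O(\log n)$ bits — is routine in the \congest~model.
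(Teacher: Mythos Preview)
Your proof is correct and is the standard BFS-tree-plus-pipelining argument for this folklore result. Note, however, that the paper does not actually prove this lemma: it is stated with a citation to Peleg's book \cite{Pel00} and used as a black box, so there is no ``paper's own proof'' to compare against; your argument is essentially what one finds in that reference.
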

\begin{theorem}[\!\!\cite{Gha15}]
Consider $k$ distributed algorithms $A_1, \dots, A_k.$ Let \dilation~be such that each algorithm $A_i$ finishes in \dilation~rounds if it runs individually. Let \congestion~be such that there are at most \congestion~messages, each of size $O(\log n)$, sent
through each edge (counted over all rounds), when we run all algorithms together.
There is a distributed algorithm that can execute $A_1, \dots, A_k$ in $O(\dilation + \congestion \cdot \log n)$ rounds in the \congest~model.
\end{theorem}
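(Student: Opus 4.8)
The plan is to prove this via the random-delay scheduling technique, the distributed analogue of the Leighton--Maggs--Rao packet-scheduling argument. The idea is to make each algorithm "wait" a random amount before it begins, so that the work of the $k$ algorithms is spread evenly in time rather than piling up on the same edge in the same round. Concretely, I would have each $A_i$ draw, independently, a uniformly random delay $\delta_i$ from $\{0,1,\dots,\Delta-1\}$ for a window size $\Delta=\Theta(\congestion)$ (in the distributed implementation the vertices participating in $A_i$ must agree on $\delta_i$, which can be arranged by having a designated owner sample it; this is cheap within the round budget). In the combined execution we run the $t$-th local round of $A_i$ during global round $\delta_i+t$. Since each $A_i$ halts within $\dilation$ rounds, this \emph{virtual} schedule has length $\Delta+\dilation=O(\congestion+\dilation)$; the only obstruction to running it verbatim is that several algorithms may want to transmit across a common directed edge in one global round, whereas a \congest~round permits only one $O(\log n)$-bit message per directed edge, so a global round may have to be expanded into several actual rounds.

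The next step is the congestion estimate. Fix a directed edge $e$ and a global round $\tau$. Algorithm $A_i$ sends a message over $e$ in global round $\tau$ if and only if it does so in its local round $\tau-\delta_i$; this event depends only on $\delta_i$, and since $A_i$ uses $e$ in at most $c_e(i)$ of its local rounds, $\Pr[A_i\text{ uses }e\text{ at global round }\tau]\le c_e(i)/\Delta$. Because the $\delta_i$ are independent, the number $X_{e,\tau}$ of messages crossing $e$ at global round $\tau$ is a sum of independent $\{0,1\}$ indicators with $\E[X_{e,\tau}]=\sum_i c_e(i)/\Delta\le \congestion/\Delta=O(1)$. The Chernoff bound (\cref{lemma:chernoff}) then gives $X_{e,\tau}=O(\log n)$ with probability $1-n^{-10}$, and a union bound over the $n^{O(1)}$ directed edges and the $O(\congestion+\dilation)=n^{O(1)}$ global rounds (running times are polynomially bounded) shows that with high probability every edge carries $O(\log n)$ messages in every global round of the virtual schedule.

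Conditioned on this event, I would implement each global round by $O(\log n)$ actual \congest~rounds, in which every vertex forwards its queued messages one per round per edge; this already yields a correct schedule of length $O((\congestion+\dilation)\log n)$. To shave the $\log n$ factor off the $\dilation$ term and reach the claimed $O(\dilation+\congestion\log n)$, one refines the construction so the random offsets are drawn in units of \emph{blocks} of $\Theta(\log n)$ rounds, from a window of $\Theta(\congestion/\log n)$ blocks: the expected number of messages through any edge during any block is then $O(\log n)$, the schedule occupies $O(\congestion/\log n+\dilation)$ blocks of $\Theta(\log n)$ rounds, and the dilation part is never multiplied by the expansion factor. This is the argument of \cite{Gha15}.

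The main obstacle is exactly this last refinement. In the block-aligned version a single algorithm may contribute up to $\Theta(\log n)$ messages to one block on one edge, so the summands in the concentration step are no longer $\{0,1\}$-valued and a naive multiplicative Chernoff bound is too weak to beat an $n^{-\Omega(1)}$ failure probability; one must choose the block length and window size so that the deviation we actually need sits only a constant factor above the mean, which is what forces the expansion factor to be $O(1)$ rather than $\Theta(\log n)$. Everything else — independence of the delays, the linearity computation for the mean, the union bound, and the final per-block forwarding — is routine.
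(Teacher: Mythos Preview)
The paper does not prove this theorem at all: it is quoted as a known result from \cite{Gha15} and used as a black box in \cref{lemma:bfs}, so there is no ``paper's own proof'' to compare against. Your proposal is essentially the random-delay argument that \cite{Gha15} actually uses, and the first two paragraphs are correct and yield the weaker $O((\dilation+\congestion)\log n)$ bound cleanly.

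Your identification of the obstacle in the final refinement is accurate, but the fix you sketch is not quite the right one. If you draw delays in units of blocks of $\Theta(\log n)$ rounds from a window of $\Theta(\congestion/\log n)$ blocks, then a single algorithm can contribute up to $\Theta(\log n)$ messages on one edge in one block, and the per-block mean is $\Theta(\log n)$; with summands bounded by $\Theta(\log n)$ rather than $1$, the Chernoff deviation you can certify is $\Theta(\log n)\cdot\sqrt{\text{mean}}$, not a constant multiple of the mean, so you do not get an $O(1)$ expansion factor per block as you claim. The actual argument in \cite{Gha15} keeps the delays at granularity one round with window $\Theta(\congestion)$, so summands stay $\{0,1\}$-valued with mean $O(1)$ per edge per round, and then observes that the total work on any edge over the whole $\dilation+\congestion$ rounds is still only $\congestion$; a simple FIFO queue at each edge therefore introduces at most $O(\congestion\log n)$ additional rounds of total delay on top of the $\dilation+\congestion$ virtual rounds, giving $O(\dilation+\congestion\log n)$. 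So the high-level plan is right, but the specific block-alignment trick you propose for shaving the $\log n$ off $\dilation$ does not work as stated.
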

\begin{corollary}[Low distance breadth first search]
\label{lemma:bfs}
Consider a set $T \subseteq V(G)$ and a distance $h$. There is a protocol such that for every vertex $v$, it learns the subsets of $T$ that are distance $h$ to and from itself in $\O(|T| + h)$ rounds.
\end{corollary}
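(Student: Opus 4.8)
The plan is to obtain \cref{lemma:bfs} by feeding $2|T|$ breadth-first searches into the scheduling theorem of \cite{Gha15} stated immediately above. For each $u \in T$ define two distributed algorithms: $A_u^{\mathrm{out}}$, a depth-$h$ BFS that explores \emph{forward} along the directed edges starting from $u$, and $A_u^{\mathrm{in}}$, a depth-$h$ BFS that explores \emph{backward} along the reversed edges starting from $u$ (recall that in the \congest~model communication on an edge is bidirectional even though the edge is directed). In $A_u^{\mathrm{out}}$ the source $u$ sends its ID together with a hop counter $0$ to its out-neighbors; whenever a vertex first receives $u$'s ID with some counter $c < h$ it records ``$u$ reaches me within $h$'', increments the counter, and forwards the message to its out-neighbors, ignoring any later or repeated receipts. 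After $A_u^{\mathrm{out}}$ terminates, every vertex $v$ knows whether $d(u,v)\le h$; symmetrically $A_u^{\mathrm{in}}$ lets every $v$ learn whether $d(v,u)\le h$. Collecting these outputs over all $u \in T$, each vertex $v$ learns exactly $\RA_{G}(v,h)\cap T$ (from the $A_u^{\mathrm{out}}$'s) and $\RD_{G}(v,h)\cap T$ (from the $A_u^{\mathrm{in}}$'s), which is what the corollary requires.

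It then remains to bound the round complexity via the parameters in \cite{Gha15}. Run in isolation, each $A_u^{\mathrm{out}}$ or $A_u^{\mathrm{in}}$ halts after at most $h+1$ rounds, so $\dilation = O(h)$. For the congestion, observe that in $A_u^{\mathrm{out}}$ a vertex forwards $u$'s ID at most once (only upon first receipt), so each edge transmits at most one $O(\log n)$-bit message on behalf of $A_u^{\mathrm{out}}$ (in its forward communication direction), and likewise at most one message on behalf of $A_u^{\mathrm{in}}$ (in the reverse direction). Summing over the $2|T|$ searches, at most $O(|T|)$ messages cross each edge over the entire execution, i.e. $\congestion = O(|T|)$. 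Applying \cite{Gha15} to run all $2|T|$ BFS's together then gives total round complexity $O(\dilation + \congestion\cdot\log n) = O(h + |T|\log n) = \O(|T| + h)$, as claimed.

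The only point I would be careful about is checking that each BFS still behaves correctly once the scheduler imposes its (random) delays: the scheduler merely shifts the internal clock of each algorithm and does not reorder messages within a single algorithm, so a vertex running $A_u^{\mathrm{out}}$ still receives $u$'s ID for the first time precisely when the (time-shifted) search wavefront reaches it and forwards exactly once. This preserves both correctness -- every $v$ with $d(u,v)\le h$ is notified -- and the per-edge bound of one message per search that underlies the $\congestion = O(|T|)$ estimate. Beyond this observation, the argument is the standard ``$|T|$ simultaneous bounded-depth BFS via \cite{Gha15} scheduling'' template, so I do not expect any further obstacle.
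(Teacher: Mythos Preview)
Your proposal is correct and is exactly the intended derivation: the paper states this as a corollary of the \cite{Gha15} scheduling theorem without further proof, and the natural way to obtain it is precisely to run $2|T|$ depth-$h$ BFS's (one forward and one backward from each $u\in T$) and observe $\dilation=O(h)$, $\congestion=O(|T|)$. Your congestion and correctness observations are the right ones.
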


\subsection{Algorithm description}
\label{sec:congestalgo}
\paragraph{Algorithm overview.} Our algorithm consists of multiple steps. The first is a standard procedure reminiscent of the classic $O(m\sqrt{n})$ time shortcutting algorithm of Ullman and Yannakakis \cite{UY91}. It has been used in the earlier works of Ghaffari and Udwani \cite{GU15} and Nanongkai \cite{Nan14}. We randomly sample each vertex in $V(G)$ with probability $\frac{10\alpha \log n}{n}$ to be in a subset $T \subseteq V(G).$ We also assert that our source $s \in T$. With high probability $|T| = \O(\alpha).$ Using \cref{lemma:bfs} with parameter $h = \O(n/\alpha)$, for each vertex $v$, it learns exactly the subset of $T$ that can reach itself within distance $h$, and the subset of $T$ that it can reach within distance $h$. This takes $\O(\alpha + h) = \O(\alpha + \frac{n}{\alpha})$ rounds. Now, we build the \emph{skeleton graph} $G_T$ on $T$ using the reachability relations learned from the breadth first searches. Concretely, for vertices $u, v \in T$, there is an edge $u \to v$ in graph $G_T$ if $d_G(u, v) \le h.$ It is direct to see that with high probability that the reachability from $s$ to $T$ on $G_T$ is the same as that on $G$.

Now, every vertex in $T$ tries to learn whether it is reachable from $s$ in graph $G_T$. After this, the vertices in $T$ can global broadcast whether they are reachable from $s$ using \cref{lemma:broadcast} in $O(D + |T|)$ rounds. Then every vertex in $G$ can locally determine whether $s$ can reach it, as desired.

In order to let every vertex in $T$ learn whether it is reachable from $s$, we run a small variation on \cref{algo:parallelsc} and \cref{algo:paralleldiam} on $G_T$. Intuitively, during every unit of time in the parallel computation every vertex globally broadcasts much of its computational transcript, such as its current labels, which new shortcut edges are being added, whether it has just been visited in a breadth first search, etc. Through the guarantees of \cref{thm:parallel} with $k = \log |T|$ we can see that this process essentially takes $\O(D|T|^{1/2 + o(1)} + |T|) = \O(D\alpha^{1/2 + o(1)} + \alpha)$ rounds. The first term comes from a fixed cost of $O(D)$ to globally broadcast during each stage of a $|T|^{1/2 + o(1)}$ depth algorithm. The second term is the total number of messages sent by all vertices during the process, which corresponds to the nearly linear runtime of the algorithm. Combining this with the first paragraph gives an algorithm taking \[ \O\left(\frac{n}{\alpha} + \alpha + D\alpha^{1/2 + o(1)}\right) \] rounds. Properly trading off parameters then gives \cref{thm:congest}.

We remark that using the reachability algorithm of \cite{Fine18} (instead of our improved algorithm) would also imply a new result; precisely the number of rounds would be $\O(\sqrt{n} + n^{2/5} D^{3/5})$, which is nearly optimal for $D \le n^{1/6}$.

We now give algorithm \DistrReach~(\cref{algo:congest}). It is fairly clear that in this way, \cref{algo:congest} can simulate \ParallelDiam~in the \congest~model.

\begin{algorithm}[h!]
\caption{\DistrReach$(G, s)$, distributed algorithm for solving the single source reachability problem from $s$ in $G$ in the \congest~model.}
\begin{enumerate}
\item Choose $T \subseteq V(G)$ as follows: each vertex $v \in V(G)$ has a $\frac{10\alpha \log n}{n}$ probability of being in $T$. Set $h = \frac{10n \log n}{\alpha}.$ Apply \cref{lemma:bfs} so that all vertices $v \in V(G)$ learn what vertices in $T$ are within distance $h$ to / from itself. Build the skeleton graph $G_T$ with vertex set $T$ as follows: for $u, v \in T$ there is an edge $u \to v$ if and only if $d_G(u, v) \le h$. \label{line:distline1}
\item Run a variant of \ParallelDiam$(G, k)$~on $G_T$. After every unit of parallel depth, every vertex globally broadcasts all new computation of the following forms, as well as an ID for the current recursive subproblem it is broadcasting for. Note that a vertex can broadcast / do computation for multiple subproblems simultaneously. \label{line:distline2}
\begin{itemize}
\item Vertex $v$ is in the shortcutter set $S$.
\item Vertex $v$ is visited by a breadth first search from a vertex $u \in S$.
\item Vertex $v$ gets a new label / becomes a fringe vertex.
\item A new shortcut edge is added to the graph.
\end{itemize}
In addition, each subproblem must choose a ``leader" to pick the $\kappa$ parameter for that subproblem.

The skeleton graph $G_T$ is updated after each run of $\ParallelSC(G, k, 0, 0)$ to include all the new shortcut edges.
\item Simulate a BFS on $G_T$ (with all the new shortcut edges) of depth $100\sqrt{2}^{\log_k n} n^\frac12 \log^2 n$ so that all $v \in T$ learn whether they are reachable from $s$. All vertices $v \in T$ globally broadcast whether they are reachable from $s.$ Each vertex $v \in V(G)$ now locally computes whether it is reachable from $s$.
\end{enumerate}
\label{algo:congest}
\end{algorithm}

\subsection{Analysis}
\label{sec:congestanalysis}
The bulk of the analysis follows in the same way as the analysis in \cref{sec:parallel}.
\begin{theorem}
\label{thm:congestmain}
Algorithm \DistrReach~when executed on an $n$-vertex $m$-edge graph $G$ with high probability solves the single source reachability problem in $\O\left(\alpha + (n/\alpha) + D\alpha^{1/2 + o(1)} \right).$
\end{theorem}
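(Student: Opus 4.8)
The plan is to verify, for \DistrReach{} (\cref{algo:congest}) run with an arbitrary parameter $\alpha$, both correctness and a round bound of $\O(\alpha + n/\alpha + D\alpha^{1/2+o(1)})$, obtained by adding the costs of its three steps. For correctness, Step~1 is the Ullman--Yannakakis skeleton used in \cite{UY91,Nan14,GU15}: each vertex joins $T$ independently with probability $\frac{10\alpha\log n}{n}$, so a Chernoff bound gives $|T| = \O(\alpha)$ w.h.p., and for a fixed pair with $s \pe t$, chopping a shortest $s\to t$ path into blocks of $\frac{5n\log n}{\alpha}$ consecutive vertices shows that w.h.p. every block contains a vertex of $T$, so $s$ reaches some $u \in T$ with $u \pe^h t$ through $G_T$ whenever $s \pe t$ (union bound over the $O(n^2)$ pairs). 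Since every edge of $G_T$ is a genuine $\le h$-hop path of $G$, the reachability decided on $G_T$ is always sound, so at the end each vertex $v$ --- using the globally broadcast set of $s$-reachable $T$-vertices together with its own list of $T$-vertices within distance $h$ (obtained in Step~1 via \cref{lemma:bfs}) --- correctly decides whether $s \pe v$. By \cref{lemma:bfs} with $h = \frac{10n\log n}{\alpha}$, Step~1 runs in $\O(|T| + h) = \O(\alpha + n/\alpha)$ rounds. For Steps~2 and~3, \cref{par:masterlemma} applied to $G_T$ (so that the ``$n$'' of \cref{algo:parallelsc} is $|T| = \O(\alpha)$ and $k = \log|T|$) guarantees that after \ParallelDiam{} every reachable pair in $G_T$ is joined by a path of length at most $100\sqrt{2}^{\log_k|T|}|T|^{1/2}\log^2|T| = |T|^{1/2+o(1)} = \alpha^{1/2+o(1)}$ (the bound denoted $D$ inside \cref{algo:parallelsc}, distinct from the hop-diameter $D$ of the input); hence the BFS of Step~3 to that depth correctly computes $s$-reachability on $G_T$, and the concluding broadcast propagates it to all of $G$.

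It remains to bound the rounds used by Steps~2 and~3. The key point is that the parallel execution of \ParallelDiam{}/\ParallelSC{} on $G_T$ can be faithfully simulated in \congest{}: each vertex locally flips the coins deciding membership in $S$ (line~\ref{line:makes2}); each recursive subproblem elects a leader by smallest ID to draw its $\kappa$ (line~\ref{line:pickk}); and whenever the parallel algorithm modifies a vertex's state --- a new $v^{\Anc}$ or $v^{\Des}$ label (lines~\ref{line:addanc}--\ref{line:adddes}), an $\xmark$, a fringe designation (line~\ref{line:fringerecurse}), a newly incident shortcut edge (lines~\ref{line:addsc1}--\ref{line:addsc2}), or being reached one hop further in a distance-limited search (lines~\ref{line:bfsstart}--\ref{line:bfsend}) --- the owning vertex global-broadcasts that $O(\log n)$-bit fact tagged with the current subproblem's ID. Since the partition into the $V_i$ (line~\ref{line:vi}) and the entire recursion tree are determined by these broadcast labels, every vertex can locally reconstruct which subproblems it belongs to. Organize the simulation into super-rounds, one per unit of parallel depth; a super-round generating $m_\tau$ broadcast messages takes $O(D + m_\tau)$ rounds by \cref{lemma:broadcast} (or one may schedule the concurrent broadcasts via \cite{Gha15}). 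By \cref{par:finalparabound} (with $n \to |T|$, $k = \log|T|$), the number of super-rounds is at most the parallel depth $\O(\poly(k)\sqrt{2}^{\log_k|T|}|T|^{1/2}) = |T|^{1/2+o(1)} = \alpha^{1/2+o(1)}$; and the total number of $O(\log n)$-bit facts ever broadcast --- one per vertex-visit in a search, one per label, one per shortcut edge --- is $\O(\alpha)$ in expectation by \cref{par:paralabelbound}, \cref{cor:totsize}, and the shortcut-count bound of \cref{par:finalparabound}; crucially this does \emph{not} scale with $|E(G_T)|$, because a search step is broadcast once per newly-reached vertex rather than once per (virtual) edge of $G_T$. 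Summing $O(D + m_\tau)$ over the super-rounds then bounds Step~2 by $\O(D\alpha^{1/2+o(1)} + \alpha)$; Step~3, a BFS of depth $\alpha^{1/2+o(1)}$ in which at most $\O(|T|)$ vertices are ever first-reached, obeys the identical accounting and is $\O(D\alpha^{1/2+o(1)} + \alpha)$. Adding the three contributions yields $\O(\alpha + n/\alpha + D\alpha^{1/2+o(1)})$, and optimizing over $\alpha$ afterwards gives \cref{thm:congest}.

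I expect the main obstacle to lie in making the simulation argument airtight, not in the arithmetic. One must confirm that \emph{every} primitive of \ParallelSC{}/\ParallelDiam{} --- the label-based partition in line~\ref{line:vi}, the separate recursion on fringe vertices in line~\ref{line:fringerecurse} with its own depth parameter $r^\fringe$, the per-subproblem choice of $\kappa$, and the $\poly\log n$ outer repetitions with their abort condition --- is expressible as purely local computation over the globally broadcast transcripts, so that the total message count is genuinely $O(\text{work})$ (in particular independent of $|E(G_T)|$) and the per-depth-unit latency is genuinely $O(D)$. A related subtlety is bookkeeping the failure probabilities: \cref{par:finalparabound} controls work and shortcut count only \emph{in expectation}, whereas the depth bound and \ParallelDiam{}'s abort are high-probability events, so the argument must condition on the relevant events (here $|T| = \O(\alpha)$, fringe depth $\le \log|T|$ by \cref{cor:fringedepth}, and no abort), apply Markov's inequality where needed, and absorb the resulting $O(\log n)$ overhead into $\O(\cdot)$; after that the round-count accounting above goes through unchanged.
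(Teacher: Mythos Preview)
Your proposal is correct and follows essentially the same approach as the paper: bound Step~1 by \cref{lemma:bfs}, simulate \ParallelDiam{} on $G_T$ by global-broadcasting the per-depth-unit transcript so that the total round count is $O(D \cdot \text{depth} + \text{messages}) = \O(D\alpha^{1/2+o(1)} + \alpha)$, and handle Step~3 by the same accounting. Your write-up is in fact considerably more careful than the paper's own proof about the simulation details (leader election for $\kappa$, that message count scales with vertex-visits rather than $|E(G_T)|$, and the expectation-versus-high-probability bookkeeping), all of which the paper leaves implicit.
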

\begin{proof}
By \cref{lemma:bfs}, we have that Line~\ref{line:distline1}  in Algorithm \DistrReach~can be implemented with $O(|T| + h) = \O\left(\alpha + n/\alpha\right)$ rounds.

Our procedures simulates running $\ParallelDiam(G_T,k)$ over the graph. After every unit of parallel depth in our algorithm, we globally broadcast the complete reachability information of all performed searches over the graph. This consists of $\otilde(\alpha)$ messages in total. Further, we observe that during each graph traversal we perform during $\ParallelDiam$~nodes will locally check if one of their neighbors have been marked by a search and broadcast themselves appropriately if they have not yet been marked by the same search. Thus every node communicates one message for every search it is visited in: this is just the total number of shortcuts, and  the total amount of communication needed to perform all the searches of $\ParallelDiam$~is $\O(\alpha)$. Now $\ParallelDiam$~itself takes $\O(\alpha^{1/2 + o(1)})$ parallel depth in total, thus we can simulate it over $G_T$ in $\O(D\alpha^{1/2 + o(1)}+\alpha)$ rounds. Thus by \cref{lemma:broadcast}, it is direct to check (by the results of \cref{sec:parallel}) that step 2 can be implemented with $\O(D\alpha^{1/2 + o(1)} + \alpha)$ rounds. 

As $G_T$ now has diameter $\alpha^{1/2 + o(1)}$ with high probability by the proof of correctness of algorithm \ParallelDiam, simulating a BFS and globally broadcasting the results takes $\O(D\alpha^{1/2+o(1)} + \alpha)$ rounds. Summing these contributions gives the desired bound.
\end{proof}
Now, we prove \cref{thm:congest}.
\begin{proof}[Proof of \cref{thm:congest}]
The proof simply involves substituting various values of $\alpha$ into \cref{thm:congestmain}. We handle the cases $D \le n^{1/4}$ and $D \ge n^{1/4}$ separately.
\paragraph{Case 1: $D \le n^{1/4}$.} We set $\alpha = \sqrt{n}.$ This gives us an algorithm that runs in $\O(\sqrt{n} + Dn^{1/4 + o(1)})$ rounds. It is easy to check that $Dn^{1/4 + o(1)} \le D^{2/3} n^{1/3 + o(1)}$ for $D \le n^{1/4}$, which completes the argument for this case.
\paragraph{Case 2: $D \ge n^{1/4}$.} In this case, we set $\alpha = \frac{n^{2/3}}{D^{2/3}}.$ It is easy to check that $D \ge n^{1/4}$ gives that $\alpha \le \sqrt{n}.$ Therefore, the resulting algorithm runs in $\O\left(n/\alpha  + D\alpha^{1/2 + o(1)} \right) = \O(D^{2/3}  n^{1/3+o(1)})$ rounds as desired.
\end{proof}

{\small
\bibliographystyle{alpha}
\bibliography{refs}}

\newcommand{\etalchar}[1]{$^{#1}$}
\begin{thebibliography}{DSHK{\etalchar{+}}11}

\bibitem[AW14]{AbboudW14}
Amir Abboud and Virginia~Vassilevska Williams.
\newblock Popular conjectures imply strong lower bounds for dynamic problems.
\newblock In {\em 55th {IEEE} Annual Symposium on Foundations of Computer
  Science, {FOCS} 2014, Philadelphia, PA, USA, October 18-21, 2014}, pages
  434--443, 2014.

\bibitem[BBRS98]{BarnesBRS98}
Greg Barnes, Jonathan~F. Buss, Walter~L. Ruzzo, and Baruch Schieber.
\newblock A sublinear space, polynomial time algorithm for directed
  \emph{s}-\emph{t} connectivity.
\newblock {\em {SIAM} J. Comput.}, 27(5):1273--1282, 1998.

\bibitem[BKKL17]{BeckerKKL17}
Ruben Becker, Andreas Karrenbauer, Sebastian Krinninger, and Christoph Lenzen.
\newblock Near-optimal approximate shortest paths and transshipment in
  distributed and streaming models.
\newblock In {\em 31st International Symposium on Distributed Computing, {DISC}
  2017, October 16-20, 2017, Vienna, Austria}, pages 7:1--7:16, 2017.

\bibitem[Col93]{Cole93}
Richard Cole.
\newblock Correction: ``{P}arallel merge sort'' [{SIAM} {J}. {C}omput. {\bf 17}
  (1988), no. 4, 770--785; {MR}0953293 (89m:68015)].
\newblock {\em SIAM J. Comput.}, 22(6):1349, 1993.

\bibitem[DSHK{\etalchar{+}}11]{DHK11}
Atish Das~Sarma, Stephan Holzer, Liah Kor, Amos Korman, Danupon Nanongkai,
  Gopal Pandurangan, David Peleg, and Roger Wattenhofer.
\newblock Distributed verification and hardness of distributed approximation.
\newblock In {\em S{TOC}'11---{P}roceedings of the 43rd {ACM} {S}ymposium on
  {T}heory of {C}omputing}, pages 363--372. ACM, New York, 2011.

\bibitem[Fin18]{Fine18}
Jeremy~T. Fineman.
\newblock Nearly work-efficient parallel algorithm for digraph reachability.
\newblock In {\em S{TOC}'18---{P}roceedings of the 50th {A}nnual {ACM} {SIGACT}
  {S}ymposium on {T}heory of {C}omputing}, pages 457--470. ACM, New York, 2018.

\bibitem[FN18]{ForsterN18}
Sebastian Forster and Danupon Nanongkai.
\newblock A faster distributed single-source shortest paths algorithm.
\newblock In {\em 59th {IEEE} Annual Symposium on Foundations of Computer
  Science, {FOCS} 2018, Paris, France, October 7-9, 2018}, pages 686--697,
  2018.

\bibitem[Gha15]{Gha15}
Mohsen Ghaffari.
\newblock Near-optimal scheduling of distributed algorithms.
\newblock In {\em Proceedings of the 2015 {ACM} Symposium on Principles of
  Distributed Computing, {PODC} 2015, Donostia-San Sebasti{\'{a}}n, Spain, July
  21 - 23, 2015}, pages 3--12, 2015.

\bibitem[GL18]{GhaffariL18}
Mohsen Ghaffari and Jason Li.
\newblock Improved distributed algorithms for exact shortest paths.
\newblock In {\em Proceedings of the 50th Annual {ACM} {SIGACT} Symposium on
  Theory of Computing, {STOC} 2018, Los Angeles, CA, USA, June 25-29, 2018},
  pages 431--444, 2018.

\bibitem[GU15]{GU15}
Mohsen Ghaffari and Rajan Udwani.
\newblock Brief announcement: Distributed single-source reachability.
\newblock In {\em Proceedings of the 2015 {ACM} Symposium on Principles of
  Distributed Computing, {PODC} 2015, Donostia-San Sebasti{\'{a}}n, Spain, July
  21 - 23, 2015}, pages 163--165, 2015.

\bibitem[Hes03]{Hesse2003}
William Hesse.
\newblock Directed graphs requiring large numbers of shortcuts.
\newblock In {\em Proceedings of the Fourteenth Annual ACM-SIAM Symposium on
  Discrete Algorithms}, SODA '03, pages 665--669, Philadelphia, PA, USA, 2003.
  Society for Industrial and Applied Mathematics.

\bibitem[HP18]{HuangP18}
Shang{-}En Huang and Seth Pettie.
\newblock Lower bounds on sparse spanners, emulators, and diameter-reducing
  shortcuts.
\newblock In {\em 16th Scandinavian Symposium and Workshops on Algorithm
  Theory, {SWAT} 2018, June 18-20, 2018, Malm{\"{o}}, Sweden}, pages
  26:1--26:12, 2018.

\bibitem[Nan14]{Nan14}
Danupon Nanongkai.
\newblock Distributed approximation algorithms for weighted shortest paths.
\newblock In {\em Symposium on Theory of Computing, {STOC} 2014, New York, NY,
  USA, May 31 - June 03, 2014}, pages 565--573, 2014.

\bibitem[Pel00]{Pel00}
David Peleg.
\newblock {\em Distributed Computing: A Locality-sensitive Approach}.
\newblock Society for Industrial and Applied Mathematics, Philadelphia, PA,
  USA, 2000.

\bibitem[Spe97]{Spencer1997}
Thomas~H. Spencer.
\newblock Time-work tradeoffs for parallel algorithms.
\newblock {\em J. ACM}, 44(5):742--778, September 1997.

\bibitem[UY91]{UY91}
Jeffrey~D. Ullman and Mihalis Yannakakis.
\newblock High-probability parallel transitive-closure algorithms.
\newblock {\em SIAM J. Comput.}, 20(1):100--125, 1991.

\bibitem[Wig92]{Wigderson92}
Avi Wigderson.
\newblock The complexity of graph connectivity.
\newblock In {\em Mathematical Foundations of Computer Science 1992, 17th
  International Symposium, MFCS'92, Prague, Czechoslovakia, August 24-28, 1992,
  Proceedings}, pages 112--132, 1992.

\bibitem[Wil12]{Williams12}
Virginia~Vassilevska Williams.
\newblock Multiplying matrices faster than coppersmith-winograd.
\newblock In {\em Proceedings of the 44th Symposium on Theory of Computing
  Conference, {STOC} 2012, New York, NY, USA, May 19 - 22, 2012}, pages
  887--898, 2012.

\end{thebibliography}

\begin{appendix}


\end{appendix}

\end{document}